\renewcommand{\Re}{\mathbb{R}}
\newcommand{\R}{\mathbb{R}}
\newcommand{\Z}{\mathbb{Z}}
\newcommand{\dist}{D}
\newcommand{\eps}{\varepsilon}
\DeclareMathOperator{\poly}{poly}
\newcommand{\ceil}[1]{\lceil #1 \rceil}
\newcommand{\sI}{\mathcal{I}}
\DeclareMathOperator*{\argmin}{arg\,min}
\DeclareMathOperator{\polylog}{polylog}
\DeclareMathOperator{\Var}{Var}
\DeclareMathOperator{\E}{\mathbb{E}}
\theoremstyle{plain}
\newtheorem{theorem}{Theorem}[section]
\newtheorem{lemma}[theorem]{Lemma}
\newtheorem{corollary}[theorem]{Corollary}
\newtheorem{observation}[theorem]{Observation}
\renewcommand\and{
  \end{tabular}%
  \hfill
  \begin{tabular}[t]{>{\centering\arraybackslash}p{.45\textwidth}}}
\author{
  Alexandr Andoni\authorcr
  \small{Columbia University}\authorcr
  \small{\texttt{andoni@cs.columbia.edu}}
  \and
  Collin Burns\authorcr
  \small{Columbia University}\authorcr
  \small{\texttt{collin.burns@columbia.edu}}
  \and
  Yi Li\authorcr
  \small{Nanyang Technological University}\authorcr
  \small{\texttt{yili@ntu.edu.sg}}
  \and
  Sepideh Mahabadi\authorcr
  \small{Toyota Technological Institute at Chicago}\authorcr
  \small{\texttt{mahabadi@ttic.edu}}
  \and
  David P. Woodruff\authorcr
  \small{Carnegie Mellon University}\authorcr
  \small{\texttt{dwoodruf@cs.cmu.edu}}
}
\title{Streaming Complexity of SVMs}
\begin{document}

\date{}
\maketitle
\thispagestyle{empty}

\begin{abstract}
We study the space complexity of solving the bias-regularized SVM problem
in the streaming model. In particular, given a data set
$(x_i,y_i)\in \R^d\times \{-1,+1\}$, the objective function is
$F_\lambda(\theta,b) = \tfrac{\lambda}{2}\|(\theta,b)\|_2^2 + \tfrac{1}{n}\sum_{i=1}^n
\max\{0,1-y_i(\theta^Tx_i+b)\}$
and the goal is to find the parameters that (approximately) minimize this objective. This is a classic supervised learning problem that has drawn lots
of attention, including for developing fast algorithms for solving the
problem approximately: i.e., for finding $(\theta,b)$ such that $F_\lambda(\theta,b)\le
\min_{(\theta',b')} F_\lambda(\theta',b')+\eps$.

One of the most widely used algorithms for approximately optimizing the SVM
objective is Stochastic Gradient Descent (SGD), which requires only
$O(\tfrac{1}{\lambda\eps})$ random samples, and which immediately yields a streaming algorithm that uses $O(\tfrac{d}{\lambda\eps})$ space.
For related problems, better streaming algorithms are only known for {\em smooth}
functions, unlike the SVM objective that we focus on in this work.

We initiate an investigation of the space complexity for
both finding an approximate optimum of this objective, and for the related ``point
estimation'' problem of sketching the data set to evaluate the function value $F_\lambda$
on any query $(\theta, b)$. We show that, for both problems, for
dimensions $d=1,2$, one can obtain streaming algorithms with space polynomially
smaller than $\tfrac{1}{\lambda\eps}$, which is the 
complexity of SGD for strongly convex functions like the bias-regularized SVM \cite{shalev-shwartz2007pegasos}, and which is known to be tight in general, even for $d=1$ \cite{agarwal2009samplecomplexity}. We also prove polynomial lower bounds for both point
estimation and optimization. In particular, for point estimation we obtain a
tight bound of $\Theta(1/\sqrt{\eps})$ for $d=1$ and a nearly tight lower bound of $\widetilde{\Omega}(d/{\eps}^2)$ for $d = \Omega( \log(1/\epsilon))$. Finally, for optimization, we prove a $\Omega(1/\sqrt{\epsilon})$ lower bound for $d = \Omega( \log(1/\epsilon))$, and show similar bounds when $d$ is constant.

\end{abstract}
\newpage 

\section{Introduction}

The Support Vector Machine (SVM) optimization problem is a classic supervised learning problem with a rich and extensive literature. In this work, we consider the SVM problem in the {\em space-constrained}
setting.  Specifically, we focus on the bias-regularized SVM\footnote{In the standard SVM formulation, the bias is not regularized, but the bias-regularized version is common both in theoretical work and in practice. See, for example, \cite{shalev-shwartz2007pegasos}.}. For $n$ labelled data points $(x_i,y_i)\in
\R^d\times \{-1, +1\}$, with $\|x_i\| \leq 1$, and $(\theta, b)\in
\R^{d} \times \R $ the unknown model parameters, the SVM objective function is
defined as:
\begin{equation}\label{eqn:F_lambda}
  F_\lambda(\theta, b) := 
\frac{\lambda}{2}\|(\theta, b)\|_2^2 + \frac{1}{n}\sum_{i=1}^n \max\{0,1-y_i(\theta^T x_i + b)\},
\end{equation}
where $\lambda$ is the regularization parameter. The SVM optimization problem is then to minimize the objective: 
\begin{equation}
    (\theta^*,b^*):=\argmin_{\theta, b} F_\lambda(\theta, b).
\end{equation}

This problem is of both theoretical and practical interest, and has
received lots of attention in the machine learning community. One of
the main lines of work on SVMs has focused on trying to find
approximately optimal solutions quickly (see, e.g.,
\cite{shalev-shwartz2007pegasos}, \cite{backurs2017fine}, \cite{shalev2014understanding}, and the references therein). Most notably, using a
variant of stochastic gradient descent (SGD), one can compute a
solution $(\hat \theta,\hat b)$ which is at most $\eps$ away
from the optimal $F_\lambda(\theta^*,b^*)$ in
$O(\tfrac{1}{\lambda\eps})$ SGD steps, each using a single randomly
sampled data point $(x_i,y_i)$ \cite{shalev-shwartz2007pegasos}. 

However, in many applications of SVMs, the number of data points is sufficiently large
that even storing all of the data may be prohibitively expensive.
In this case, it may be desirable to store a smaller summary of the data that is alone sufficient for optimizing the SVM objective within a desired error tolerance.

This goal has been studied for related smooth objective functions,
but not for non-smooth objectives like SVMs. For example,
\cite{pass-glm} focuses on this problem for the more general setting of
Generalized Linear Models (GLMs), where $F$ is defined as
$F=\tfrac{1}{n}\sum_{i=1}^n \phi(y_i, \theta^T x_i)$ for an arbitrary function
$\phi$. This includes the SVM objective. The authors of
\cite{pass-glm} show that if $\phi$ is well-approximable by a
low-degree polynomial, then one can stream through the data points
while keeping a small sketch of the data that is sufficient for minimizing
the objective. However, the space complexity depends
exponentially on the degree of the approximating polynomial, so it is
only feasible for relatively smooth functions $\phi$.

In this work we study SVMs, the most common non-smooth GLM, and focus on space complexity in the streaming setting. 
In addition to optimization, we also focus on the
problem of \textit{point estimation}: sketching the data points so that, given $(\theta, b)$, we can output a value within $\eps$ of $F_\lambda(\theta,b)$. While we use this as an intermediate step for achieving improved optimization upper bounds in the low-dimensional setting, this problem is also of
independent interest. It occurs, for example, in estimating the GLM
posterior distribution: the distribution of the parameters $(\theta, b)$ given the observed data and some prior distribution over $(\theta, b)$ \cite{pass-glm}. 

\subsection{Our Results}

Our results are for the two considered problems, and include both upper and
lower bounds. We present the results for the point
estimation problem independently. Then we present our results for
optimization, the upper bounds for which rely on our point estimation results.

\subparagraph*{Point estimation.}
First, we show that one can obtain a {\em multiplicative}
$(1+\eps)$-factor approximation for $d=1$ that uses $\tilde O(1 / \eps^2)$ space.
However, we then show that it is not possible to get a multiplicative approximation algorithm for dimension $d>1$ that uses space sub-linear in $n$.
Given this, we otherwise focus on the space complexity of {\em additive} $\pm \eps$ approximation streaming algorithms. For
$d=1$, we obtain space $O(1/\sqrt{\eps})$, and for $d=2$, we obtain
space $O(\eps^{-4/5})$.
We complement our algorithms with a lower
bound of $\Omega(\eps^{-(d+1)/(d+3)})$ for dimension $d$, for
any sketching algorithm, which goes up to $\eps^{-1}$ as we increase $d$.
Note that, for $d=1$, our bounds are tight. For $d=2$, our lower bound
translates to $\Omega(\eps^{-3/5})$.

We also prove a lower bound of $\Omega(d/(\eps^2\polylog(1/\eps)))$ for
$d = \Omega(\log(1/\eps))$, which is tight up to $\polylog(1/\eps)$ factors. 
Together with the $O(1/\lambda\eps)$ upper bound achieved from SGD, this shows that there is a \emph{strict gap} between point estimation and optimization. This is also the case for linear regression in the streaming model: while getting a multiplicative $1\pm \eps$ point estimation approximation for linear regression requires space $\tilde \Theta(\eps^{-2})$ \cite{jayram2013jll}, the (multiplicative) optimization problem requires only $\tilde \Theta(\eps^{-1})$ space \cite{clarkson2009la_streaming}.





\subparagraph*{Optimization.}
First, using standard net arguments, we show how to use our point estimation results to approximately minimize the SVM objective. 

\begin{theorem}
    Suppose there is a streaming algorithm that, after seeing data $\{(x_i,y_i)\}_{i=1}^n$, where $\|x_i\| \leq 1$, can produce a sketch of size $s$ that, given any $(\theta,b)$ such that $\|(\theta, b)\| \leq \sqrt{2/\lambda}$, is able to output $\hat F(\theta, b)$ such that $|\hat F(\theta,b)- F(\theta, b)|\leq \eps$ with probability at least $0.9$. Then there is also a streaming algorithm that, under the same input, will output $(\hat \theta, \hat b)$ with $|F_\lambda(\hat\theta,\hat b)-F_\lambda(\theta^*_\lambda,b^*_\lambda)|\le \eps$ with probability at least $0.9$, while using space $O(s \cdot d\log d/(\lambda \eps))$.
\end{theorem}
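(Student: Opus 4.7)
The plan is a covering-plus-boosting reduction: (i) confine the search to a bounded region on which the sketch's guarantee holds, (ii) discretize it via a finite net at a scale dictated by the Lipschitz constant of $F_\lambda$, and (iii) amplify the sketch's success probability so that every net point receives an accurate estimate simultaneously.

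First, I would show that the minimizer $(\theta^*_\lambda, b^*_\lambda)$ lies in the ball $B := \{(\theta, b) \in \R^{d+1} : \|(\theta, b)\| \le R\}$ with $R := \sqrt{2/\lambda}$. Indeed, $\tfrac{\lambda}{2} \|(\theta^*_\lambda, b^*_\lambda)\|^2 \le F_\lambda(\theta^*_\lambda, b^*_\lambda) \le F_\lambda(0, 0) = 1$, since the hinge terms are nonnegative and each equals exactly $1$ at the origin. Crucially, $B$ is exactly the query range on which the given sketch is promised to work, so restricting the search to $B$ costs nothing.

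Next, I would bound the Lipschitz constant of $F_\lambda$ on $B$: each hinge term $\max\{0, 1 - y_i(\theta^T x_i + b)\}$ has subgradient of norm at most $\sqrt{\|x_i\|^2 + 1} \le \sqrt{2}$, so the empirical mean is $\sqrt{2}$-Lipschitz, while the regularizer $\tfrac{\lambda}{2}\|(\theta, b)\|^2$ has gradient of norm at most $\lambda R = \sqrt{2\lambda}$ on $B$. Hence $F_\lambda$ is $L$-Lipschitz on $B$ with $L = O(1 + \sqrt{\lambda})$.

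I would then build an $\eta$-net $\mathcal{N} \subset B$ at spacing $\eta := \eps/(3L)$, of cardinality $|\mathcal{N}| \le (3R/\eta)^{d+1}$ by a standard volume argument, and in parallel run $K = \Theta(\log |\mathcal{N}|)$ independent copies of the given sketch on the same stream. Taking the median of the $K$ estimates at a query boosts its failure probability to $1/(10|\mathcal{N}|)$ via a standard median-of-means analysis, and a union bound over $\mathcal{N}$ ensures that, with probability at least $0.9$, every net point is estimated to within $\eps/3$ simultaneously. The total space is $O(sK)$, which after substituting the values of $R$, $L$, and $\eta$ works out to the claimed $O(s \cdot d \log d/(\lambda\eps))$.

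Finally, I would output $(\hat\theta, \hat b) := \argmin_{(\theta, b) \in \mathcal{N}} \hat F(\theta, b)$ and verify correctness. Picking the net point $\tilde x \in \mathcal{N}$ within distance $\eta$ of $(\theta^*_\lambda, b^*_\lambda)$, the Lipschitz bound gives $F_\lambda(\tilde x) - F_\lambda(\theta^*_\lambda, b^*_\lambda) \le L\eta = \eps/3$; combining this with the $\eps/3$-accurate estimates at $\tilde x$ and $\hat x$ yields $F_\lambda(\hat\theta, \hat b) - F_\lambda(\theta^*_\lambda, b^*_\lambda) \le \eps$. The main obstacle will be parameter bookkeeping to nail down the exact $O(s \cdot d\log d/(\lambda\eps))$ factor; the rest of the argument is a routine net-and-boost reduction that treats the sketch as a black box.
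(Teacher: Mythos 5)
Your proposal is correct and takes essentially the same route as the paper's proof: both confine the search to the ball of radius $\sqrt{2/\lambda}$ (on which the sketch's guarantee holds), discretize it (the paper uses the grid $\delta\mathbb{Z}^{d+1}$ with $\ell_\infty$ spacing $\eps/(2\sqrt{d})$ and the $\sqrt{2}$-Lipschitzness of the hinge terms, you use an $\eta$-net scaled by the Lipschitz constant of $F_\lambda$), boost the sketch by running $O(\log|T|)$ parallel copies and taking medians, union-bound over the net, and output the best net point, giving the same $O(s\cdot d\log\tfrac{d}{\lambda\eps})$ space. The only detail to state explicitly is that the final selection should minimize $\hat F(\theta,b)+\tfrac{\lambda}{2}\|(\theta,b)\|^2$, with the regularizer computed exactly, since the sketch only estimates the hinge-loss part $F$ — this is what the paper does when it outputs the net point with the lowest value of $F_\lambda$.
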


Together with our point estimation results, we immediately obtain
streaming algorithms that, for dimensions one and two, obtain space
polynomially smaller than $\tfrac{1}{\lambda\eps}$, which is the 
complexity of SGD for strongly convex functions like the bias-regularized SVM \cite{shalev-shwartz2007pegasos}, and which is tight in general \cite{agarwal2009samplecomplexity}. 

We also prove space lower bounds for optimization. First, we consider the high-dimensional case when $\lambda = 10^{-4}$ is a constant:

\begin{theorem}
Let $\lambda = 10^{-4}$ and $d = O(\log n)$ for $n = \Theta(\eps^{-1/2})$. Suppose there exists a sketch such that, given a stream of inputs $\{(x_i, y_i)\}_{i=1}^n$, outputs some $(\hat{\theta}, \hat{b})$ with probability at least $0.9$ such that 
    $F_\lambda(\hat{\theta}, \hat{b}) \leq F_\lambda(\theta^*, b^*) + \eps$.
Then such a sketch requires $\Omega(\eps^{-1/2})$ space.
\end{theorem}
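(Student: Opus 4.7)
The plan is to prove the bound by reduction from the one-way communication complexity of \textsc{Index}, or its augmented variant, on $N = \Theta(\eps^{-1/2})$ bits; both are known to require $\Omega(N)$ bits of one-way communication. A streaming sketch of size $s$ yields a protocol of cost $O(s)$, so it suffices to show that an $\eps$-approximate SVM optimizer lets Bob recover a designated bit of Alice's string.

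\textbf{Encoding.} Using the $d = O(\log n) = O(\log N)$ available dimensions, I would associate each position $i \in [N]$ with a unit vector $v_i \in \R^d$ (for instance, a row of a random $\pm 1/\sqrt{d}$ sign matrix) so that $\|v_i\| = 1$ and $|\langle v_i, v_k \rangle| = O(1/\sqrt{d})$. Alice, on input $s \in \{0,1\}^N$, streams the $N$ labeled examples $(v_i, 2s_i - 1)$, possibly with carefully chosen multiplicities ensuring each indexed point is active at the SVM optimum. She sends the resulting sketch to Bob, who (in the augmented version) first appends a short cancellation gadget for the known prior bits $s_1, \ldots, s_{j-1}$, then runs the optimizer to obtain some $(\hat\theta, \hat b)$ with $F_\lambda(\hat\theta, \hat b) \le F_\lambda(\theta^*, b^*) + \eps$, and outputs $\hat s_j := \mathbf{1}\{\langle \hat\theta, v_j \rangle > 0\}$.

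\textbf{Correctness via strong convexity.} Because $F_\lambda$ is $\lambda$-strongly convex with $\lambda = 10^{-4}$ a fixed constant, any $\eps$-approximate minimizer satisfies $\|(\hat\theta-\theta^*, \hat b-b^*)\|_2 \le \sqrt{2\eps/\lambda} = O(\sqrt{\eps})$. The SVM representer theorem writes $\theta^* = \tfrac{1}{\lambda n}\sum_i \alpha_i y_i v_i$ with $\alpha_i \in [0,1]$, so an active $j$-th point contributes magnitude $\Theta(1/(\lambda n)) = \Theta(\sqrt{\eps}/\lambda)$ along $v_j$. Flipping bit $s_j$ therefore shifts $\langle \theta^*, v_j \rangle$ by $\Omega(\sqrt{\eps}/\lambda)$, which is a factor $\Theta(1/\sqrt{\lambda}) = \Theta(100)$ larger than the strong-convexity slack, and Bob distinguishes the two bit values. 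This gives the $\Omega(N) = \Omega(\eps^{-1/2})$ space bound.

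\textbf{Main obstacle.} The delicate part is controlling the ``spillover'' onto the $v_j$ direction from the other $N - 1$ indexed points, each of which contributes magnitude $\Theta(\sqrt{\eps}/\lambda)$ along its own $v_i$. With only $d = O(\log N)$ dimensions these projections can sum to $\sqrt{N/d}\cdot \sqrt{\eps}/\lambda$ in absolute value, which can match or exceed the signal from $s_j$. Handling this is where augmented \textsc{Index} becomes essential: by appending an explicit cancellation gadget for $s_1, \ldots, s_{j-1}$ and by arranging geometric decay of $\|v_i\|$ along the stream (while keeping $\|v_i\| \le 1$), Bob can ensure that the $j$-th point dominates the unknown tail $\{v_{j+1}, \ldots, v_N\}$. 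Making this rigorous --- bounding $\alpha_j$ away from $0$ under every bit configuration of the tail, verifying the directional response of $\theta^*$ from the SVM KKT conditions, and controlling the non-smoothness of the hinge loss inside the cancellation analysis --- is the main technical work needed to complete the argument.
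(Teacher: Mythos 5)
Your reduction frame (\textsc{Index} on $\Theta(\eps^{-1/2})$ bits, strong convexity giving $\|(\hat\theta,\hat b)-(\theta^*,b^*)\|\le\sqrt{2\eps/\lambda}$, and the KKT/representer form of $\theta^*$) matches the paper's starting point, but the decoding step has a genuine gap, and it is exactly the one you flag yourself: with $d=O(\log(1/\eps))$ and all $N=\Theta(\eps^{-1/2})$ indexed points contributing $\Theta(1/(\lambda n))=\Theta(\sqrt{\eps}/\lambda)$ to $\theta^*$ along their own directions, the cross-talk onto $v_j$ is of order $\sqrt{N/d}\cdot\sqrt{\eps}/\lambda$, i.e.\ a factor $\Theta(\eps^{-1/4}/\sqrt{\log(1/\eps)})$ larger than the per-bit signal, so $\sgn\langle\hat\theta,v_j\rangle$ reveals essentially nothing. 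The remedies you sketch do not resolve this. A cancellation gadget only removes the \emph{known prefix} $s_1,\dots,s_{j-1}$; the unknown suffix $s_{j+1},\dots,s_N$ produces the same $\sqrt{N/d}$-type accumulation and cannot be cancelled. Geometric decay of $\|v_i\|$ strong enough to make the $j$-th point dominate the suffix would push the signal of most indices below the \emph{fixed} strong-convexity slack $\sqrt{2\eps/\lambda}=\Theta(\sqrt{\eps})$ (the slack does not decay with $j$), so only a small number of the $N$ bits would remain decodable and the $\Omega(\eps^{-1/2})$ bound is lost. In addition, whether each indexed point is ``active'' (its dual variable bounded away from $0$) depends on the entire unknown bit pattern, and your proposal leaves this unverified.

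The paper closes this gap with a different decoding mechanism that removes all cross-talk by construction. Alice encodes bit $b_i$ in a \emph{location} (placing a point, always with label $-1$, at one of two nearly orthogonal positions $v_i$ or $v_{n/2+i}$, all pairwise inner products $<1-10\delta$). To query position $x_q$, Bob himself appends $n/4$ copies of $((1-\delta)x_q,-1)$ and $n/4$ copies of $((1+\delta)x_q,+1)$. Lemma~\ref{lem:optlb3} then shows that every point $v$ with $v^Tx_q<1-10\delta$ is \emph{not} a support vector, so $(\theta^*,b^*)$ is determined solely by Bob's added points plus the possible point at $x_q$, and hence takes exactly one of two explicitly computed values whose distance is at least $\delta/(5\lambda n)=4\sqrt{\eps/\lambda}>2\sqrt{2\eps/\lambda}$; strong convexity then lets Bob distinguish the two hypotheses. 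In short, the missing idea in your write-up is to make Bob's query overwhelm the geometry so that the optimum becomes a clean two-point test independent of all other bits, rather than trying to read each bit off a projection of $\theta^*$ that is contaminated by every other point.
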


We can adapt the lower bound to the low dimensional setting if we let $\lambda = \poly(1/n)$. Specifically, we show a somewhat weaker lower bound of $\Omega(\eps^{-1/4})$ for $d=2$ as long as $\lambda = \Theta(1/n^2)$. Moreover, we are able to show the same $\Omega(\eps^{-1/2})$ bound for any $d \geq 3$ as long as $\lambda = \Theta(1/n)$. Note that $\lambda = 1/n$ is a reasonable setting that is often used in practice. 


\subsection{Related Work}

\subparagraph*{Stochastic optimization methods.}
SGD for strongly convex functions, which includes the bias-regularized SVM, has a sample complexity $O(\frac{1}{\lambda \eps})$. Consequently, in the streaming setting we can simply maintain $O(\frac{1}{\lambda \eps})$ random elements from the stream, then at the end run SGD, yielding a space complexity of $O(\frac{d}{\lambda \eps})$. Moreover, this is tight for general Lipschitz, strongly convex functions; there is an $\Omega(\frac{1}{\eps})$ SGD sample complexity lower bound for this function class~\cite{agarwal2009samplecomplexity}. 

Some stochastic optimization methods like Stochastic Average Gradient (SAG) \cite{schmidt2017sga} achieve \textit{linear} convergence, meaning that after $T$ iterations $F_\lambda(\theta_T) - F_\lambda(\theta^*) \leq O(\rho^T)$ for some $\rho < 1$. However, in this case, $\rho \approx 1-1/n$, so we would need $T > n$ iterations, which is worse than $\frac{1}{\lambda \eps}$ when $n$ is sufficiently large. 

Similarly, some stochastic optimization methods like Katyusha \cite{allen-zhu2017katyusha} achieve a sample complexity that has a dependence on $\eps$ like $1/\sqrt{\eps}$. However, the sample complexity for such methods also has a sample complexity that scales linearly with $n$, again making them worse than $\frac{1}{\lambda \eps}$ for the regime we care about.

Finally, note that if the elements of the stream are drawn IID from some distribution, and the size of the stream, $n$, is sufficiently large, then we can
simply run online gradient descent (OGD) and use $O(d)$
space. However, we focus on the general setting where we cannot
make any distributional assumptions.

\subparagraph*{Core-set and streaming algorithms for SVMs.}
Tsang et al.\ focuses on trying to speed up SVM optimization via \textit{core-sets} \cite{tsang2005cvm}: a subset of training points that are sufficient for approximately optimizing the objective corresponding to the full training set. They approximately solve a Minimum Enclosing Ball (MEB) problem that they show is equivalent to the SVM. In the language of our paper, it shows an algorithm for achieving a $\pm \eps$ additive approximation for the SVM objective in the batch setting using space $O(\eps^{-2})$. \cite{rai2009svm} adapts these ideas to the streaming setting by showing a simple one-pass approximate MEB algorithm. However, they only achieve a \textit{constant} approximation, rather than one with a target error $\eps$. To the best of our knowledge, we are the first to analyze streaming SVM algorithms with sub-constant approximation guarantees.

\subsection{Techniques}
\subparagraph*{Multiplicative approximation algorithm.} 
First note that by considering the points labeled $+1$ and $-1$ separately, the point estimation problem that we consider reduces to the following: given a set of $n$ points, the goal is to sketch them such that later, given a query hyperplane denoted by $H = (\theta , b)$, one can estimate the sum of distances of the points on one side of $H$ to $H$. In one dimension, the points are positioned on the real line and the query is also a value $b$ on the real line and the goal is to compute $\sum_i \max\{0 , b-x_i\}$.

The idea behind the streaming algorithm is the following: we uniformly sample a $1/2^i$ fraction of the input points, for $i = 1, 2, \ldots, O(\log n)$, and for each of the $O(\log n)$ sampling rates, we store the smallest $\tilde{O}(\epsilon^{-2})$ points that we have seen. Now given a query point $q$, we want to estimate the sum of distances of points $p < q$ to $q$. If there are fewer than $\tilde{O}(\epsilon^{-2})$ input points $p$ less than $q$, then we have stored all of them and can compute this sum exactly. Otherwise we can think of partitioning the input points $p$ into geometric scales, in powers of $2$, based on their distance to $q$. The main insight is if one of these scales contributes to the overall sum, it must have a number $n'$ of points in it of roughly the same order as the total number of points $p'$ even further away from $q$. This is because each such point $p'$ contributes even more than any point $p$ in this scale to the sum. Consequently, if we choose $i$ for which $1/2^i \approx 1/(n' \epsilon^2)$, then there will be about $\tilde{\Theta}(1/\epsilon^2)$ survivors in the sampling that are at this distance scale, and we will have stored all of them. 
By separately estimating the contribution of each scale and adding them together, we obtain our overall estimate. We note that to achieve our overall $\tilde{O}(\epsilon^{-2})$ space bound we need to obtain as crude of an additive error as possible for scales that do not contribute as much as other scales to the overall objective. We do this by separately first estimating the contribution of each scale up to a constant factor, and then refining it appropriately. 


\subparagraph*{Additive approximation algorithms.} We also show {\em additive} approximation algorithms in the low dimensional regimes. As our lower bound depends on the diameter of the points, we assume without loss of generality that our point set has diameter $1$. For $d=1$,  the sketching algorithm groups the points into $2/\sqrt \eps$ groups in a way that i) each group has diameter at most $\sqrt \eps$, and ii) each group has at most $n\sqrt \eps$ points in it. It is easy to verify that such a grouping results in an additive approximation of $\eps$. To make the algorithm work in the streaming setting, we must maintain the groups as the points arrive. Note that the partitioning based on the diameter can be done in advance. However in order to also partition based on the number of points (and make sure each group gets at most $n\sqrt \eps$ points in it), we create a binary tree for each original group (that has diameter $\sqrt \eps$). Whenever a group reaches its maximum size of $n\sqrt \eps$, we create two children corresponding to that group and further points arriving in the group will be assigned to one of the children. Note that it is important that we cannot partition the previous points into the two children as we already discarded them in the stream.

For $d=2$, we maintain a quad tree on the points in $[-1,1]\times[-1,1]$. Whenever a cell gets too many points we further partition it into four until its side length becomes too small. We process every cell of the quad tree so that if the query (which is a line now) does not collide with the cell, we can exactly compute the sum of distances of the points in the cell to the line (using \cite{pass-glm}). Ignoring the points in the cells that the line crosses, will result in an algorithm with space usage of $O(\eps^{-1})$. To push it down to $O(\eps^{-4/5})$, we randomly sample a point from each cell and for the crossing cells, we will use the single sampled point to estimate the average distances of the points in the cell to the query line.

Finally, for the lower bounds in low dimensions, we develop reductions from the {\sc Indexing} problem. We show how to consider $k = \Omega(\eps^{-\frac{d+1}{d+3}})$ positions inside a $d$-dimensional sphere, where they correspond to a bit-string of length $k$ by Alice. If her $i$-th bit in the string is $1$, then she will put $n/k$ actual points in the corresponding positions, otherwise she does not put any point there. We show that Bob can recover Alice's input using hyper-plane queries.

\subparagraph*{Lower bound in high dimensions.} We let $b=0$ so that $F(\theta,0) = \frac{1}{n}\sum_i \max\{0,\theta\cdot x_i\}$. This is similar to the subspace sketch problem studied in~\cite{LWW20}, which considers approximating $\sum_i \phi(\theta\cdot x_i)$ for $\phi(t) = |t|$ up to a multiplicative factor of $(1+\eps)$. Here we have $\phi(t) = \max\{0,t\}$ instead (by flipping $\theta$) and an additive error $\eps n$. The proof in~\cite{LWW20} turns the multiplicative error into an additive error and so we can adapt the same approach in our current case. Following the same approach, we can show an $\Omega(d/(\eps^2\polylog(1/\eps)))$ lower bound when $d=\Omega(\log(1/\eps))$ for the point estimation problem. Below we sketch the main idea for the proof, which is similar to that in \cite{LWW20}. 

We show an $\Tilde{\Omega}(1/\eps^2)$ lower bound for $d=\Theta(\log(1/\eps))$. The lower bound for general $d$ follows from the concatenation of $\Theta(d/\log(1/\eps))$ independent smaller hard instances.

In the remainder of this subsection let $d=\Theta(\log(1/\eps))$ be such that $n=2^d = \Tilde{\Theta}(1/\eps)$. Consider all the $\{-1,1\}^d$ vectors and let $x_i$ be the $i$-th $\{-1,1\}$-vector scaled by some scalar $r_i$. Define a matrix $M\in \R^{n\times n}$, indexed by $\{-1,1\}$-vectors, as $M_{ij} = \phi(\langle i,j\rangle)$. Then we have for $\theta\in\{-1,1\}^d$ that $\sum_i \phi(-\langle \theta, x_i\rangle) = \sum_i \phi(\langle \theta, x_i\rangle) = \sum_i \phi(r_i \langle \theta, i\rangle) = \sum_i r_i \phi(\langle \theta, i\rangle) = \sum_i M_{\theta, i}r_i = \langle M_\theta,r\rangle$ if all $r_i\geq 0$, where $M_\theta$ denotes the $\theta$-th row of $M$. Our goal is to encode random bits $s_i$ in the scalars $r_i$, such that obtaining $\langle M, \theta\rangle$ within additive $\eps n$ allows us to recover as many bits $s_i$ as possible. 

First we allow $r_i$ to be negative and consider $(Mr)_\theta$. Let
$r = \sum_i s_i \cdot \frac{M_i}{\|M_i\|_2},
$
where $s_1,\dots,s_n$ are i.i.d.\ Rademacher random variables. It follows from a standard concentration inequality that $\|r\|_\infty \leq \poly(d)$. If $M$ had orthogonal rows, then $\langle M_\theta, r\rangle = s_\theta \|M_\theta\|_2$, in which case we can recover the bit $s_\theta$ from the sign of $\langle M_\theta,r\rangle$, provided that $\|M_\theta\|_2$ is larger than the additive error $\eps n$. 

However, $M$ does not have orthogonal rows. The argument above still goes through so long as we can identify a subset $\mathcal{R} \subseteq [n] = [2^d]$ of size $|\mathcal{R}| = \Omega(2^d / \poly(d))$ such that the rows $\{M_i\}_{i\in\mathcal{R}}$ are nearly orthogonal, meaning that the $\ell_2$ norm of the orthogonal projection of $M_i$ onto the subspace spanned by other rows $\{M_j\}_{j \in \mathcal{R} \setminus \{i\}}$ is much smaller than $\|M_i\|_2$. To this end, we study the spectrum of $M$ using Fourier analysis on the hypercube, which shows that the eigenvectors of $M$ are the rows of the normalized Hadamard matrix, while the eigenvalues of $M$ are the Fourier coefficients associated with the function $g(s) = \phi(d - 2w_H(s))$, where $w_H(s)$ is the Hamming weight of a vector $s \in \{0,1\}^d$. It can be shown that $M$ has at least $\Omega(2^d / \poly(d))$ eigenvalues of magnitude $\Omega(2^{d  / 2} / \poly(d))$. For the $\theta$'s which correspond to those eigenvalues, we have $\|M_\theta\|_2 = \Omega(2^{d/2}/\poly(d))$ so that $\|M_\theta\|_2 = \Omega(\eps n)$ for our choice of $n$ and $d$, as required by the argument.

Recall that we require $r_i\geq 0$. Since $\|r\|_{\infty} \le \poly(d)$ with high probability, we can just shift each coordinate of $r$ by a fixed amount of $\poly(d)$ to ensure that all entries of $r$ are positive. We can still obtain $\langle M_\theta, r \rangle$ with an additive error $O(\eps 2^d\poly(d))$, since the amount of the shift is fixed and bounded by $\poly(d)$.

Last, rescaling $\theta$ and $x_i$'s to unit vectors loses $\poly(d) = \poly(\log(1/\eps))$ factors in the lower bound and we continue to have an $\Omega(1/(\eps^2\polylog(1/\eps)))$ lower bound.

\subparagraph*{Optimization lower bound.}
We prove optimization lower bounds by reducing from the {\sc Indexing} problem: Alice encodes points at specific locations on the unit sphere, then Bob uses the optimization sketch to decode whether a point was added at some particular location. The challenge is that Bob must be able to reason about a \emph{single} data point when given access to an approximate optimum corresponding to an \textit{entire} dataset. The key idea in getting this to work is for Bob to add some additional points with the following property: if the location being queried does not contain a point, then the added points are the \textit{only} support vectors, i.e. $(\theta^*, b^*)$ is entirely determined by the added points, whereas if the the location being queried does contain a point, then $(\theta^*, b^*)$ is entirely determined by the added points and the point at that location. Hence, $(\theta^*, b^*)$ can take on exactly two possible values, and does not depend on any of the remaining points in the dataset. Moreover, the strong convexity of $F_\lambda$ allows us to upper bound $\|(\theta^*, b^*) - (\hat{\theta}, \hat{b})\|$ in terms of $\eps$. We make this precise in the following lemma, which we will refer to multiple times later on:

\begin{lemma}\label{lem:str_conv}
If $F_\lambda(\hat{\theta}, \hat{b}) \leq F_\lambda(\theta^*, b^*) + \eps$, then $\|(\hat{\theta},\hat{b}) - (\theta^*,b^*)\|_2 \leq \sqrt{2\eps/\lambda}$.
\end{lemma}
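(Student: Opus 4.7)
The plan is to derive the bound as a direct consequence of the $\lambda$-strong convexity of $F_\lambda$. The hinge loss $\max\{0, 1 - y_i(\theta^T x_i + b)\}$ is convex in $(\theta, b)$ as the maximum of two affine functions, so the average over $i$ is convex as well. The regularizer $\frac{\lambda}{2}\|(\theta,b)\|_2^2$ is $\lambda$-strongly convex, and adding a convex function to a $\lambda$-strongly convex function yields a $\lambda$-strongly convex function. Hence $F_\lambda$ is $\lambda$-strongly convex on $\R^{d+1}$.

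The key inequality I would invoke is the standard consequence of strong convexity at the minimizer: for any $\lambda$-strongly convex function $F$ with minimizer $x^*$,
\[
F(x) \;\geq\; F(x^*) \;+\; \frac{\lambda}{2}\|x-x^*\|_2^2 \qquad \text{for all } x.
\]
This follows because $0$ lies in the subdifferential $\partial F(x^*)$ (by optimality), and the defining inequality of $\lambda$-strong convexity states that $F(x) \geq F(x^*) + \langle g, x-x^*\rangle + \frac{\lambda}{2}\|x-x^*\|_2^2$ for every $g\in\partial F(x^*)$. Taking $g=0$ gives the claim. Note that one must use the subdifferential formulation (rather than the gradient form) precisely because $F_\lambda$ is non-smooth at the kinks of the hinge loss; this is the only mildly subtle point.

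Applying this inequality with $F=F_\lambda$, $x^*=(\theta^*,b^*)$ and $x=(\hat\theta,\hat b)$, and combining with the hypothesis $F_\lambda(\hat\theta,\hat b)\le F_\lambda(\theta^*,b^*)+\eps$, gives
\[
\frac{\lambda}{2}\|(\hat\theta,\hat b)-(\theta^*,b^*)\|_2^2 \;\leq\; F_\lambda(\hat\theta,\hat b)-F_\lambda(\theta^*,b^*) \;\leq\; \eps.
\]
Rearranging yields $\|(\hat\theta,\hat b)-(\theta^*,b^*)\|_2 \leq \sqrt{2\eps/\lambda}$, as required. There is no real obstacle here; the entire argument is a one-line consequence of strong convexity applied at the optimum, and the only care needed is to justify the subgradient optimality condition for the non-smooth objective $F_\lambda$.
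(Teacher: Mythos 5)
Your proof is correct and follows essentially the same route as the paper: invoke $\lambda$-strong convexity of $F_\lambda$ at the minimizer to get $F_\lambda(\hat w)\ge F_\lambda(w^*)+\frac{\lambda}{2}\|\hat w-w^*\|_2^2$ and rearrange. Your extra care in phrasing the optimality condition via the subdifferential (since the hinge loss is non-smooth) is a minor refinement of the paper's argument, which writes the same inequality with $\nabla F_\lambda(w^*)$ and drops that term.
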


By exactly characterizing what $(\theta^*, b^*)$ is in these two possible cases, and showing that the gap is more than $2\sqrt{2\eps/\lambda}$, we show that Bob can distinguish these two situations and decode the bit. The analysis is delicate, requiring a carefully chosen construction for the proof to go through.




\subparagraph*{Open Problems.}
For the case of $d=1$, we have matching upper and lower bounds for (additive) point estimation of $\Theta(\eps^{-1/2})$. This also translates into an \textit{optimization} upper bound of $O(\eps^{-1/2})$, the best upper bound we know of for this setting. However, we have no optimization lower bound for $d=1$. Instead, we have an optimization lower bound of $\Omega(\eps^{-1/4})$ for $d=2$ and $\Omega(\eps^{-1/2})$ for $d \ge 3$. Moreover, for optimization in high dimensions, there remains a gap between the $\Omega(\eps^{-1/2})$ lower bound and $O(1/\eps)$ upper bound. It remains to close all of these gaps.

Also, while in this work we focus on linear SVMs, often times non-linear kernels are preferred in practice. This raises the question of whether we can extend our results to this setting. One approach is to use random feature maps that allow one to convert a kernel SVM problem into a linear SVM problem \cite{rahimi2007features}. However, this increases the dimension significantly, so that sampling and running SGD is more efficient than optimization via point estimation.\footnote{As described in \cite{shalev-shwartz2007pegasos}, one can use adapt SGD to work for kernelized SVMs. This involves tracking dual variables $\alpha_i$, which we can do with the same space complexity of $O(\frac{d}{\lambda\eps})$ as before.}

\section{Point Estimation}
In this section, we study the streaming complexity of the point
estimation problem. Specifically, the algorithm
sees the data points $(x_i,y_i)$, for $1\leq i\leq n$, one by one.
The goal is to keep a sketch of the data such that later, given the
query parameters $(\theta,b)$, it can output an estimate of the
SVM objective function $F(\theta,b)$.

\subparagraph*{Setup for point estimation.} We can simplify the presentation
by focusing on a slight simplification of the SVM objective (without
loss of generality). First, we note that, since $y_i\in\{+1,-1\}$, we
can estimate the contribution from $x_i$'s with $y_i=+1$ and $y_i=-1$
separately. Hence, for point estimation it is enough to assume that
$y_i=+1$, as well as that $\lambda=0$ (since the regularization can be
computed independently of the data). Furthermore, we can just work
with the following related objective:
\begin{equation}
\label{eqn:ridgeF}
F(\theta,b) := \frac{1}{n}\sum_{i=1}^n \max\{0, b-\theta \cdot x_i\},
\end{equation}
by adjusting $b$ accordingly. Hence we focus on estimating the
function from Eqn.~\eqref{eqn:ridgeF} for the rest of this section.



Note that when $d=1$, it is enough to consider the case of $\theta=+1$. First,
because for $\theta\in(0,1)$, we can rescale the output of a sketch
(that uses $\theta=+1$ and accordingly rescaled $b$). Second, because
the case of $\theta\in[-1,0)$ is precisely symmetric, so one can just keep two
sketches, one for each of $\theta\in\{-1,+1\}$. Note that the
objective simplifies to $F=\tfrac{1}{n}\sum_{i=1}^n \max\{0,q-x_i\}$
where $q=b$. We will call the value $q$ the query.

\subsection{Multiplicative $(1+\eps)$ approximation algorithm for $d=1$}

Here we consider the case of $d=1$: We are given a set of $n$ points $x_i \in \Re$, and given any query $q\in \Re$, the goal is to approximate $\sum_{i: x_i\leq q}
(q-x_i)$ up to a multiplicative $1+\eps$ factor. To analyze the bit complexity of the problem, we assume the points are integers between $1$ and $W$. A simple sketching algorithm is given in \cref{apx:pe_ub_sketch}. 
Here we present a streaming algorithm for the problem.

\def \Val{V}

\subparagraph*{Streaming.}
Here we assume that the values $x_1,\dots,x_n$ are given in a stream in this order, and we are allowed to make a single pass over it, and the query $q$ is given at the end of the stream. Note that $x_i$'s are not necessarily sorted, and for simplicity, we assume all $x_i$'s are distinct. The algorithm maintains the following sketch throughout the stream.

\subparagraph*{Sketch.} The sketch consists of two collections of sets of samples as described below. The first collection is used to get a crude (constant factor) approximation of the contribution of each contributing interval as defined later, and the second collection is used for a more precise approximation.
\begin{itemize}
\item For each $0\leq i \leq \log n$, sample every point with probability $1/2^i$, and preserve the $m_1 = C_1\eps^{-1} \log^2 W$ sampled points with the smallest $x$ value in the set $E_i$, where $C_1$ is a constant to be specified later.
\item For each $0\leq i \leq \log n$, sample every point with probability $1/2^i$, and preserve the $m_2 = C_2\eps^{-2} \log W$ sampled points with the smallest $x$ value in the set $S_i$, where $C_2$ is a constant to be specified later.
\end{itemize}

\begin{observation}[Space]\label{obs-1dmult-space} The sets $E_i$ and $S_i$ can be maintained in a stream. Let $M=\max\{m_1,m_2\}$, then the space usage of the algorithm is $O(\log n \cdot M \cdot \log W) = O(\frac{\log n \cdot \log^2 W}{\eps}(\frac{1}{\eps} + \log W))$ bits.
\end{observation}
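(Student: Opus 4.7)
The plan is to verify the two claims separately: (a) that each $E_i$ and $S_i$ admits a single-pass update rule, and (b) that the total bit count matches the stated bound. This is a routine accounting observation, so I do not expect a serious obstacle; the only mild subtlety is bookkeeping the random bits used for the independent subsampling.

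For (a), I would describe the update rule for $E_i$ (the argument for $S_i$ is identical with $m_1$ replaced by $m_2$). At each level $i\in\{0,1,\ldots,\log n\}$, the set $E_i$ is defined by first sampling every stream element independently with probability $2^{-i}$ and then keeping the $m_1$ sampled elements of smallest $x$-value. Both steps are online. When a new $x_j$ arrives, we flip a $\mathrm{Bernoulli}(2^{-i})$ coin for each $i$ independently (either using a shared stream of random bits, or via a pairwise-independent hash seeded by $O(\log n)$ bits per level, which is enough for the analysis of such reservoir-type sketches). If $x_j$ is sampled at level $i$, we maintain $E_i$ as a max-heap on $x$-values with capacity $m_1$: insert $x_j$ if $|E_i|<m_1$, else compare it against the current maximum and evict that maximum iff $x_j$ is strictly smaller. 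An easy induction on the stream length shows that $E_i$ is always exactly the set of $\min\{m_1,|\text{samples so far}|\}$ smallest sampled $x$-values, as required.

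For (b), I count the bits stored by the sketch. There are $\log n + 1 = O(\log n)$ levels used for the $E_i$'s and the same number used for the $S_i$'s; each $E_i$ stores at most $m_1$ integers from $[1,W]$ and each $S_i$ stores at most $m_2$ such integers, with each integer costing $O(\log W)$ bits. Hence the total data stored is
\[
O\!\bigl((\log n)\cdot(m_1+m_2)\cdot \log W\bigr) \;=\; O\!\bigl((\log n)\cdot M\cdot \log W\bigr)\text{ bits},
\]
the heap overhead being only a constant per stored element, and the hash seeds for the $O(\log n)$ samplers contributing lower-order terms. Plugging in $m_1=C_1\eps^{-1}\log^2 W$ and $m_2 = C_2\eps^{-2}\log W$,
\[
m_1 + m_2 \;=\; O\!\left(\frac{\log^2 W}{\eps} + \frac{\log W}{\eps^2}\right) \;=\; O\!\left(\frac{\log W}{\eps}\Bigl(\log W + \tfrac{1}{\eps}\Bigr)\right),
\]
so the total bit complexity is $O\!\bigl(\tfrac{\log n\cdot \log^2 W}{\eps}\bigl(\log W+\tfrac{1}{\eps}\bigr)\bigr)$, matching the claim.
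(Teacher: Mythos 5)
Your proposal is correct and is exactly the routine accounting the paper treats as immediate (it states this observation without proof): per level a capacity-$m_1$ (resp.\ $m_2$) ``keep the smallest'' buffer is maintained online, and the bit count is levels $\times$ capacity $\times$ $O(\log W)$ bits per stored integer, with the sampling coins flipped on arrival and never stored. One minor caveat: your parenthetical that pairwise-independent hashing suffices is not justified by the paper's analysis, which uses Chernoff bounds and hence needs (near-)full independence, but your primary option of fresh independent coins makes this aside harmless.
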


Next we describe and analyze the query processing algorithm. 

\subparagraph*{Query algorithm.} 
Let $p$ be the largest value in $S_0\cup E_0$. Given the query point $q\in \mathbb{R}$, we proceed as follows.
\begin{itemize}
\item  If $q\leq p$, then we can report an exact solution using the corresponding sample set: e.g. if $p\in S_0$, then we output  $\sum_{x\in S_0, x\leq q} (q-x)$. 
\item Otherwise, we group the points based on their distance to $q$ and estimate the contribution of each group separately. More precisely, let $D = q-p$, which is a positive number, and for each $1\leq j\leq \log D$, define the interval  $R_j = (q - \frac{D}{2^{j-1}} , q - \frac{D}{2^j}]$. For notational convenience, let $R_0$ be the interval covering $S_0\cup E_0$ which ends at the point $p$.
Finally for each $0\leq j\leq \log D$, let $t_j=|P\cap R_j|$ be the number of points in the interval $R_j$, and $T_j=|P\cap (\bigcup_{k<j}R_k)|$ be the number of points to the left of the interval $R_j$.
\item Let $i'(j)$ be the largest $i$ such that $E_i$ contains at least $\log D$ points from $R_j$. If no such $i'$ exists, let $i'(j)=-1$. The value $i'(j)$ shows which sampled set ($E_{i'(j)}$) we should use for our crude approximation. As we show in Lemma \ref{clm-mult-1}, if $i'(j)=-1$, the contribution of the points in $R_j$ can be ignored).

\item Let $\phi_j = \min\{ 1, \frac{|E_{i'(j)}\cap R_j|}{|E_{i'(j)}\cap (\cup_{k< j}R_k)|}\}$.
This value is used to approximate the ratio of the points in $R_j$ to the points that are to the left of $R_j$, i.e., $\phi_j\approx\frac{t_j}{T_j}$. This is verified in Lemma \ref{clm-mult-2}.

\item We set the value of $i(j)$ as follows.
\begin{itemize}
\item If $i'(j)=-1$ or $\phi_j\leq \frac{\eps}{\log W}$, then set $i(j)=-1$. In this case, the contribution of $R_j$ can be ignored.
\item Otherwise, if $\phi_j\geq \frac{1}{\log W}$, then set $i(j)$ to be the largest $i$ such that $S_i$ contains at least $1/\eps^2$ points from $R_j$. If no such $i$ exists let $i(j)=-1$. This case in analyzed in Lemma \ref{clm-mult-3} and Lemma \ref{clm-mult-4}.
\item Finally, if $\frac{\eps}{\log W}\leq \phi_j\leq \frac{1}{\log W}$,  then set $i(j)$ to be the largest $i$ such that $S_i$ contains at least $(\phi_j\log W/\eps)^2$ points from $R_j$. This case in analyzed in Lemma \ref{clm-mult-5} and Lemma \ref{clm-mult-6}.
\end{itemize}
\item Report $\sum_{j\leq \log D, i(j)\neq -1} \sum_{x\in S_{i(j)}\cap R_j} 2^{i(j)} (q-x)$. That is for all sets whose contribution is significant (equivalently $i(j)\neq -1$) we estimate their contribution using sample set $S_{i(j)}$.
\end{itemize}

We then have the following.

\begin{lemma}[main lemma]\label{lm-main}
This algorithm returns a $(1+O(\eps))$ multiplicative approximation.
\end{lemma}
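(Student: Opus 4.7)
The idea is to bound the error contributed by each interval $R_j$ separately and then sum them up using a union bound over the $O(\log D) = O(\log W)$ intervals. First I would dispose of the easy case $q \leq p$: here the largest of $S_0 \cup E_0$ exceeds $q$, so every $x_i \leq q$ is already stored exactly in $S_0$ (or $E_0$), and the reported sum is exact.

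Next, for $q > p$, I would introduce the notation $V_j := \sum_{x\in P \cap R_j}(q-x)$ and observe the geometric sandwich $t_j \cdot D/2^j \leq V_j \leq t_j \cdot D/2^{j-1}$, together with the crucial lower bound $F(q) \geq V_{j-1} + V_{j-2} + \cdots \geq T_j \cdot D/2^{j-1}$ for the total. These two inequalities convert a ratio bound on $t_j/T_j$ into a ratio bound on $V_j / F(q)$, which is what lets us decide when an interval may be discarded. In particular, combining them yields $V_j \leq 2(t_j/T_j) F(q)$, so any interval whose $t_j/T_j$ is at most $\eps/\log W$ contributes at most $2\eps/\log W$ of $F(q)$, and summing over all $O(\log W)$ such intervals loses only $O(\eps)\cdot F(q)$. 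This is where Lemmas \ref{clm-mult-1} and \ref{clm-mult-2} come in: Lemma \ref{clm-mult-2} certifies $\phi_j = \Theta(t_j/T_j)$ (up to a $1\pm \eps$ factor after sufficiently many $E$-samples land in $R_j$), and Lemma \ref{clm-mult-1} certifies that when $i'(j) = -1$ the ratio $t_j/T_j$ is already below $\eps/\log W$ so the interval can be safely dropped.

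For the surviving intervals, I would use the $S_i$ samples to turn the exact count $V_j$ into a scaled subsample estimator $\widehat V_j := 2^{i(j)} \sum_{x \in S_{i(j)} \cap R_j}(q-x)$. Conditioned on the event that $|S_{i(j)}\cap R_j| = k$ is large enough, $\widehat V_j$ is an unbiased estimator of $V_j$ whose standard deviation is at most $D/2^{j-1} \cdot \sqrt{t_j / k} \cdot k \cdot 2^{i(j)} / \sqrt{k}$, i.e.\ after appropriate simplification it is at most $V_j/\sqrt{k} \cdot O(1)$. In the ``large'' regime $\phi_j \geq 1/\log W$, the algorithm picks $k = \Omega(1/\eps^2)$, giving additive error $O(\eps) V_j$, which after summing contributes $O(\eps) F(q)$. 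In the ``medium'' regime $\eps/\log W \leq \phi_j \leq 1/\log W$, the algorithm picks only $k = (\phi_j \log W/\eps)^2$ samples; the resulting additive error per interval is $V_j \cdot \eps/(\phi_j \log W)$, and using $V_j = O(\phi_j \cdot F(q))$ from Step 2, this becomes $O(\eps / \log W) \cdot F(q)$, and again sums to $O(\eps) F(q)$ across all intervals. Lemmas \ref{clm-mult-3}--\ref{clm-mult-6} should precisely establish (a) that the chosen $i(j)$ yields the promised sample count $k$ with high probability, i.e.\ that the preserved $S_{i(j)}$ actually catches all relevant samples from $R_j$ (not cut off by the $m_2$-smallest restriction), and (b) the concentration inequality for $\widehat V_j$ at the claimed rate.

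The hardest part is step (a) just mentioned, namely arguing that for the chosen sampling level $i(j)$, the truncation to the smallest $m_2$ samples does not discard any point of $R_j$. This requires showing that the number of sampled points strictly to the left of $R_j$ at level $i(j)$ is $\ll m_2$ (so the truncation cutoff lies to the right of $R_j$). This is exactly why $i(j)$ is defined to maximize $i$ subject to the required sample count from $R_j$: increasing $i$ by one would halve that count, so at level $i(j)$ we still have enough samples in $R_j$ but the expected number of samples from $\bigcup_{k<j}R_k$ is only $T_j/2^{i(j)} = O(k/\phi_j) \ll m_2$ by the choice of $k$ in each regime. Formalizing this uniformly across all $j$ and all three regimes (via Chernoff bounds with union bound over the $O(\log D)$ intervals and $O(\log n)$ levels) is the main technical task, which I expect to be discharged inside Lemmas \ref{clm-mult-3}--\ref{clm-mult-6}. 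Combining everything yields $|\widehat F(q) - F(q)| \leq O(\eps) F(q)$, proving the claim.
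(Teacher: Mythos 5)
Your plan follows the same route as the paper's proof: handle $q\le p$ exactly, decompose the error over the intervals $R_j$, discard intervals whose relative mass $t_j/T_j$ is below $\eps/\log W$ via Lemmas \ref{clm-mult-1}--\ref{clm-mult-2}, and for the remaining intervals estimate $\Val_j$ by the rescaled sample $2^{i(j)}\sum_{x\in S_{i(j)}\cap R_j}(q-x)$, with relative error $O(\eps)$ when $\phi_j\ge 1/\log W$ and additive error $O(\eps/\log W)\Val$ when $\eps/\log W\le\phi_j\le 1/\log W$. The inequalities you use ($t_jD/2^j\le \Val_j\le t_jD/2^{j-1}$ and $\Val\ge T_jD/2^{j-1}$) are exactly the ones driving the paper's argument. (A minor point: Lemma \ref{clm-mult-2} only yields a constant-factor approximation of $\min\{1,t_j/T_j\}$, not a $1\pm\eps$ one, but constant accuracy of $\phi_j$ is all that is needed.)

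There is, however, a gap in your treatment of the medium regime. You assert that for \emph{every} interval the algorithm keeps (i.e.\ $i(j)\neq -1$), the set $S_{i(j)}$ retains all sampled points of $R_j$, so that $\widehat V_j$ is unbiased and concentrated. The paper's Lemmas \ref{clm-mult-5} and \ref{clm-mult-6} establish this only under the additional hypothesis $\Val_j\ge \Val\,\eps/\log W$; when $\eps/\log W\le\phi_j\le 1/\log W$ but $\Val_j< \Val\,\eps/\log W$, the algorithm may still set $i(j)\neq -1$ while the ``$m_2$ smallest'' truncation cuts away part of the sample from $R_j$, and the retained points can be a biased subset (skewed toward the far end of the interval). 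For these intervals no unbiasedness or concentration statement is available; the paper instead uses the crude two-sided bound $0\le 2^{i(j)}\sum_{x\in S_{i(j)}\cap R_j}(q-x)\le 2(1+\eps)\Val_j$ together with $\Val_j\le \Val\,\eps/\log W$, summed over the at most $\log W$ such $j$. Symmetrically, you also need the contrapositive of Lemma \ref{clm-mult-5} to charge the intervals with $\phi_j\ge\eps/\log W$ that the algorithm nevertheless drops ($i(j)=-1$): this can only happen when $\Val_j< \Val\,\eps/\log W$, so the omission costs at most $\eps\Val$ in total. Your write-up covers dropped intervals only when $t_j/T_j\le\eps/\log W$ (or $i'(j)=-1$), so both of these edge cases must be added before the error accounting closes.
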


First, it is clear from the algorithm description that in the case of $q\leq p$, the algorithm produces an exact solution. To show the correctness in the otherwise case, as stated in the description let $t_j$ be the number of points in $P$ that fall in the interval $R_j$, and let $T_j$ be the number of points that fall to the left of $R_j$, i.e., $T_j = |\{x\in P \colon x\leq q-(D/2^{j-1})\}|=|\bigcup_{k<j} R_k|$ and note that $T_j\geq \max\{|S_0|,|E_0|\} \geq \frac{2\log W}{\eps}\max\{\log W + 1/\eps\}$. Moreover, let $\Val = \sum_{x\in P, x\leq q} (q-x)$ be the value of the solution and $\Val_j =\sum_{x\in P\cap R_j} (q-x)$ be the contribution of the points in $R_j$ to the solution. 

\begin{lemma}[If Not Enough Samples, Set Doesn't Contribute]\label{clm-mult-1} 
If $t_j< \log D$, then $i'(j)=-1$. Moreover, if $i'(j)=-1$, then with high probability $\frac{t_j}{T_j}\leq \frac{2\eps}{\log W}$.
\end{lemma}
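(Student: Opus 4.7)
The plan is to handle the two sentences separately. For the first sentence the argument is deterministic: since every $E_i$ is a subset of $P$, we have $|E_i\cap R_j|\le t_j<\log D$ for every $i$, so no $i$ meets the threshold defining $i'(j)$, and hence $i'(j)=-1$.

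For the second sentence I would split on the size of $t_j$. When $t_j<\log D$, the desired bound is already deterministic: the lower bound $T_j\ge 2\log^2 W/\eps$ noted just above the lemma combines with $t_j<\log D\le\log W$ to give $t_j/T_j\le \eps/(2\log W)\le 2\eps/\log W$, with no randomness needed.

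The interesting case is $t_j\ge\log D$, where I would prove the contrapositive: assuming $t_j/T_j>2\eps/\log W$, I would exhibit an index $i^*\ge 0$ for which $|E_{i^*}\cap R_j|\ge\log D$ holds with high probability. The choice would be $i^*:=\max\{0,\lceil\log(2T_j/m_1)\rceil\}$, so that the expected count of sampled points to the left of $R_j$ at rate $1/2^{i^*}$ is at most $m_1/2$, while the expected count of sampled points inside $R_j$ is at least $t_j m_1/(4T_j) > \eps m_1/(2\log W) = (C_1/2)\log W$, which exceeds $2\log D$ once $C_1$ is chosen sufficiently large (using $\log D\le\log W$). I would then apply two multiplicative Chernoff bounds: an upper tail on the count $c_L$ to the left of $R_j$ and a lower tail on the count $c_R$ inside $R_j$. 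Both would hold with probability $1-W^{-\Omega(1)}$ for $C_1$ large, yielding $c_L\le 3m_1/4$ and $c_R\ge\log D$. Since every point of $R_j$ is strictly larger than every point to the left of $R_j$, the smallest $\log D$ sampled points from $R_j$ would have rank at most $c_L+\log D\le m_1$ among all sampled points at rate $1/2^{i^*}$, so they would all survive in $E_{i^*}$; this would give $|E_{i^*}\cap R_j|\ge\log D$, contradicting $i'(j)=-1$.

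The main obstacle I anticipate is calibrating $i^*$ together with the constant $C_1$ in $m_1=C_1\eps^{-1}\log^2 W$ so that both Chernoff tails succeed simultaneously while also keeping the key inequality $c_L+\log D\le m_1$ and staying within the allowed range $0\le i^*\le \log n$. A minor edge case to verify separately is the regime $2T_j/m_1<1$, where $i^*=0$ and no subsampling actually occurs; there the conclusion would follow from the deterministic inequality $T_j+\log D\le m_1$, which in turn follows from $T_j<m_1/2$ (the defining assumption of this regime) together with $\log D\le m_1/2$.
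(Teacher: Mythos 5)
Your proposal is correct and is essentially the paper's argument run in mirror image: the paper fixes the sampling level from $t_j$ (so that about $2\log D$ to $4\log D$ points of $R_j$ are expected to survive) and deduces from $i'(j)=-1$ that $E_{i'}$ must be crowded by $\Omega(m_1)$ sampled points from $\cup_{k<j}R_k$, hence $T_j\gtrsim \frac{\log W}{\eps}t_j$, while you take the contrapositive and fix the level from $T_j$ so the left count is at most $m_1/2$ in expectation; both rest on the same two Chernoff bounds plus the observation that $E_i$ retains the $m_1$ smallest sampled points. Your calibration works out (e.g.\ $t_j 2^{-i^*}\ge t_jm_1/(4T_j)>\tfrac{C_1}{2}\log W$ under the contrapositive hypothesis, and $3m_1/4+\log D\le m_1$), so no gap.
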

\begin{proof}
For the first claim, note that for such $j$, even if we sample the points with probability $1$, we don't get $\log D$ points in $E_i$, and thus $i'(j)=-1$.

Now suppose $i'(j)=-1$. If $t_j\leq 2\log D$, then since $T_j\geq \log^2 W/\eps$, we have that $\frac{t_j}{T_j}\leq \frac{2\eps}{\log W}$.
Otherwise, let $i'\geq 0$ be such that $2^{-(i'+1)}\leq \frac{2\log D}{t_j}\leq 2^{-i'}$. Now, if we sample every point in $R_j$ with probability $2^{-i'}$, in expectation, 
we sample $t_j2^{-i'}$ points which is between 
$2\log D$ and $4\log D$. Moreover, since $t_j\geq 2\log D$, we can use Chernoff bound, proving that with high probability the number of samples is between $\log D$ and $8\log D$. 
Therefore, since the final sample $E_{i'}$ does not contain $\log D$ points from $R_j$, it means that it contains at least $m_1 - 8\log D \geq m_1/2 = \frac{C_1\log^2 W}{2\eps}$ points from $\cup_{k<j}R_k$ (for sufficiently large constant $C_1$).

On the other hand, the expected number of sampled points from $\bigcup_{k<j}R_k$ is $T_j2^{-i'}$ which with high probability (using Chernoff again) should be at least $m_1/4 \geq \frac{C_1\log^2 W}{4\eps}$. Therefore, we get that 
$$T_j\geq 2^{i'}\frac{C_1\log^2 W}{4\eps} \geq \frac{t_j}{4\log D}\cdot\frac{C_1\log^2 W}{4\eps}\geq \frac{C_1 t_j \log W}{16 \eps}$$

Thus for $C_1\geq 8$, we get that $\frac{t_j}{T_j}\leq\frac{2\eps}{\log W}$
\end{proof}

\begin{lemma} [If Enough Samples, Get Initial Constant Factor Approximation] \label{clm-mult-2}
If $i'(j)\neq -1$, then $\phi_j$ approximates $\min\{1,\frac{t_j}{T_j}\}$ by a constant factor.
\end{lemma}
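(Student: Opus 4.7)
The plan is to set $p := 2^{-i'(j)}$ and let $a'$, $b'$ denote the raw numbers of sampled points falling in $\bigcup_{k<j}R_k$ and $R_j$, respectively. The first step is structural: because $E_i$ retains the $m_1$ sampled points with smallest $x$-value and every point of $\bigcup_{k<j}R_k$ lies strictly to the left of $R_j$, the hypothesis $|E_{i'(j)}\cap R_j|\ge\log D\ge 1$ forces every sampled point of $\bigcup_{k<j}R_k$ to also be retained. Therefore $|E_{i'(j)}\cap\bigcup_{k<j}R_k|=a'$, $a'<m_1$, and $|E_{i'(j)}\cap R_j|=\min\{b',\,m_1-a'\}$. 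A standard multiplicative Chernoff bound, combined with a union bound over the $O(\log n\cdot\log W)$ pairs $(i,j)$, then shows that with high probability $a'=\Theta(T_jp)$ and $b'=\Theta(t_jp)$ whenever the respective expectations are $\Omega(\log W)$; moreover, any observed count reaching $\log D$ forces the underlying expectation to be $\Omega(\log D)$ as well.

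The proof splits into two cases depending on whether the $R_j$ samples are truncated by the $m_1$ cap. In the \emph{untruncated} case $b'\le m_1-a'$, we have $\phi_j=\min\{1,\,b'/a'\}$ and the Chernoff estimates yield $\phi_j=\Theta(\min\{1,t_j/T_j\})$ immediately. The main obstacle is the \emph{truncated} case $b'>m_1-a'$: then $\phi_j=\min\{1,\,(m_1-a')/a'\}$ no longer involves $b'$ directly.

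For the truncated case, the key step is to show that $m_1-a'=\Theta(b')$, which reduces the analysis to the untruncated one. I would establish this by exploiting the definition of $i'(j)$ as the \emph{largest} $i$ with $|E_i\cap R_j|\ge\log D$: at the adjacent rate $p/2$ we must have $|E_{i'(j)+1}\cap R_j|<\log D$, so $\min\{b'',\,m_1-a''\}<\log D$ for the sampled counts $a''$, $b''$ at rate $p/2$. An easy check using Chernoff at both rates $p$ and $p/2$ rules out the possibility $m_1-a''<\log D$, since that would force $a'>m_1-\log D$, contradicting $m_1-a'\ge\log D$. Therefore $b''<\log D$, which by Chernoff gives $t_jp=O(\log D)$; combined with $b'\ge m_1-a'\ge\log D$ and Chernoff once more, this yields $b'=\Theta(\log D)=\Theta(m_1-a')$. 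Consequently $(m_1-a')/a'=\Theta(b'/a')=\Theta(t_j/T_j)$, and $\phi_j=\Theta(\min\{1,t_j/T_j\})$ in this case as well.
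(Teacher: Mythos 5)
Your proposal is correct and follows essentially the same route as the paper's proof: both hinge on Chernoff concentration of the per-level sample counts together with the maximality of $i'(j)$ (i.e., that $E_{i'(j)+1}$ retains fewer than $\log D$ points of $R_j$) to show that the $m_1$-cap truncation cannot make the retained $R_j$ samples unrepresentative. Your explicit truncated/untruncated case split, with the deduction $m_1-a'=\Theta(b')=\Theta(\log D)$ in the truncated case, is just slightly different bookkeeping for the paper's argument that the retained prefix covers a constant fraction of the points of $R_j$ (together with its separate bullet ruling out very small $t_j/T_j$).
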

\begin{proof}
By Lemma \ref{clm-mult-1}, we know that $t_j \geq \log D$. 
\begin{itemize}
\item First note that if $T_j\leq t_j$, with high probability the number of sampled points from $\cup_{k<j} R_k$ is less than $2$ times the number of sampled points from $R_j$, and therefore, $\phi_j\geq 1/2$, and the lemma is proved.

\item Second, note that if $\frac{t_j}{T_j}\leq \frac{\eps}{2C_1\log W}\cdot \frac{\log D}{\log W}$, then the number of sampled points from $\cup_{k<j} R_k$ with high probability (using Chernoff) is at least $\frac{1}{2}\cdot\frac{T_j}{t_j}\cdot \log D \geq \frac{\log D}{2} \cdot \frac{2C_1 \log^2 W}{\log D\eps} \geq \frac{C_1 \log^2 W}{\eps} = m_1$ which is a contradiction, because then we would not have picked $\log D$ points in $E_{i'(j)}$ from $R_j$.

\item 
In the otherwise case, we show that the sample $E_{i'(j)}$ suffices to get a constant factor approximation to both values $t_j$ and $T_j$ (and hence, $t_j/T_j$). First, to see the latter, note that $T_j\geq t_j$ and therefore with high probability (using Chernoff) we get at least $\frac{\log D}{2}$ samples from $\cup_{k<j}R_k$ in the set $E_{i'(j)}$ which are chosen uniformly at random. This is enough for computing a constant factor approximation of $T_j$ with high probability.

Second, if $E_{i'(j)}$ only contains samples from the first $t_j/8$ fraction of the points in $R_j$, then with high probability, there would still be $\log D$ samples from $R_j$ in $E_{i'(j)+1}$ (this is because with high probability, we only get less points from $\bigcup_{k<j} R_k$ in $E_{i'(j)+1}$ and still at least $\log D$ points from $R_j$ in it). However, this contradicts the choice of $i'(j)$. Therefore, we get a uniform sample of size $\Omega(\log D)$, from a constant fraction of the points in $R_j$, meaning that we can approximate $t_j$ up to a constant factor with high probability. Therefore, $\phi_j$ will be a constant approximation to the value of $\min \{1,\frac{t_j}{T_j}\}$. 
\end{itemize}
\end{proof}


The following two lemmas analyze the case of $\phi_j\geq \frac{1}{\log W}$.
\begin{lemma}[Enough Samples are Found From Large Contributing Sets]\label{clm-mult-3}
If $i'(j)\neq -1$ and $\phi_j\geq \frac{1}{\log W}$, then 
we have $i(j)\neq -1$.
\end{lemma}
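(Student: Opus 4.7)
The plan is to exhibit an explicit sampling index $i^* \geq 0$ for which $S_{i^*}$ is guaranteed to contain at least $1/\eps^2$ points from $R_j$; since $i(j)$ is defined as the largest such index, this immediately yields $i(j) \neq -1$.

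First I would convert the two hypotheses into a usable bound on $t_j$. By Lemma~\ref{clm-mult-2}, $\phi_j$ approximates $\min\{1, t_j/T_j\}$ up to a constant factor, so $\phi_j \geq 1/\log W$ forces $t_j \geq \Omega(T_j / \log W)$. Combined with the standing lower bound $T_j \geq \max\{|S_0|, |E_0|\} = \Omega(\log W/\eps^2)$ from the setup, this gives the two structural facts driving the whole argument: $t_j = \Omega(1/\eps^2)$ and $T_j = O(t_j \log W)$.

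Next I would choose $i^* := \max\{0, \lfloor \log_2(t_j \eps^2/c_0) \rfloor\}$ for a suitable constant $c_0$. At sampling rate $2^{-i^*}$, the expected number of sampled points in $R_j$ is $\Theta(1/\eps^2)$, while the expected number of sampled points in $\bigcup_{k \leq j} R_k$ is $O((T_j/t_j) \cdot 1/\eps^2) = O(\log W/\eps^2)$, which is at most $m_2/2$ provided $C_2$ is chosen large enough relative to the constants coming out of Lemma~\ref{clm-mult-2}. A Chernoff bound then gives, with high probability: (i) at least $1/\eps^2$ points of $R_j$ are sampled at rate $2^{-i^*}$, and (ii) the total number of sampled points in $\bigcup_{k \leq j} R_k$ is at most $m_2$. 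Because $S_{i^*}$ retains the $m_2$ \emph{smallest} sampled values and all points of $\bigcup_{k < j} R_k$ lie strictly to the left of $R_j$, (ii) guarantees that every sampled point of $R_j$ survives the truncation into $S_{i^*}$, whence (i) yields $|S_{i^*} \cap R_j| \geq 1/\eps^2$, as needed.

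The only case not directly covered is $t_j \eps^2 < c_0$, where $i^* = 0$ and there is no actual sampling. Here I would argue directly: the bound $T_j = O(t_j \log W) = O(\log W/\eps^2)$ combined with $t_j = O(1/\eps^2)$ gives $t_j + T_j \leq m_2$ once $C_2$ is large enough, so $S_0$ deterministically contains \emph{every} point in $\bigcup_{k \leq j} R_k$, and $t_j = \Omega(1/\eps^2)$ again supplies the required $1/\eps^2$ points. The one delicate piece is calibrating $C_2$ so that the $m_2/2$ budget really does accommodate all sampled points of $\bigcup_{k \leq j} R_k$ despite the constant loss from Lemma~\ref{clm-mult-2} and the Chernoff deviation; this is bookkeeping rather than a conceptual obstacle.
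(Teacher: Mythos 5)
Your proposal is correct and follows essentially the same route as the paper: use Lemma~\ref{clm-mult-2} plus $T_j \geq |S_0|$ to get $t_j = \Omega(1/\eps^2)$ and $T_j = O(t_j\log W)$, pick the sampling level with rate $\approx 1/(t_j\eps^2)$, and apply Chernoff to show at least $1/\eps^2$ points of $R_j$ are sampled while the sampled points to their left stay within the $m_2$ budget, so all of them survive into $S_{i}$ and hence $i(j)\neq -1$. Your explicit handling of the rate-one edge case is a small addition the paper avoids by noting $t_j\geq 2/\eps^2$ makes its chosen $i$ nonnegative, but the argument is otherwise the same.
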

\begin{proof}
Note that by Lemma \ref{clm-mult-2}, $\phi_j$ is a constant factor approximation for $\min\{1,t_j/T_j\}$ which means that either $t_j\geq \Omega(T_j)$ or $t_j\geq \Omega(T_j/\log W)$ as $\phi_j\geq 1/\log W$. Let $C_3$ be the constant in this inequality, i.e., $t_j\geq  T_j/(C_3\log W)$. Moreover, as $T_j\geq |S_0|= \frac{C_2\log W}{\eps^2}$, we get that $t_j\geq \frac{C_2}{C_3\eps^2}\geq 2/\eps^2$ for $C_2\geq 2C_3$.


Now let $i$ be such that $2^{-(i+1)} \leq 2/(\eps^2 t_j) \leq 2^{-i}$. This means that by sampling the points with rate $2^{-i}$, in expectation, we sample $2^{-i}t_j \geq 2/\eps^2$ points from $R_j$ and moreover, with high probability we will sample at least $1/\eps^2$ points from $R_j$ (here we used the fact that since $i'(j)\neq -1$, we have $t_j\geq \log D$ and thus we can apply Chernoff bound).
Furthermore, in expectation, we will sample $2^{-i} T_j \leq 4 T_j / (t_j \eps^2) \leq 4C_3\cdot \log W /\eps^2$ points from intervals $R_0,\dots, R_{j-1}$. Therefore, since we keep $m_2 = C_2 \log W / \eps^2$ smallest sampled points in $S_i$, by choosing $C_2$ large enough, i.e., $C_2\geq (4C_3+2)$, we will store all sampled points of $R_j$ in $S_i$ as well. Therefore, $i(j)\neq -1$.
\end{proof}

\begin{lemma}[Large Contributing Sets have Small Relative Error]\label{clm-mult-4}
If $i'(j)\neq -1$ and $\phi_j\geq \frac{1}{\log W}$, then 
we have that $\sum_{x\in S_{i(j)}\cap R_j} 2^{i(j)} (q-x)$ is a $(1+2\eps)$ approximation of $\Val_j$.
\end{lemma}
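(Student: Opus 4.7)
The plan is to view $\hat{\Val}_j := 2^{i(j)}\sum_{x\in S_{i(j)}\cap R_j}(q-x)$ as an unbiased Horvitz--Thompson-style estimator of $\Val_j$ and establish multiplicative concentration via Bernstein's inequality, exploiting the crucial structural fact that every $x\in R_j$ satisfies $(q-x)\in[A,2A]$ with $A := D/2^j$, so the values being averaged differ by at most a factor of two.

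First I would appeal to the argument already inside the proof of \cref{clm-mult-3}: the constant $C_2$ in the cap $m_2 = C_2\log W/\eps^2$ is chosen so that the competing survivors from $\bigcup_{k<j} R_k$ sampled at rate $2^{-i(j)}$ fit with slack under $m_2$, which means every point of $R_j$ selected by the rate-$2^{-i(j)}$ sampling is retained in $S_{i(j)}$. Introducing independent indicators $Z_x\sim\mathrm{Ber}(2^{-i(j)})$ for each $x\in R_j$, we therefore have the identity $\hat{\Val}_j = \sum_{x\in R_j} 2^{i(j)}(q-x)Z_x$, so that $\E[\hat{\Val}_j] = \Val_j$.

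Next I would apply Bernstein's inequality to this sum of bounded independent random variables. Each summand is at most $M = 2\cdot 2^{i(j)} A$, and using $(q-x)^2 \leq 2A(q-x)$, the total variance satisfies $\sigma^2 \leq 2A\cdot 2^{i(j)}\Val_j$. Combined with $\Val_j \geq t_j A$ and deviation $\eps\Val_j$, Bernstein gives
\[
\Pr\!\bigl[\,|\hat{\Val}_j - \Val_j| > \eps\Val_j\,\bigr] \;\le\; 2\exp\!\left(-\Omega\!\left(\tfrac{\eps^2 t_j}{2^{i(j)}}\right)\right).
\]
By the choice of $i(j)$, the expected number of samples $2^{-i(j)}t_j = \Omega(1/\eps^2)$ (this is exactly the Chernoff lower bound already set up in \cref{clm-mult-3}), which makes the Bernstein exponent a sufficiently large constant and yields $|\hat{\Val}_j - \Val_j| \leq 2\eps\Val_j$.

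The main obstacle is matching the per-scale failure probability to the budget required by \cref{lm-main}: since the bare guarantee $N_j \geq 1/\eps^2$ only yields constant-probability concentration, I expect to exploit the extra $\log W$ factor built into $m_2$ — whenever the cap at level $i(j)$ is not tight, $N_j$ is in fact $\Omega(\log W/\eps^2)$, pushing the Bernstein exponent to $\Omega(\log W)$ and letting a union bound over the $O(\log D)$ contributing scales absorb the losses. A secondary bookkeeping subtlety is that $i(j)$ is itself a random variable, which I would handle by conditioning on the level chosen and on the high-probability event (from \cref{clm-mult-3}) that all of $R_j$'s samples survive in $S_{i(j)}$.
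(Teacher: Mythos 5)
Your proposal is correct and follows essentially the same route as the paper: condition on the retention argument from Lemma~\ref{clm-mult-3} (all rate-$2^{-i(j)}$ samples of $R_j$ survive in $S_{i(j)}$), apply concentration that exploits the factor-two range $q-x\in[D/2^{j},D/2^{j-1}]$, and convert the resulting additive error into a $(1+2\eps)$ multiplicative one via $\Val_j\ge t_j D/2^{j}$; your Bernstein/Horvitz--Thompson phrasing is just a cosmetic variant of the paper's ``uniform sample of size at least $1/\eps^2$'' argument. One caveat: your proposed boost of the per-scale failure probability via $N_j=\Omega(\log W/\eps^2)$ is not justified as stated (since $i(j)$ is the \emph{largest} qualifying level, the count at level $i(j)$ may be only $\Theta(1/\eps^2)$), but the paper's own proof likewise settles for the constant-exponent concentration coming from $2^{-i(j)}t_j=\Omega(1/\eps^2)$, so this does not constitute a gap relative to the paper.
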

\begin{proof}
As in the previous lemma, let $i$ be such that $2^{-(i+1)} \leq  2/(\eps^2 t_j)	\leq 2^{-i}$. Therefore, by similar arguments to the above lemma, we know that $i(j)\geq i$, which further means that all sampled points from $R_j$ are kept in $S_{i(j)}$. We thus get a uniform sample of size at least $1/\eps^2$ from the interval $R_j$, which is enough for an additive $\eps t_j (D/2^{j-1})$ approximation as every point in the interval is contributing between $D/2^{j}$ and $D/2^{j-1}$. On the other hand, we have that $\Val_j \geq t_j D/2^{j}$. Thus this additive approximation translates to a $(1+2\eps)$ multiplicative approximation.
\end{proof}

The following two lemmas analyze the case of $\frac{\eps}{\log W}\leq \phi_j\leq \frac{1}{\log W}$.
\begin{lemma}[Enough Samples are Found from Small  Contributing Sets] \label{clm-mult-5}
If $i'(j)\neq -1$ and $\frac{\eps}{\log W}\leq \phi_j\leq \frac{1}{\log W}$, then if $\Val_j \geq \Val (\eps/\log W)$, we have $i(j)\neq -1$.
\end{lemma}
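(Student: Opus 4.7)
The plan is to exhibit an explicit sampling level $i^\star \ge 0$ for which, with high probability, the stored set $S_{i^\star}$ contains at least $N := (\phi_j \log W / \eps)^2$ points from $R_j$; this yields $i(j) \ge i^\star$ and hence $i(j) \ne -1$.

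First, I would use Lemma~\ref{clm-mult-2} to write $t_j = \Theta(\phi_j T_j)$. Combined with the standing bound $T_j \ge m_2 = C_2 \log W / \eps^2$, this gives $t_j = \Omega(\phi_j \log W / \eps^2)$, and since $\phi_j \le 1/\log W$ one verifies that $t_j \ge 4N$ provided the constant $C_2$ is taken large enough. This guarantees that an integer $i^\star \ge 0$ with $2^{-(i^\star+1)} \le N/t_j \le 2^{-i^\star}$ actually exists.

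Picking that $i^\star$, I would argue both (a) the expected number of $R_j$-samples at rate $2^{-i^\star}$ lies in $[N, 2N]$, and (b) the expected total number of sampled points in $\bigcup_{k \le j} R_k$ at this rate is at most $m_2/2$. The latter is the main place $\phi_j \le 1/\log W$ is used: the total is at most
\[
2^{-i^\star} T_j \;=\; \frac{2^{-i^\star} t_j}{\Theta(\phi_j)} \;=\; O\!\left(\frac{N}{\phi_j}\right) \;=\; O\!\left(\frac{\phi_j \log^2 W}{\eps^2}\right) \;\le\; O\!\left(\frac{\log W}{\eps^2}\right) \;\le\; \tfrac{1}{2} m_2,
\]
for a suitable choice of $C_2$. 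A standard multiplicative Chernoff bound then yields, with high probability, that the number of sampled $R_j$-points is $\ge N$ while the total number of samples from $\bigcup_{k \le j} R_k$ is $\le m_2$, so every sampled $R_j$-point survives among the $m_2$ smallest points stored in $S_{i^\star}$.

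The main obstacle is that $N$ can be as small as $O(1)$ (when $\phi_j$ sits near its lower bound $\eps/\log W$), so direct Chernoff on an expectation of $N$ is too weak for a high-probability statement. The fix is to slightly oversample: choose $i^\star$ so that the expected $R_j$-sample count is $\Theta(\max\{N, C_3 \log n\})$ for a large constant $C_3$. The hypothesis $\Val_j \ge \Val \cdot (\eps / \log W)$, combined with $\Val \ge T_j \cdot D/2^{j-1}$ and $\Val_j \le t_j \cdot D/2^{j-1}$, forces $t_j \ge (\eps/\log W) T_j$, which provides enough points in $R_j$ to absorb the extra $\log n$ factor; since $\phi_j \le 1/\log W$ still controls the ratio $T_j/t_j$, the expected total sample count at this rate remains $O(m_2)$, and the same Chernoff concentration argument closes the proof.
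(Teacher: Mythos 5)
Your proposal follows essentially the same route as the paper's proof: it derives $t_j \gtrsim (\eps/\log W)\,T_j$ from the hypothesis on $\Val_j$, picks the sampling level whose expected number of $R_j$-samples is on the order of $(\phi_j\log W/\eps)^2$, uses $\phi_j\le 1/\log W$ together with $\phi_j=\Theta(t_j/T_j)$ (Lemma~\ref{clm-mult-2}) to bound the expected number of samples from $\bigcup_{k\le j}R_k$ by $O(\log W/\eps^2)\le m_2$, and concludes via Chernoff that all sampled $R_j$-points are retained in $S_{i^\star}$, hence $i(j)\neq -1$. Your additional oversampling remark even addresses a concentration point the paper treats loosely (its expected $R_j$-sample count can be $O(1)$ when $\phi_j\approx\eps/\log W$), so the proposal is correct and matches the paper's argument.
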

\begin{proof}
Note that $V_j\geq V(\eps/\log W)$ means that $t_j \geq T_j (\eps/\log W)$, as otherwise if $t_j <T_j (\eps/\log W)$, we have that the total contribution of $R_j$ is at most $\Val_j \leq t_j D / 2^{j-1}$, while we have that $\Val \geq T_j D/2^{j-1}$; which means that $\Val_j \leq T_j(\eps / \log W)  D / 2^{j-1} \leq (\eps /\log W)\Val$, which is a contradiction.

Now let $i$ be such that $2^{-(i+1)} \leq 2(\phi_j \log W)^2/(\eps^2 t_j) \leq 2^{-i}$ \footnote{Note that again by the conditions of the lemma on $\phi_j$, the fact that $\phi_j$ approximates $t_j/T_j$, and the fact that $T_j\geq C_2\log W /\eps^2$, this value is always at most $1$ and therefore such $i$ exists.}. This means that in expectation, we sample $2^{-i}t_j \geq 2(\phi_j \log W/\eps)^2$ points from $R_j$ and moreover, with high probability we will sample at least $(\phi_j \log W/\eps)^2$ points from $R_j$.
Furthermore, in expectation, we will sample $2^{-i} T_j \leq 4 T_j (\phi_j \log W)^2 / (t_j \eps^2) \leq 4\phi_j \log^2 W /\eps^2 \leq 2C_3\log W/\eps^2$  points from intervals $R_0,\dots, R_{j-1}$, where in the last inequality we used the fact that $\phi_j\leq 1/\log W$, and $C_3$ is a constant with which $\phi_j$ approximates $t_j/T_j$. Therefore, since we keep $m_2 \geq (C_3+2)\log W / \eps^2$ smallest points in $S_i$, we will store all sampled points of $R_j$ in $S_i$. Therefore, $i(j)\neq -1$.
\end{proof}

\begin{lemma}[Small Contributing Sets have Small Additive Error]\label{clm-mult-6}
If $i'(j)\neq -1$ and $\frac{\eps}{\log W}\leq \phi_j\leq \frac{1}{\log W}$, then if $\Val_j \geq \Val (\eps/\log W)$, we have that $\sum_{x\in S_{i(j)}\cap R_j} 2^i (q-x)$ is a $\frac{\eps}{\log W}$ additive approximation of $\Val$.
\end{lemma}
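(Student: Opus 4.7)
The plan is to treat $Y := \sum_{x \in S_{i(j)} \cap R_j} 2^{i(j)}(q-x)$ as a scaled sum of independent Bernoulli samples and bound its deviation from $\Val_j$ by a second-moment / Bernstein argument. Lemma \ref{clm-mult-5} already guarantees that $i(j)\neq -1$ and, crucially, that every point of $R_j$ that was sampled at rate $p = 2^{-i(j)}$ actually survives into $S_{i(j)}$. So each $x_\ell \in R_j$ appears in the sum independently with probability $p$, making $Y$ an unbiased estimator of $\Val_j$.

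First I would compute the variance. Since $(q-x_\ell)\in[D/2^j, D/2^{j-1}]$ for $x_\ell \in R_j$,
\[
\Var(Y) = 2^{2i(j)} p(1-p)\!\!\sum_{x_\ell\in R_j}\!\!(q-x_\ell)^2 \;\leq\; 2^{i(j)}\, t_j (D/2^{j-1})^2.
\]
Using the bound $2^{i(j)} \leq \eps^2 t_j / (2(\phi_j \log W)^2)$ read off from the proof of Lemma \ref{clm-mult-5}, together with $t_j\cdot D/2^{j-1}\leq 2\Val_j$, this becomes $\Var(Y) \leq 2\eps^2 \Val_j^2 / (\phi_j \log W)^2$.

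Next I would relate $\Val_j$ to $\phi_j \Val$. Every point in $\bigcup_{k<j} R_k$ contributes at least $D/2^{j-1}$ to $\Val$, so $\Val \geq T_j\cdot D/2^{j-1}$, while $\Val_j \leq t_j\cdot D/2^{j-1}$. Combining with Lemma \ref{clm-mult-2}, which gives $\phi_j = \Theta(t_j/T_j)$ in the present regime $\phi_j < 1$, yields $\Val_j = O(\phi_j\,\Val)$ and hence
\[
\Var(Y) \;=\; O\!\left(\eps^2 \Val^2 / \log^2 W\right).
\]

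Finally, to turn variance into a high-probability additive bound I would also bound the maximum individual term $M = 2^{i(j)}\cdot D/2^{j-1}$. The same upper bound on $2^{i(j)}$, together with $\Val_j = O(\phi_j\Val)$ and the hypothesis $\phi_j \geq \eps/\log W$, gives $M = O(\eps\Val/\log W)$. Bernstein's inequality then yields $|Y - \Val_j|\leq O(\eps\Val/\log W)$ with probability $1-1/\poly(W)$, which withstands a union bound over the $O(\log D)$ intervals $R_j$; the $\polylog(W)$ slack is absorbed into the constants $C_1, C_2$ defining $m_1, m_2$. The main subtlety I anticipate is the regime where $\phi_j$ is near the lower threshold $\eps/\log W$: there the per-sample magnitude $M$ is of the same order as the target error, so a pure Chebyshev bound would lose a $\sqrt{\log W}$ factor under the union bound over intervals. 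Using Bernstein (whose exponential tail kicks in precisely when $\sigma$ and $M$ are comparable), or equivalently boosting by repetition at the same rate $2^{-i(j)}$, is what closes this gap and delivers the claimed $\eps\Val/\log W$ additive error.
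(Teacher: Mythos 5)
Your proposal is correct and follows essentially the same route as the paper: treat $S_{i(j)}\cap R_j$ as a uniform sample of size $(\phi_j\log W/\eps)^2$ from $R_j$ (all sampled points being retained, by the argument of Lemma~\ref{clm-mult-5}), bound the deviation of the rescaled sum from $\Val_j$ at the level of its standard deviation, and convert this into an additive $\frac{\eps}{\log W}\Val$ error using $\Val\ge T_j D/2^{j-1}$ together with $\phi_j=\Theta(t_j/T_j)$ from Lemma~\ref{clm-mult-2}. One caveat: your closing claim that Bernstein yields a $1-1/\poly(W)$ guarantee is overstated in the regime $\phi_j\approx\eps/\log W$, where the sample size is $O(1)$ and $\sigma$, $M$, and the target deviation are all of the same order, so Bernstein (like Chebyshev) gives only constant success probability per scale --- but this does not affect the lemma as stated, and the paper's own proof likewise settles for a standard-deviation-level bound.
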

\begin{proof}
As in the previous lemma, let $i$ be such that $2^{-(i+1)} \leq  2(\phi_j \log W)^2/(\eps^2 t_j)	\leq 2^{-i}$. Therefore, by similar arguments to the above lemma, we know that $i(j)\geq i$, which further means that all sampled points from $R_j$ are kept in $S_{i(j)}$. We thus get a uniform sample of size at least $(\phi_j \log W/\eps)^2$ from the interval $R_j$, which is enough for an additive $\frac{\eps}{\phi_j\log W} t_j (D/2^{j-1})$ approximation as every point in the interval is contributing between $D/2^{j}$ and $D/2^{j-1}$. However, as $\Val \geq T_j D/2^{j-1}$, this translates to a $\frac{\eps}{\phi_j \log W}\frac{t_j}{T_j}\Val= \frac{\eps}{\log W}\Val$ additive approximation. 
\end{proof}

We now prove \Cref{lm-main}, which states that the algorithm returns a $(1+O(\eps))$ multiplicative approximation.
\begin{proof}
Let us consider the following cases separately.
\begin{itemize}
\item For $j=0$, we get the exact contribution of the points in $R_0$.
\item For $j$ where $i'(j)=-1$, we know by lemma \ref{clm-mult-1} that $\frac{t_j}{T_j}\leq 2\eps/\log W$. Therefore the total contribution of all $\Val_j$ for such $j$ is at most $2\eps\Val$.
\item Moreover, whenever $i(j)=-1$, using Lemmas \ref{clm-mult-3} and \ref{clm-mult-5}, we know that $\Val_j \leq \Val (\eps/\log W)$. Summing over all such $j$, we get a total additive error of $\eps\Val$.
\item Now consider all $j\geq 1$ such that $\phi_j \geq \frac{1}{\log W}$. By Lemma \ref{clm-mult-3} and Lemma \ref{clm-mult-4}, we get a $(1+2\eps)$ multiplicative approximation of their contribution. 

\item Finally consider all $j\geq 1$ such that $\frac{\eps}{\log W} \leq \phi_j \leq \frac{1}{\log W}$. If  $\Val_j \geq \Val (\eps/\log W)$, by lemma \ref{clm-mult-6}, we get an additive  $\frac{\eps}{\log W}\Val$ approximation of their contribution. Summing over all such $j$, this will give a $(1+\eps)$ multiplicative approximation.

However if $\Val_j < \Val (\eps/\log W)$ and  $i(j)\neq -1$ for such $j$, it means that we have sampled points in $R_j$ with probability $2^{-i(j)}$ and potentially kept only some of them in $S_{i(j)}$ this only causes an under estimation of the contribution of $R_j$. Note that, the samples that the algorithm has chosen to keep in $S_{i(j)}$ might be biased towards the smaller end of the interval, i.e., $D/2^{j-1}$, however this can only cause an over estimation by a factor of $2$. Therefore, for such a $j$, we have $0\leq \sum_{x\in S_{i(j)}\cap R_j} 2^{i(j)} (q-x) \leq 2(1+\eps)\Val_j$. Again because $\Val_j\leq \Val \eps /\log W$, the total error of such $j$ will be at most a multiplicative $(1+\eps)$ factor.
\end{itemize}
\end{proof}

We then have the following corollary.

\begin{corollary}
There exists a one pass streaming algorithm that computes a $(1+\eps)$ multiplicative approximation for point estimation variant of the problem in one dimensional case. Moreover, if the points come from $[W]$, the space usage of the algorithm is $O(\frac{\log^2 n\cdot \log W}{\eps}(\log n+1/\eps))$ bits.
\end{corollary}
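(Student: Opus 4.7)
The proof is essentially a packaging of the components already established in the section. My plan is to (i) observe that the sketch maintenance is indeed a one-pass streaming procedure, (ii) invoke \Cref{lm-main} for the multiplicative approximation guarantee, and (iii) bound the bit complexity from \Cref{obs-1dmult-space}, possibly tightening the constants.

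For streaming maintenance, the sets $E_i$ and $S_i$ are each defined as the $m_1$ (resp.\ $m_2$) smallest sampled points at sampling rate $2^{-i}$. These can be maintained with a one-pass reservoir-style procedure: upon the arrival of each $x_j$, independently for each $i\in\{0,\dots,\log n\}$ flip a coin of bias $2^{-i}$, and if $x_j$ is sampled and is smaller than the current largest element in the corresponding set (or the set has fewer than $m_1$ or $m_2$ elements), insert $x_j$ and evict the current maximum if needed. This clearly requires only one pass over the stream.

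For correctness, \Cref{lm-main} shows that, given the final sketch, the query algorithm returns a $(1+O(\eps))$ multiplicative approximation to $\sum_{i:x_i\le q}(q-x_i)$. Rescaling $\eps$ by a constant factor gives the claimed $(1+\eps)$ bound. To handle the full setup from \eqref{eqn:ridgeF} with arbitrary $\theta\in[-1,1]$, as noted in the setup paragraph for point estimation, we maintain two independent copies of this one-dimensional sketch (one corresponding to $\theta=+1$ and one to $\theta=-1$), and for a query $(\theta,b)$ rescale $b$ appropriately before invoking the decoder on the appropriate copy; this changes the space by only a constant factor. Since $y_i\in\{-1,+1\}$ can also be handled by maintaining independent sketches per label class, the overall procedure answers SVM point estimation queries.

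For space, each stored point is an integer in $[W]$ and thus uses $O(\log W)$ bits; for each of the $O(\log n)$ sampling rates we keep $O(M)$ points with $M=\max\{m_1,m_2\}=O(\tfrac{\log^2 W}{\eps}+\tfrac{\log W}{\eps^2})$, giving the bound stated in \Cref{obs-1dmult-space}. Rearranging the dominant term yields the form $O(\tfrac{\log^2 n\cdot \log W}{\eps}(\log n+1/\eps))$ as stated (assuming $W\le \poly(n)$, which is without loss of generality after standard preprocessing of the stream into a universe of size $\poly(n)$). No step of the proof is subtle; the routine bookkeeping of parameters is the only thing to be careful about.
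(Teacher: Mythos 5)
Your first two steps (one-pass maintenance of the sets $E_i,S_i$ and correctness via \Cref{lm-main}) are fine and agree with the paper. The gap is in the space bound, which is the entire content of the corollary beyond \Cref{obs-1dmult-space}. The observation gives $O\bigl(\tfrac{\log n\cdot\log^2 W}{\eps}(\tfrac1\eps+\log W)\bigr)$ bits, whereas the corollary claims $O\bigl(\tfrac{\log^2 n\cdot\log W}{\eps}(\log n+\tfrac1\eps)\bigr)$, i.e.\ the factors of $\log W$ that originate from the number of contributing distance scales (the per-scale error budget and sample thresholds entering the choice of $m_1$ and $m_2$) are replaced by $\log n$. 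You obtain the claimed form only by asserting that $W\le\poly(n)$ ``without loss of generality after standard preprocessing.'' That step is not justified: there is no generic reduction of the universe to size $\poly(n)$ that preserves a $(1+\eps)$-multiplicative approximation of $\sum_{x_i\le q}(q-x_i)$ simultaneously for all queries $q$, since the answer depends on the actual magnitudes of the gaps at every scale (not merely on the order of the points), and any rounding would have to be done at a scale that is unknown before the query arrives. Moreover, even granting that assumption, you would only establish the bound in the regime $\log W=\Theta(\log n)$, where the claimed bound coincides with \Cref{obs-1dmult-space} and the corollary says nothing new; for $\log W\gg\log n$, which is the case the statement is really about, ``rearranging the dominant term'' of the observation's bound does not yield the stated expression.

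The missing idea, which is the paper's entire proof, is a truncation of the scales: for $j\ge 2\log n$ every point of $R_j$ is within distance $D/2^{j-1}\le 2D/n^2$ of $q$, so all such scales together contribute $O(D/n)$, while the value of the solution is at least $D$ (the point $p$ alone contributes $q-p=D$); hence these far scales can be ignored at the cost of a negligible relative error. Consequently only $O(\log n)$ intervals $R_j$ matter, the per-scale error budget can be taken as $\eps/\log n$ instead of $\eps/\log W$, and the corresponding $\log W$ factors in $m_1$ and $m_2$ can be replaced by $\log n$; this is exactly what turns the bound of \Cref{obs-1dmult-space} into the one claimed in the corollary, for arbitrary $W$.
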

\begin{proof}
Note that in the above algorithm we do not need to consider $R_j$ for which $j\geq \log n^2=2\log n$, as the overall contribution of such points to the solution is as most $D/n$ whereas the value of the solution is at least $D$. Thus we can bound one of the $\log W$ in the bound of Observation \ref{obs-1dmult-space} by $\log n$.
\end{proof}

\subsection{Lower Bounds}
We now show that one cannot hope to get a sketch with
multiplicative approximation in higher dimensions than one with a
bound independent of $n$. In fact we show the following additive
approximation lower bound for any sketching algorithm (and hence
streaming algorithm as well).

\begin{theorem}
For any $d\geq 1$, $\eps\in(0,1)$, and $n\ge 1/\eps$, there exists an
instance of the problem, where the point set has diameter $O(1)$, such that getting an algorithm with additive
approximation factor $\eps$, requires space of
$\Omega(\eps^{-(d+1)/(d+3)})$ bits.

Moreover, getting a $(1+\eps)$-multiplicative approximation for $d\geq
2$ requires $\Omega(n)$ space.
\end{theorem}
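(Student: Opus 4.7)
The plan is to reduce both lower bounds from the one-way communication complexity of the Indexing problem, in which Alice holds $s\in\{0,1\}^k$, Bob holds an index $j\in[k]$, and after a single message from Alice, Bob must output $s_j$; this is known to require $\Omega(k)$ bits.

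For the additive bound, the plan is to pick $k=\Theta(\eps^{-(d+1)/(d+3)})$ carefully placed locations $v_1,\dots,v_k$ inside the unit ball in $\R^d$. Alice maps her string to a stream by inserting $n/k$ copies of $v_i$ whenever $s_i=1$ (padding with harmless dummy points to bring the total to exactly $n$), feeds the stream into the assumed sketching algorithm, and sends the resulting sketch to Bob. Bob decodes $s_j$ using $O(1)$ point-estimation queries and a discrete second difference in $b$: for a direction $\theta$ aligned with $v_j$, the combination $\hat{F}(\theta,b+h)-2\hat{F}(\theta,b)+\hat{F}(\theta,b-h)$ estimates, up to additive $O(\eps)$, the quantity $(h/n)\cdot|\{i:\theta\cdot x_i\in(b-h,b]\}|$. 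If $(\theta,b,h)$ is chosen so that the slab $\{x:\theta\cdot x\in(b-h,b]\}$ contains $v_j$ but no other $v_i$, this quantity equals exactly $h/k$ when $s_j=1$ and $0$ when $s_j=0$, so Bob decodes correctly as long as $h/k=\Omega(\eps)$, i.e., $h=\Omega(k\eps)$.

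The main obstacle---and where the geometry becomes delicate---is picking the $v_i$'s and the query parameters to simultaneously satisfy the signal requirement $h=\Omega(k\eps)$ and the isolation requirement that $\theta\cdot v_j$ has a gap of at least $h$ to every other projection $\theta\cdot v_i$. For $d=1$, take $v_i=i/k$, fix $\theta=1$, and pick any $h$ with $\Omega(k\eps)\le h\le 1/k$; feasibility forces $k=O(\eps^{-1/2})$, matching the claimed exponent $(d+1)/(d+3)=1/2$. For $d\ge 2$, we exploit the additional freedom to choose the direction $\theta$ (together with flexibility in the placement of the $v_i$'s inside the ball) to improve the achievable gap-versus-density trade-off, producing a configuration in which the best projection gap around any $v_j$ is of order $k^{-(d+1)/(d+3)}$. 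Setting this equal to $\Omega(k\eps)$ and solving for $k$ yields $k=\Theta(\eps^{-(d+1)/(d+3)})$; since Bob decodes from a sketch of size $s$, Indexing yields $s=\Omega(k)$.

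For the $(1+\eps)$-multiplicative lower bound in $d\ge 2$, the plan uses a simpler direct reduction. Embed a copy of $S^1$ inside $\R^d$ and place $n$ positions $v_1,\dots,v_n$ at equally spaced angles on it, so that each $v_i$ is a vertex of their convex hull. Alice encodes $s\in\{0,1\}^n$ by inserting $v_i$ iff $s_i=1$. Bob queries $(\theta,b)=(-v_j,-1+\delta)$, choosing $\delta=\Theta(1/n^2)$ small enough that $v_j\cdot v_i\le 1-\delta$ for all $i\ne j$, so that only $v_j$ can possibly contribute to $F$. Then $F=\delta/n>0$ if $s_j=1$ and $F=0$ if $s_j=0$; any $(1+\eps)$-multiplicative approximation must distinguish zero from any positive value, so Bob recovers $s_j$ and Indexing forces $\Omega(n)$ bits. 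The assumption $d\ge 2$ is essential: embedding $n$ distinct vertices on a one-dimensional sphere is impossible for $n>2$.
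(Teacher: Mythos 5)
Your $d=1$ additive argument and your multiplicative argument for $d\ge 2$ are correct (the latter is essentially the paper's: $n$ points on a circle, a query that isolates one location, and the fact that a $(1+\eps)$-multiplicative answer must distinguish $0$ from a positive value), and your second-difference decoder is a genuinely different and pleasant device: because $\hat F(\theta,b+h)-2\hat F(\theta,b)+\hat F(\theta,b-h)$ only sees points whose projections lie in $(b-h,b+h)$, you can reduce from plain \textsc{Indexing}, whereas the paper reduces from \textsc{Augmented Indexing} and has Bob subtract the (known) contribution of earlier bits from a single query.

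The genuine gap is the case $d\ge 2$ of the additive bound, which is the main content of the theorem, and your proposal does not contain it. First, the arithmetic is internally inconsistent: with $k$ locations and $n/k$ copies per location, the signal in your decoder is $h/k$, so you need $h=\Omega(k\eps)$; for the target $k=\Theta(\eps^{-(d+1)/(d+3)})$ this means a two-sided projection gap $h=\Theta(k^{-2/(d+1)})=\Theta(\eps^{2/(d+3)})$, not $k^{-(d+1)/(d+3)}$ as you assert, and indeed setting $k^{-(d+1)/(d+3)}=\Theta(k\eps)$ gives $k=\Theta(\eps^{-(d+3)/(2d+4)})$ (e.g.\ $\eps^{-5/8}$ for $d=2$), not $\eps^{-(d+1)/(d+3)}$. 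Second, and more importantly, no construction is given: the sentence ``we exploit the additional freedom to choose the direction $\theta$\dots'' is exactly the heart of the proof. The paper's construction places an $\ell$-net of $s=\Theta(\ell^{-(d-1)})$ directions on each of $r=\Theta(\ell^{-2})$ nested spheres with radial spacing $\Theta(\ell^2)$, giving $k=sr=\ell^{-(d+1)}$ locations with the required \emph{one-sided} margin $\Theta(\ell^2)=k^{-2/(d+1)}$; but it does \emph{not} satisfy your \emph{two-sided} isolation requirement, since locations at nearby net directions on larger radii project inside the window $(b-h,b+h)$ (this is precisely why the paper orders the bits by radius and uses \textsc{Augmented Indexing}, so Bob can subtract those known contributions). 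Your second-difference decoder forces the strictly stronger condition that every candidate location admit an empty slab of width $2h\approx k^{-2/(d+1)}$ containing no other candidate location, and it is not at all clear that $k=\Theta(h^{-(d+1)/2})$ points in the unit ball with this two-sided property exist; a naive union bound over directions or over generic radial perturbations fails by polynomial factors. Without either exhibiting such a configuration and proving the isolation, or reverting to the one-sided construction together with \textsc{Augmented Indexing} as in the paper, the $\Omega(\eps^{-(d+1)/(d+3)})$ bound for $d\ge 2$ is not established.
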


We note that while this theorem formally applies to the simplified
objective, it immediately translates to the SVM objective as well as
we only use points with one label (when the problems are exactly
equivalent).

\begin{proof}
We prove this theorem by a reduction from the standard {\sc Augmented
  Indexing} problem.  In this problem, Alice is given a bit string $s$ of
length $m$ and Bob has an index $i\in [m]$ as well as bits $s_1,\ldots,
s_{i-1}$. The goal is for Alice to send a message to Bob so that he can
recover $s_i$, the $i$th bit in Alice's string. It is a standard fact that
this requires $\Omega(m)$ bits.

\subparagraph*{Case of $d=1$.} First consider the one dimensional case and
let $r = 1/\sqrt \eps$. Suppose that Alice holds a string of length
$r$, termed $s_1\ldots s_r$. From that, she will construct an instance of our point estimation
sketching problem. For each $0\leq i < r$, if her $i$th bit is one,
she will put $n/r = n\sqrt \eps$ points in position $3i/r =  3i\sqrt
\eps$. Otherwise if the bit is $0$, she will not put any point
there. Thus all points will be positioned in $[0,1)$ with the diameter
  less than $1$.

  To learn the $i$th bit, Bob will query the presumed sketch with $b = 3(i+1)\sqrt
  \eps$, obtaining a value $v$. Bob will
  subtract the contribution from the points associated with the first $i-1$ bits.
  Note that the resulting value is $\tfrac{1}{n}\cdot s_i\cdot
  n\sqrt{\eps}\cdot 3\sqrt{\eps}$, up to an additive error
  $\eps$. Therefore it is possible to recover the value of the encoded bit. Hence
  the one-dimensional point estimation problem cannot be solved in
  space less than $\Omega(r) = \Omega(1/\sqrt \eps)$.

\subparagraph*{Case of $d=2$.} Next, consider the two dimensional case, and for parameters $s$ and $r$ (to be specified later), consider the $s \times r$ potential positions inside a circle of unit radius. More specifically, for $1\leq i\leq s, 1\leq j\leq r$, the $(i,j)$-th position is the point at angle $2\pi i/s$ and at radius $1- \frac{j-1}{2r}$. These positions correspond to the $sr$ bits in the index problem held by Alice. For the $((j-1) s+i)$-th bit in her bit-string, Alice will put $n/(sr)$ actual points at the $(i,j)$-th position described above iff the corresponding bit is equal to $1$. She will then send her point set to Bob. 

Bob can recover any bit of Alice using hyperplane
queries. Specifically, in order to figure out the $((j-1)s+i)$-th bit
of Alice, Bob can ask the hyper-plane corresponding to $\theta = 2\pi
i / s$ and $b= 1-j/(2r)$; and subtract the contribution of points
corresponding to bits up to $((j-1)s+i)$. Note that, if there is no
point at that location, the result should be $0$, otherwise it should be $(n/(sr))\cdot (1/(2r))$. Thus, if the algorithm has an additive approximation less than $\tfrac{1}{3}(n/(sr))
\cdot (1/(2r))$, it can correctly recover the respective bit in Alice's
input.

Also, note that the above means that no multiplicative approximation is possible unless Alice sends $s$ bits. But setting $s=n$ and $r=1$, this will require Alice to send all her input to Bob. 


Note that for the hyperplane to include only the $i$th point from the
tier $j$ circle, we need to set $r$ so that $1/(2r)\approx
(1-\cos(2\pi/s)) \approx 2\pi^2 / s^2$. Thus we need additive
approximation $\Theta(n/s^5)$. We set $s = \eps^{-1/5}$, and get that
the total size of the sketch is at least $\Omega(sr) =
\Omega(\eps^{-3/5})$.

\subparagraph*{Case of $d>2$.} To generalize the result to higher
dimensions, we put $s$ points uniformly on the $d-1$ dimensional
unit sphere and repeat this for $r$ different radii as before. More
precisely, using we put an $\ell$-net on the surface of the unit
sphere. It is a standard fact that we can have $s =\Theta(1/\ell)^{d-1}$ points on the surface so that their pairwise distance is at least $\ell$. Similar to the two-dimensional case, we have that $(1/2r) \approx (1-\cos (O(\pi/\ell))) \approx O(1/\ell^2)$. Therefore, the additive approximation the algorithm can tolerate is $\frac{n}{sr}\cdot\frac{1}{2r}$ which should be at most $n\eps$ and therefore, we get that $\eps = 1/(sr^2)$. Inserting the values of $s$ and $r$ using the value of $\ell$, we get that $\eps = \ell^{d+3}$. As the space lower bound for the index problem is $\Omega(sr)$, we get that the space requirement for our problem is $sr = 1/(\eps r) = \ell^{2} / \eps = \eps^{2/(d+3)}/\eps = \eps^{-\frac{d+1}{d+3}}$.
\end{proof}

The preceding theorem gives at most an $\Omega(1/\eps)$ lower bound, leaving a quadratic gap from the simple random sampling algorithm of $\tilde O(d/\eps^2)$ bits. In fact, for high dimensions $d = \Omega(\log(1/\eps))$, we can prove a lower bound of $\Tilde{\Omega}(d/(\eps^2\polylog(1/\eps))$ bits, tight up to logarithmic factors. 


We consider a generalized version of the point estimation problem in the $\ell_p$ norm. We consider instead of \eqref{eqn:ridgeF} the following objective for a constant $p > 0$.
\[
F_p(\theta,b) = \frac{1}{n}\sum_{i=1}^n \left(\max\{0, b-\theta\cdot x_i\}\right)^p.
\]

The main result we are going to prove in this section is the following.
\begin{theorem}[high-dimensional]
Let $p \geq 0$ be a constant. There exist constants $C \in (0,1]$ and $\eps_0 > 0$ that depend only on $p$ such that the following holds. Let $d_0 = 2\log_2(C/(\eps\polylog(1/\eps))$. For any $\eps \in (0,\eps_0)$, $d \geq d_0$, $n\geq (d/d_0) 2^{d_0}$, any algorithm that approximates $F_p(\theta, 0)$ with an additive error $\eps$ and with probability at least $2/3$ requires $\Omega(d/(\eps^2 \polylog(1 / \eps)))$ bits of space.
\end{theorem}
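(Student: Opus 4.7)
The plan is to adapt the subspace-sketch lower bound framework of \cite{LWW20} to the hinge-loss (more generally, $(\cdot)_+^p$) objective. Let $\phi(t) = (\max\{0,t\})^p$. It suffices to prove the bound for $d = d_0 = \Theta(\log(1/\eps))$, at which point a standard direct-sum / concatenation argument gives the full $\Omega(d/(\eps^2\polylog(1/\eps)))$ bound: take $k = \Theta(d/d_0)$ independent hard instances of dimension $d_0$, embed them in orthogonal coordinate blocks of a single $d$-dimensional instance, and use the fact that a query on any single block reveals only that block's objective, so the sketch must contain $\Omega(k)$ independent sketches of the base problem.

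For the base case $d = d_0$ with $n = 2^{d_0} = \Tilde{\Theta}(1/\eps)$, I reduce from the Augmented Indexing problem on $N = \Tilde{\Theta}(1/\eps^2)$ bits. Enumerate $\{-1,1\}^d$ as $v_1,\dots,v_n$ and place the data points $x_i = r_i v_i / \sqrt{d}$, with nonnegative weights $r_i$ determined by Alice's input. Take the query $\theta \in \{-1,1\}^d$ (rescaled to the unit sphere), so that with $b=0$,
\[
F_p(\theta,0) \;=\; \frac{1}{n}\sum_i r_i\,\phi(\langle \theta, v_i\rangle)/d^{p/2} \;=\; \frac{1}{n d^{p/2}}\,\langle M_\theta, r\rangle,
\]
where $M \in \R^{n\times n}$ is defined by $M_{ij} = \phi(\langle i,j\rangle)$ for $i,j\in\{-1,1\}^d$, and $M_\theta$ is the $\theta$-th row. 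An additive-$\eps$ estimate of $F_p$ gives an estimate of $\langle M_\theta, r\rangle$ up to additive $\eps n\, d^{p/2} = \eps\, 2^d\,\poly(d)$.

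To encode bits into $r$, set
\[
r \;=\; \sum_{i=1}^{n} s_i\cdot\frac{M_i}{\|M_i\|_2} + \tau\cdot \mathbf{1},
\]
where $s_i\in\{\pm 1\}$ are i.i.d.\ Rademacher and $\tau = \poly(d)$ is a shift chosen so that $r\geq 0$ with high probability; by Hoeffding each coordinate of $\sum_i s_i M_i/\|M_i\|_2$ is $\poly(d)$ w.h.p., so such $\tau$ exists and contributes a fixed, subtractable quantity $\tau\,\langle M_\theta, \mathbf 1\rangle$ to the inner product. The heart of the proof is a spectral analysis of $M$: by Fourier analysis on the Boolean hypercube, the eigenvectors of $M$ are the rows of the $\pm 1$-normalized Hadamard matrix and the eigenvalues are the Fourier coefficients of $g(s) = \phi(d - 2w_H(s))$. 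I would compute the relevant coefficients (or bound them using the standard Krawtchouk-expansion of symmetric hypercube functions) to establish that at least $\Omega(2^d/\poly(d))$ eigenvalues have magnitude $\Omega(2^{d/2}/\poly(d))$. Let $\mathcal R$ index those eigenvectors; then for $\theta\in\mathcal R$ one has $\|M_\theta\|_2 = \Omega(2^{d/2}/\poly(d))$, while the projection of $M_\theta$ onto $\mathrm{span}\{M_j\}_{j\in\mathcal R\setminus\{\theta\}}$ has norm much smaller than $\|M_\theta\|_2$ (this near-orthogonality is the key structural consequence of the spectral gap).

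Given this, for $\theta\in\mathcal R$ the ``diagonal" contribution $s_\theta\|M_\theta\|_2$ dominates the off-diagonal noise in $\langle M_\theta, r\rangle$, so $\mathrm{sgn}\langle M_\theta, r\rangle = s_\theta$ as long as $\|M_\theta\|_2$ exceeds both the off-diagonal sum and the reduction's additive error $\eps 2^d \poly(d)$. For our choice $2^d = \Tilde\Theta(1/\eps)$ this is exactly $\Omega(2^{d/2}/\poly(d)) \gtrsim \eps\cdot 2^d\poly(d)$, so Bob can recover $s_\theta$ for every $\theta\in\mathcal R$, giving $|\mathcal R| = \Tilde\Omega(1/\eps^2)$ recovered bits and the claimed space lower bound. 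After the direct-sum lifting this becomes $\Omega(d/(\eps^2\polylog(1/\eps)))$. The main technical obstacle is the spectral claim: controlling the number and magnitude of large eigenvalues of $M$, and equivalently the tail of the Krawtchouk/Fourier expansion of $\phi(d-2w_H(\cdot))$ uniformly in $p$ (the paper's sketch does this for $p=1$; for general constant $p>0$ one needs to verify that enough Fourier mass sits on levels of weight $\Theta(d/2)$ so that the corresponding eigenvalues still have magnitude $\Omega(2^{d/2}/\poly(d))$). Rescaling the $x_i$'s and $\theta$ to unit norm costs only $\poly(d) = \polylog(1/\eps)$ factors, which are absorbed into the $\polylog$ in the denominator.
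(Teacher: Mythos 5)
Your overall strategy is the same as the paper's: reduce to the subspace-sketch framework of \cite{LWW20}, encode Rademacher bits in nonnegative scalars $r_i$ via $r=\sum_i s_i M_i/\|M_i\|_2$ plus a $\poly(d)$ shift, and recover $\Tilde\Omega(1/\eps^2)$ bits per block provided the matrix $M_{ij}=\phi(\langle i,j\rangle)$ has $\Omega(2^d/\poly(d))$ eigenvalues of magnitude $\Omega(2^{d/2}/\poly(d))$, followed by a block-diagonal concatenation to lift from $d_0=\Theta(\log(1/\eps))$ to general $d$. However, there is a genuine gap: the spectral claim, which you explicitly defer (``I would compute the relevant coefficients\dots''), is not a routine verification --- it is essentially the entire content of the paper's proof. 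Since the eigenvalues of $M$ are the Fourier coefficients of $g(s)=\phi(d-2w_H(s))$, one must lower-bound the mid-level coefficients of the one-sided function $\phi(t)=(\max\{0,t\})^p$. For $p\notin 2\Z$ this does reduce cheaply to the $|t|^p$ case of \cite{LWW20}, because summing $\phi(t)+\phi(-t)=|t|^p$ over the symmetric index set shows the relevant coefficient is exactly half of the $|t|^p$ one. But for even integer $p$ this shortcut fails: the polynomial part $t^p$ has Fourier support only on levels $\le p$, so it contributes nothing at weight $\Theta(d/2)$, and the surviving part $\tfrac12|t|^{p-1}t$ leads to delicate alternating sums. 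The paper handles this by evaluating the coefficients at Hamming weight $d/2-1$, splitting them into $S_{\mathrm{even}}$ and $S_{\mathrm{odd}}$ (and further $S_1,S_2$), and proving all pieces have the same sign via integral estimates of $\int_0^\infty \sin^{2n}t/t^{p-q+1}\,dt$ and Hurwitz zeta bounds --- roughly a page of analysis. Your hope that ``enough Fourier mass sits on levels of weight $\Theta(d/2)$'' is exactly what needs proof, and it does not follow from standard Krawtchouk tail estimates without this sign-coherence argument.

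A secondary, smaller point: you should also make explicit (as \cite{LWW20} does) why the spectral statement yields the near-orthogonal row subset $\mathcal R$ and why the fixed shift $\tau\langle M_\theta,\mathbf 1\rangle$ and the rescaling of $x_i,\theta$ to unit norm cost only $\poly(d)=\polylog(1/\eps)$ factors; these steps are as you describe and match the paper, so they are fine once the spectral claim is in place, but as written your proposal establishes the theorem only modulo its hardest ingredient.
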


\begin{proof}
First observe that an additive $\eps$ estimation to $F_p(\theta, 0)$ is an additive $(\eps n)$-approximation to $\Phi(-\theta)$, which is defined as
\[
\Phi(\theta) = \sum_{i=1}^n \phi((X\cdot \theta)_i),
\]
where $X$ is an $n\times d$ matrix with rows $x_1,\dots,x_n$ and $\phi(t) = t^p\mathbf{1}_{t\geq 0}$. This is similar to the subspace $\phi$-sketch problem defined in~\cite{LWW20} and we shall use the same approach to prove the lower bound with the change of $\phi(t) = |t|^p$ in~\cite{LWW20} to our new $\phi(t)$. Although the subspace $\phi$-sketch problem considers the multiplicative $(1+\eps)$-approximation to $\|X\theta\|_p^p$, the proof in essence uses the additive error, which was a cruel upper bound for $\eps\|X\theta\|_p^p$ in that paper. In the hard instance, the matrix $X$ is a block diagonal matrix, in which each diagonal block is of dimension $2^{d_0}\times d_0$, whose $j$-th row is $\{-1,1\}^{d_0}$, scaled by some $w_j =\Theta(d_0^{1/p})$. It is suffices to show an $\Omega(2^{d_0}/\polylog(d_0))$ lower bound for each diagonal block. To abuse the notation a bit, we shall denote each diagonal block by $X$. Note that $\|X\theta\|_p^p = O(2^{d_0}(d_0)^p)$ and so if the additive error is at most $O(\eps 2^{d_0}(d_0)^p)$, the lower bound will hold. In our case, it always holds that $0\leq \Phi(\theta) = n = 2^{d_0}$ and thus the proof will go through, provided that we show the matrix $M_{i,j}^{(d,p)} = \phi(\langle i,j\rangle)$ ($i,j\in \{-1,1\}^d$) has $N^{(d)} = \Omega(2^d/\poly(d))$ singular values of magnitude $\Omega(2^d/\poly(d))$. In the proof below, we follow the notation in~\cite{LWW20}.

\subparagraph*{The Case $p\in [0,\infty)\setminus 2\Z$.}
The proof is simple in this case. Following the same argument in \cite[Section 3]{LWW20}, we can show that $M_{i,j}^{(d,p)}$ has at least $N^{(d)}\geq \binom{d}{d/2}$ singular values of magnitude $\Lambda_0^{(d,p)} = \Omega_p(2^{d/2}/\sqrt d)$. One can verify that when $8|d$,
\[
\left| \sum_{\substack{0\leq i\leq d\\ \text{$i$ is even}}} (-1)^{i/2} \binom{d / 2}{i / 2}  \phi(d - 2i)\right| = \frac12\left| \sum_{\substack{0\leq i\leq d\\ \text{$i$ is even}}} (-1)^{i/2} \binom{d / 2}{i / 2}  |d - 2i|^p\right|.
\]
The right-hand side is exactly the quantity of interest in \cite[Lemma 3.2]{LWW20} and our claim follows from \cite[Lemma 3.4]{LWW20}. 

\subparagraph*{The Case $p\in 2\Z^+$.}
We modify \cite[Lemma 3.2]{LWW20} and consider instead the Fourier coefficients $\hat g(s)$ for $s\in \mathbb{F}_2^d$ with Hamming weight $d/2-1$, which is the same for all $\binom{d}{d/2-1}$ such $s$'s. Note that
\[
\hat g(s) = \sum_{i=0}^{d} \sum_{j=0}^{i} (-1)^j \binom{\frac{d}{2} - 1}{j}\binom{\frac{d}{2} + 1}{i-j} g(i).
\]
By comparing the coefficients of $x^i$ on both sides of the identity $(1+x)^{d/2-1}(1-x)^{d/2+1} = (1-x^2)^{d/2-1}(1+x)^2$, we see that
\begin{align*} 
\sum_{j=0}^{i} (-1)^j \binom{\frac{d}{2} - 1}{j}\binom{\frac{d}{2} + 1}{i-j} &= 
	\begin{cases}
		\displaystyle (-1)^{\frac{i}{2}}\left[\binom{\frac{d}{2} - 1}{\frac{i}{2}} - \binom{\frac{d}{2}-1}{\frac{i}{2}-1}\right], & i\text{ is even};\\
		\displaystyle 2(-1)^{\frac{i-1}{2}}\binom{\frac{d}{2} - 1}{\frac{i-1}{2}}, & i\text{ is odd}.
	\end{cases} \\
		&= \begin{cases}
		\displaystyle (-1)^{\frac{i}{2}}\frac{d-2i}{i}\binom{\frac{d}{2} - 1}{\frac{i}{2} - 1}, & i\text{ is even};\\
		\displaystyle 2(-1)^{\frac{i-1}{2}}\binom{\frac{d}{2} - 1}{\frac{i-1}{2}}, & i\text{ is odd}.
	\end{cases} 
\end{align*}
Split the summation over $i$ in $\hat g(s)$ into odd and even $i$'s, we have
\begin{align*}
\hat g(s) &= \sum_{\substack{\text{even }i\\ 0\leq i\leq\frac{d}{2}}} (-1)^{\frac{i}{2}}\frac{d-2i}{i}\binom{\frac{d}{2} - 1}{\frac{i}{2} - 1}(d-2i)^p + 2\sum_{\substack{\text{odd }i\\ 0\leq i\leq\frac{d}{2}}} (-1)^{\frac{i-1}{2}}\binom{\frac{d}{2} - 1}{\frac{i-1}{2}}(d-2i)^p\\
&=: S_{\text{even}} + 2S_{\text{odd}}.
\end{align*}
Suppose that $d=4n$ where $n$ is an even integer, we have
\begin{align*}
S_{\text{even}} &= 2^{2p+1} \sum_{\ell=0}^{n-1} (-1)^\ell \frac{n-\ell}{\ell} \binom{2n-1}{\ell-1} (n-\ell)^p \\
&= 2^{2p+1} \sum_{\ell=1}^{n} (-1)^\ell\frac{\ell}{n-\ell} \binom{2n-1}{n-\ell-1} \ell^p \\
&= 2^{2p+1} \sum_{\ell=1}^{n} (-1)^\ell\frac{\ell}{2n}\binom{2n}{n+\ell} \ell^p \\
&= \frac{2^{2p}}{n} \sum_{\ell=1}^{n} (-1)^\ell \binom{2n}{n+\ell} \ell^{p+1}
\end{align*}
Similarly, 
\begin{align*}
S_{\text{odd}} &= 2^{p+1} \sum_{\ell=1}^n (-1)^\ell \frac{\ell+n}{2n} \binom{2n}{n+\ell} (2\ell-1)^p \\
&= 2^{2p+1} \sum_{\ell=1}^n (-1)^\ell \frac{\ell}{2n} \binom{2n}{n+\ell} \left(\ell-\frac12\right)^p + \frac{2^{2p+1}}{2} \sum_{\ell=1}^n (-1)^\ell \binom{2n}{n+\ell} \left(\ell-\frac12\right)^p\\
&=: S_1 + S_2
\end{align*}
We claim that $S_1, S_2$ and $S_{\text{even}}$ have the same sign. We first show that $S_2$ and $S_{\text{even}}$ have the same sign.

Applying \cite[Lemma 3.3]{LWW20} with the binomial expansion
\[
\left(\ell-\frac12\right)^p = \sum_{q=0}^p \binom{p}{q}(-1)^{p-q}\ell^{p-q}\frac{1}{2^q},
\]
we see the summation in $S_2$ contains only the terms corresponding to odd $q$, and the terms are
\begin{multline*}
\binom{p}{q}\frac{(-1)^{p-q}}{2^q} 2^{2n-(p-q)}\frac{(p-q)!}{\pi} \left(\sin\frac{\pi(p-q)}{2}\right) \int_0^\infty \frac{\sin^{2n}t}{t^{p-q+1}} dt \\= -\frac{2^{2n-p}p!\pi}{q!}\left(\sin\frac{\pi(p-q)}{2}\right) \int_0^\infty \frac{\sin^{2n}t}{t^{p-q+1}} dt.
\end{multline*}
The first term $q=1$ has the same sign as $S_{\text{even}}$ and the terms alternate in signs (for odd $q$). When $p=2$, the only odd term is $q=1$ and we know that $S_2$ and $S_{\text{even}}$ have the same sign. For $p\geq 4$, it suffices to show that
\[
a_q = \frac{1}{q!}\int_0^\infty \frac{\sin^{2n} t}{t^{p-q+1}}dt
\]
is decreasing in $q=1,2,\dots,p-1$. Since
\[
a_q - a_{q+1} = \frac{1}{q!}\int_0^\infty \frac{\sin^{2n}t}{t^{p-q+1}}\left(1-\frac{t}{q+1}\right)dt,
\]
it is equivalent to showing that
\[
\sum_{m=0}^\infty \int_{m\pi}^{(m+1)\pi}\frac{\sin^{2n}t}{t^{p-q+1}}\left(1-\frac{t}{q+1}\right)dt > 0.
\]
It is clear that the integral is dominated by the values on $[(m+1/2-\delta)\pi,(m+1/2+\delta)\pi]$ for large $n$ with an overall additive value of $O_p(\sin^n((\frac12-\delta)\pi)) = O_p(\exp(-c\delta n))$ for some absolute constant $c$, hence it suffices to show that
\[
\sum_{m=0}^\infty \int_{(m+\frac12-\delta)\pi}^{(m+\frac12+\delta)\pi}\frac{\sin^{2n}t}{t^{p-q+1}}\left(1-\frac{t}{q+1}\right)dt > c'
\]
for some absolute constant $c' > 0$.

Splitting the sum into $m\leq m_0$ and $m > m_0$ such that $t\leq q+1$ for $[(m+\frac 12-\delta)\pi,(m+\frac 12+\delta)\pi]$ for all $m\leq m_0$. It is then clear that for all $m \geq m_0 + 2$ we have $t > q+1$ on $[(m+\frac 12-\delta)\pi,(m+\frac 12+\delta)\pi]$. The only interval in which the integrand may change sign is when $m = m_0+1$ and one can verify that when $\delta\leq 0.1$, this can only happen for $m_0\geq 10$. When such an interval exists, it is easy to see that (since the integrand on the left-hand side is much larger)
\[
\int_{(\frac32+\delta)\pi}^{(\frac32-\delta)\pi}\frac{\sin^{2n}t}{t^{p-q+1}}\left(1-\frac{t}{q+1}\right)dt > \int_{(m_0+1+\frac12-\delta)\pi}^{(m_0+1+\frac12+\delta)\pi}\frac{\sin^{2n}t}{t^{p-q+1}}\left|1-\frac{t}{q+1}\right|dt.
\]
Now it suffices to show that (where $m'=m_0+2$ or $m_0+1$ depending on whether the special interval above exists)
\[
\int_{(\frac12-\delta)\pi}^{(\frac12+\delta)\pi}\frac{\sin^{2n}t}{t^{p-q+1}}\left(1-\frac{t}{q+1}\right)dt > \sum_{m=m'}^\infty \int_{(m+\frac12-\delta)\pi}^{(m+\frac12+\delta)\pi}\frac{\sin^{2n}t}{t^{p-q+1}}\left|1-\frac{t}{q+1}\right|dt + c'.
\]
Let
\[
I = \int_{(\frac12-\delta)\pi}^{(\frac12+\delta)\pi} \sin^{2n} t\ dt.
\]
Then $I = \Theta(1/\sqrt n)$. The integral on the right-hand side can be easily upper-bounded by
\[
\frac{I}{q+1} \sum_{m=m'}^\infty \frac{1}{((m+\frac12-\delta)\pi)^{p-q}} 
\leq \frac{I}{q+1} \sum_{m=1}^\infty \frac{1}{((m+\frac12-\delta)\pi)^{p-q}} 
 = \frac{I}{q+1}\zeta\left(p-q,\frac{3}{2}-\delta\right)
\]
and the integral on the left-hand side is lower-bounded by
\[
I\frac{1}{((\frac12+\delta)\pi)^{p-q+1}}\left(1-\frac{(\frac12+\delta)\pi}{q+1}\right).
\]
We would need
\begin{equation}\label{eqn:aux_reduced}
\frac{1}{(\frac12+\delta)^{p-q+1}}\left(1-\frac{(\frac12+\delta)\pi}{q+1}\right) > \frac{\pi}{q+1}\zeta\left(p-q,\frac{3}{2}-\delta\right) + c'
\end{equation}
for some absolute constant $c'$. Taking $\delta=0$, it suffices to show that
\[
2^{p-q} > 5\pi \zeta(p-q,\frac{3}{2})
\]
which hold for $p-q\geq 3$. Since both sides of \eqref{eqn:aux_reduced} are continuous in $\delta$, \eqref{eqn:aux_reduced} holds for some small $\delta > 0$. The only case left is when $p-q=2$. Since $p\geq 4$, $q\geq 2$, when $\delta=0$, it suffices to show that 
\[
4\left(1-\frac{\pi}{6}\right) > \frac{\pi}{3}\zeta\left(2,\frac{3}{2}\right),
\]
which is true, hence \eqref{eqn:aux_reduced} also holds when $p-q=2$ for some small $\delta>0$. This completes the proof that $S_2$ and $S_{\text{even}}$ are of the same sign.


A similar argument shows that $S_1$ and $S_{\text{even}}$ have the same sign. Thus by Lemma 3.3 and the argument in Lemma 3.4 of~\cite{LWW20}, we have
\[
|\hat g(s)| = |S_{\text{even}} + S_{\text{odd}}|\geq |S_{\text{even}}| \gtrsim \frac{2^{d/2}}{d^{3/2}}.
\]

\end{proof}

\subsection{Additive approximation algorithms}

We now design streaming algorithms that achieve an additive $\eps$-approximation to the objective Eqn.~\eqref{eqn:ridgeF}. We also generalize these results to a slightly modified (sum of squares) objective in \Cref{apx:squaredDist}. 
We start with dimension $d=1$.

\begin{theorem}\label{thm:add_ub_1d}
There exists a one pass streaming algorithm for the point estimation
variant of the problem in the one dimensional regime, that achieves an
additive error of $\eps$, space of $O(\eps^{-1/2}\sqrt{\log
(1/\eps)})$ words, and that succeeds with constant probability per query.
\end{theorem}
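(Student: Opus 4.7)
I would work with the simplified objective $F(q)=\frac{1}{n}\sum_i \max\{0,q-x_i\}$ from Eq.~\eqref{eqn:ridgeF}; by rescaling (the point set has diameter $O(1)$), we may assume all $x_i$ and the query $q$ lie in $[0,1]$. The key structural observation is that any partition $[0,1]=G_1\cup\cdots\cup G_m$ whose pieces satisfy (i) $\mathrm{diam}(G_j)\le \sqrt{\eps}$ and (ii) $|G_j|\le n\sqrt{\eps}$, together with only the per-group aggregates $|G_j|$ and $s_j:=\sum_{x\in G_j} x$, suffices for an additive-$\eps$ estimate of $F(q)$. Indeed, a group whose interval lies entirely to the left of $q$ contributes exactly $|G_j|\cdot q - s_j$; a group entirely to the right contributes $0$; and the only error comes from the single group whose interval contains $q$, which contributes at most $|G_j|\cdot \mathrm{diam}(G_j)\le n\sqrt{\eps}\cdot\sqrt{\eps}=n\eps$, i.e., $\eps$ after dividing by $n$.

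\textbf{Streaming implementation.} I would pre-partition $[0,1]$ into $1/\sqrt{\eps}$ initial intervals of width $\sqrt{\eps}$, and inside each such interval maintain a dynamic binary tree that starts as a single leaf storing $(|\cdot|, s_\cdot)=(0,0)$. When a leaf's count reaches $n\sqrt{\eps}$, I promote it to an internal node that \emph{retains} its accumulated aggregates and attach two new empty leaves corresponding to the two half-intervals; all future points arriving in the former leaf's interval are routed to the appropriate child. It is essential that the parent's bulk of already-seen points is not re-partitioned, because those points have been discarded as they streamed by.

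\textbf{Error and space accounting.} At query $q$, I would sum the contributions of every stored node using the case split from the structural lemma. The nodes whose interval actually contains $q$ form a nested chain along a single root-to-leaf path in one tree; at depth $d$ the interval has width $\sqrt{\eps}/2^d$ and stores at most $n\sqrt{\eps}$ points, so the per-node error is at most $n\eps/2^d$, and the geometric sum gives total error $O(n\eps)$, i.e., $O(\eps)$ in the normalized objective. For space, each internal node across all trees was created by absorbing $n\sqrt{\eps}$ distinct points, so there are at most $n/(n\sqrt{\eps})=1/\sqrt{\eps}$ internal nodes in total; the leaves are bounded by the internal nodes plus one per tree, also $O(1/\sqrt{\eps})$. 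Each node holds $O(1)$ words, giving $O(1/\sqrt{\eps})$ words overall before technical factors.

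\textbf{Main obstacle.} The subtlety I expect to be hardest is that $n$ is not known in advance, so the splitting threshold $n\sqrt{\eps}$ has to be adjusted on the fly. I would handle this by a doubling trick on a running estimate $\hat n$: whenever the stream length exceeds $\hat n$, double $\hat n$ and lazily merge the deepest siblings of each tree back into their parents to restore the invariant $|G_j|\le \hat n\sqrt{\eps}$. Combined with a small random shift of the initial partition to control the $O(1)$-probability event that $q$ falls atypically close to a group boundary (which is what makes the guarantee per-query randomized), this accounts for the extra $\sqrt{\log(1/\eps)}$ factor in the stated bound. Verifying that the amortized cost of the merge steps does not blow up the $O(1/\sqrt{\eps})$ node count, and that the random shift only inflates the error at $q$ by a constant with constant probability, is the main thing to check carefully.
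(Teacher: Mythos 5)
Your proposal is correct and essentially the paper's own construction: a pre-partition of the domain into $\sqrt{\eps}$-width intervals refined by a binary tree whose leaves split upon reaching $n\sqrt{\eps}$ points (retaining their aggregates), with queries answered by summing the exact contributions of nodes lying entirely to the left of $q$, and space bounded by charging each split to $n\sqrt{\eps}$ distinct points. One remark: your geometric-sum error accounting along the chain of nodes containing $q$ is actually tighter than the paper's, which bounds each ancestor's error by $\eps$ and pays $O(\eps\log(1/\eps))$ before rescaling --- that rescaling, not unknown $n$ or boundary effects, is the source of the $\sqrt{\log(1/\eps)}$ factor in the statement; the paper's algorithm takes $n$ as known and is deterministic, so your random shift is solving a non-problem (the undercount bound holds for any position of $q$) and the doubling/merging machinery, while plausible, is not needed for the stated bound.
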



Recall that for $d=1$, the objective simplifies to $F(q) =\tfrac{1}{n}\sum_{i=1}^n \max\{0,q-x_i\}$. We describe a {\em sketching} algorithm that produces a sketch of
size $O(1/\sqrt{\eps})$ that is able to answer point estimation
queries to this $F$. Later, we show how to adapt this algorithm to the streaming
setting.

Let $m=(1/\sqrt \eps)$ and consider two sets of $m$ points. First consider $Y_1,\dots, Y_{m}$ such that $Y_i$ is at position $i/m$. Moreover consider $m$ points $X_1,\dots,X_{m}$ such that $X_i$ is at position $x_{(i\cdot n)/m}$, where we assume that $x_i$'s are in a sorted order, i.e., $x_1 \leq \cdots \leq x_n$. Now sort these $2m$ points and name them $Z_1,\dots, Z_{2m}$. For each $i\leq 2m$ we store three numbers: i) $Z_i$ itself, ii) $s_i$, the sum of the distances of the points to the left of $Z_i$ to the point $Z_i$, and  iii) $c_i$, the number of points $x_i$ to the left of $Z_i$.

Given a query $q\in [0,1]$, we will find $i$ such that $Z_i \leq q
<Z_{i+1}$. We will return $s_i + c_i\cdot (q-Z_i)$. Clearly for the
the points that are to the left of $Z_i$ this distance is computed
correctly. The only points that are not computed in the sum are part
of the points in the interval $[Z_i,Z_{i+1}]$, but we know that there
are at most $n\sqrt \eps$ of them (by our choice of the $X_i$'s) and their
distance to $q$ is at most $(Z_{i+1} - Z_{i}) \leq \sqrt \eps$ (by our
choice of $Y_i$'s). Therefore they introduce an average error of at most $\eps$ as we require.

\subparagraph*{Streaming.} We adapt the above algorithm to the streaming setting as follows. We
keep a binary tree, where each node corresponds to an interval in
$[-1,1]$ (the domain of $x_i$). The root corresponds to the entire
interval $[-1,1]$, and the two children of a node/inverval are the 2
half correspondingly (applied recursively). Initially, we start with a
tree of height $\log_2 1/\sqrt{\eps}$, where the leaves correspond to
intervals of length precisely $\sqrt{\eps}$ (assuming it's a power of two, w.l.o.g.).

As we stream through the points $x_i$, we add the information about
the point $x_i$ to the leaf corresponding to the interval containing $x_i$. In
particular, each node $v$, with associated interval $I_v$, keeps a
count of points $c_v$, as well as $s_v$ which is the sum, over of the
points accounted in $c_v$, of their distance to the right border of
the interval.

We may also expand this leaf $v$ to add its two children ($v$ ceases
to be a leaf). The leaf is split when $c_v$ reaches value
$\sqrt{\eps}n$. The new children start with their counters equal to
0. One exception is that if the depth of the node is more than $3\log
(1/\eps)$ (the interval's diameter is $<\eps$), in which case we don't do
the expansion.

To answer a query $q\in [-1,1]$, we sum up, over all nodes $v$
(internal nodes and leaves) whose interval $I_v$ is entirely to the left of
$q$, the quantity $s_v+c_v\cdot (q-I_v)$, where $q-I_v$ is the
distance from $q$ to (the rightmost endpoint of) $I_v$. 

We now argue the correctness and space complexity of this algorithm. 
\begin{proof}
For correctness,
the output of a query $q$ accounts for all the data points entirely to
the left of the interval $I_{v(q)}$, where $v(q)$ is the unique leaf
$v(q)$ with $q\in I_{v(q)}$. Hence the error comes entirely from the
unaccounted points in $v(q)$ as well as the predecessors of $v(q)$. By
construction any (non-empty) predecessor has diameter $\sqrt{\eps}$
and contains less than $\sqrt{\eps}n$ points, or, alternatively, has
diameter less than $\eps$ (for expansion exception). Hence the error
is $\le\eps$ for each predecessor, and $O(\eps\log 1/\eps)$
overall. As usual, we can rescale $\eps$ to obtain error $\eps$ and
correspondingly larger space.

For space complexity, note that the space is proportional to the size
of the tree. The tree has size at most $O(1/\sqrt{\eps})$ since each
leaf is either one of the original $2/\sqrt{\eps}$ one or has a parent
node with $\sqrt{\eps}n$ points.
\end{proof}

Now we study dimension $d = 2$. We now develop a streaming algorithm for sketching a set of points in the 2D
plane such that given any query (affine) line in the plane, one can
approximate the cost. 
To simplify the ensuing notation, we denote the
set of points by $p_1=(x_1,y_1),\dots,p_n=(x_n,y_n) \in
[0,1]\times [0,1]$, and the query by a line $\{x: \theta^T x  =b\}$, which we denote by $L = 
(\theta,b)$. Recall the assumption that $\|\theta\|\leq 1$, we may assume that $\|\theta\|=1$. Our goal is equivalent to reporting the sum of distances
of the points on one side of $L$ to $L$. Henceforth
 we denote the distance from point $p$ to line $L$ as $\dist(p,L)$. 

\begin{theorem}\label{thm:add_ub_2d}
There is a streaming algorithm for the point estimation 
problem in two dimensions, that with constant probability,
achieves additive error $\eps$, with sketch size 
$O(\eps^{-4/5})$ words.
\end{theorem}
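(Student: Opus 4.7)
The plan is to maintain an adaptive quadtree over $[-1,1]^2$, where each cell $v$ stores three pieces of information: (a) the count $c_v$ of points assigned to $v$ before its subdivision (for leaves, this is just the number of points currently in the cell), (b) the sum $\sigma_v \in \Re^2$ of those points, and (c) a single uniformly-random representative $\pi_v$ among them, maintained online by reservoir sampling. A leaf $v$ is subdivided into four equal quadrants as soon as $c_v$ reaches a threshold $T$, while the aggregated statistics $(c_v,\sigma_v,\pi_v)$ remain attached to $v$; subdivision is suppressed once the side length of $v$ drops below a floor $s^\star$. Both $T$ and $s^\star$ are parameters to be set at the end, and the total sketch size is $O(n/T + 1/s^{\star 2})$ (the second term coming from minimum-size leaves, the first from an amortization showing that every non-minimum cell has witnessed $\Omega(T)$ points that are not shared across its siblings).

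Given a query line $L=\{x:\theta\cdot x = b\}$ with $\|\theta\|=1$, I would recurse through the tree and, for each visited node $v$, classify $I_v$: if $I_v$ lies strictly on the side $\theta\cdot x < b$, add the \emph{exact} contribution $bc_v-\theta\cdot\sigma_v$ (this is the ``linearize and aggregate'' trick from \cite{pass-glm}); if strictly on the other side, add $0$; and if $I_v$ is crossed by $L$, add the unbiased single-sample estimate $c_v\max\{0, b-\theta\cdot\pi_v\}$. The final output is $(1/n)$ times the accumulated sum. Because the retained points are partitioned among the nodes, the exact branches contribute no error and the crossed-cell estimate is unbiased, the whole estimator is unbiased.

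Next I would do the variance analysis. The single-sample estimator on a crossed cell $C$ of diameter $\Delta_C$ and count $c_C$ contributes at most $c_C^2\Delta_C^2/4$ to the variance of the unnormalized sum. I split the crossed cells into non-minimum cells (where $c_C\le T$ is enforced by the threshold) and minimum-size cells (of side $s^\star$, where $c_C$ is uncontrolled). Using the standard quadtree fact that at level $\ell$ the line $L$ crosses $O(2^\ell)$ cells, and $\Delta_C=2^{-\ell}$ for a cell at that level, the level-$\ell$ variance contribution from non-minimum cells is $O(2^\ell\cdot T^2\cdot 4^{-\ell}/n^2)$; summing this geometric series over $\ell$ gives an $O(T^2/n^2)$ bound from the non-minimum cells. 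Balancing this with the space, and choosing the parameters $T=\Theta(n\eps^{4/5})$ and $s^\star=\Theta(\eps^{2/5})$, gives sketch size $O(\eps^{-4/5})$ and total standard deviation $O(\eps)$; a Chebyshev bound then yields the required constant-probability additive $\eps$-guarantee.

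The main obstacle I expect is precisely the contribution from the minimum-size cells: a single such cell can in principle hold $\Theta(n)$ points, producing a one-cell variance of $\Theta(s^{\star 2})$ that is too large at $s^\star=\Theta(\eps^{2/5})$. Overcoming this requires a more careful argument than the naive telescoping above -- for instance, maintaining a small secondary reservoir of $O(\polylog(1/\eps))$ samples inside each minimum cell so that the per-cell sample variance shrinks inversely with the sample count (while still fitting in the $O(\eps^{-4/5})$ space budget), or noting that the sum of counts over all crossed minimum cells is at most $n$ and using a Cauchy--Schwarz-type bound on $\sum c_C^2$ that exploits the $O(1/s^\star)$ bound on the number of crossed min cells. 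Pinning down this piece is where the $4/5$ exponent really comes from; the rest of the argument is a routine geometric sum together with parameter tuning.
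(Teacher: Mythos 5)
Your architecture matches the paper's (adaptive quadtree; exact $(c_v,\sigma_v)$ linear sketches for cells off the line; one reservoir sample per crossed cell; Chebyshev), but the parameter choices and the variance accounting have two genuine gaps, and your own computations already expose the first one. You correctly derive that the non-minimum crossed cells contribute variance $O(T^2/n^2)$ to the normalized estimate, with the geometric sum dominated by the \emph{largest} crossed cells (side $\Theta(1)$ near the root, since you never cap the top cell size). With $T=\Theta(n\eps^{4/5})$ this gives standard deviation $\Theta(T/n)=\Theta(\eps^{4/5})$, not $O(\eps)$, so the claimed ``total standard deviation $O(\eps)$'' does not follow; Chebyshev then fails to give additive error $\eps$. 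The ingredient you are missing is the paper's \emph{initial uniform subdivision}: the tree starts pre-built to depth $\log(1/\sqrt{\eps'})$ with $\eps'=\Theta(\eps^{4/5})$, i.e.\ $O(\eps^{-4/5})$ top cells of side $\Theta(\eps^{2/5})$. Then the dominant term of the geometric sum is $(1/\ell)\cdot T^2\ell^2$ at $\ell=\Theta(\eps^{2/5})$, giving normalized variance $O(\eps^{8/5}\cdot\eps^{2/5})=O(\eps^2)$, which is exactly where the $4/5$ exponent comes from. Without that cap on the top cell diameter, no choice of $T$ beats the trivial $T=\Theta(n\eps)$ (space $1/\eps$).

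The second gap is the floor. You set $s^\star=\Theta(\eps^{2/5})$ and correctly flag that a minimum-size cell can hold $\Theta(n)$ points, but neither of your proposed remedies closes it: with $k$ samples per heavy cell the normalized variance from one such cell is $\Theta(c_C^2 s^{\star 2}/(kn^2))$, which for $c_C=\Theta(n)$ forces $k=\Omega(\eps^{-6/5})$ (far beyond polylog), and Cauchy--Schwarz does not help because $\sum_C c_C^2$ can be $n^2$ when one cell absorbs everything, giving error $\Theta(s^\star)=\Theta(\eps^{2/5})$. The paper's resolution is simply to push the floor much lower, to side $\Theta((\eps')^2)=\Theta(\eps^{8/5})$ (anything $\le\eps$ works): then even a deterministic worst-case bound suffices, since each point in a crossed minimum cell errs by at most the cell diameter, for total normalized error $O(\eps^{8/5})\ll\eps$, and no extra samples are needed. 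Crucially this costs nothing in space, because minimum-size cells are only created by splits of parents that accumulated $T$ points, so their number is $O(n/T)=O(\eps^{-4/5})$ regardless of how small the floor is; your $1/s^{\star 2}$ space term (which suggests pre-building down to the floor) is unnecessary for the sub-$\eps^{2/5}$ levels. With the initial grid at side $\Theta(\eps^{2/5})$, threshold $T=\Theta(n\eps^{4/5})$, and floor $\Theta(\eps^{8/5})$, your argument goes through and recovers the paper's bound.
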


The following shows how to get an $O(1/\eps)$-size sketch
with an additive error of $O(\eps^{5/4})$, and at the end
we just replace $\eps'=\Theta(\eps^{4/5})$ to
prove the above theorem.

We use a quad-tree over the unit square $[0,1]\times[0,1]$, where each
node is associated with a number of points (each point is associated
with exactly one node of the quad-tree). Thus each node $v$ contains a
counter $c_v$ for the number of associated points, a randomly chosen
associated points (chosen using reservoir sampling), as well as a sketch
$S_v$ to be described later. Initially, the quad-tree is of depth
$\log (1/\sqrt \eps)$ and all counters/sketches are initialized to
zero. When we stream over a point $p_i$, we associate it with the
corresponding leaf of the quad-tree (process defined later), unless
the counter $c_v$ is already $\eps\cdot n$ and the depth is at least
$2\log(1/\eps)$ (i.e., the side length is at least $\eps^2$). In that
case the leaf $v$ is expanded by adding its 4 children, which become
new leaves (with counters initialized to 0).

When we associate a point with a node $v$, we
increment $c_v$ and update the sketch $S_v$ on the associated
points. The sketch $S_v$ for the associated points, say termed $P_v$,
allows us to compute, for any query line $L=(\theta, b)$, the sum
$\sum_{p\in P_v} (b - \theta^Tp)$. The sum $\sum_{p\in P_v} \theta^T p$ can be computed in a
streaming fashion using the sketch from~\cite{pass-glm}. In
particular, the sketch actually consists of two counters: $X_v$, the sum of
the $x$ coordinates, and $Y_v$, the sum of the $y$ coordinates.


\subparagraph*{Query algorithm.} Given a query line $L$, we distinguish
contribution from points in two types of quad-tree nodes: nodes that do not intersect
the line and those that do. For the first kind, we can just use the
sketch $S_v$ to estimate the distance to the line, without incurring
any error. More precisely we have that 
$\sum_{p\in P_v} \dist(p,L) = \sum_{p\in P_v} (b - \langle p,\theta\rangle )
= c_v b - \langle (X_v,Y_v) , \theta \rangle$.
Note that this is
included in the final sum iff the entire node lies in the halfplane $\langle x,\theta\rangle\leq b$.

For the second kind of nodes, we estimate their contribution as
follows. For each non-empty node $v$, with the random sample $r_v$, we add to
the final sum the quantity $(1/n)\cdot c_v\cdot
\max\{0,b-\theta^Tr_v\}$. 

We now analyze this procedure.
\begin{proof}
It is clear that the total space usage of the
algorithm is at most order of the size of the quad-tree. The size of
the tree is bounded by $O(1/\eps)$ as follows. First, there at most
$O(1/\eps)$ of the original nodes. Second, each new children created
has the property that its parent got associated with $\eps n$ points,
hence at most $4/\eps$ such children can be ever created.

We now analyze the error of the sketching
algorithm. For the nodes that do not cross the line $L$, the
distances of their points are computed exactly. So we only need to
argue about the crossing nodes.  First of all, note that we can ignore
all leaves with more than $\eps n$ points at them as their diameter is
less than $2\eps^2$. 

Let $\mathcal{C}$ be the set of leaves that cross the query line and
have diameter at least $2\eps^2$ (and hence less than $\eps n$
associated points). It is immediate to check that, in expectation, our
estimator outputs the correct value; in particular for $P_v$ the set
of points associated to a node $v$:
\[
\E\left[\sum_{v\in \mathcal{C}} c_v \cdot \max\{0,\dist(r_v,L)\}\right] = \sum_{v\in \mathcal{C}}\sum_{p\in P_v} \frac{1}{c_v} c_v \cdot \max\{0,\dist(p,L)\} = \sum_{v\in \mathcal{C}}\sum_{p\in P_v}\max\{0,\dist(p,L)\}. 
\]
Thus we only need to argue that it concentrates closely around its
expectations, with constant probability. Let us compute the
variance. The point in each (non-empty) node is chosen independently at
random. Thus we can sum up the variances of each node. Consider a
node $v$ with side length $\ell\ge \eps^2$. Then we have that
\[
 \Var\left[ c_v \cdot \max\{0,\dist(r_v,L)\}\right]  \leq \sum_{i=1}^{c_v} \frac{1}{c_v}\cdot (c_v\cdot\ell)^2 \leq c_v^2\ell^2 \leq (n\eps)^2\ell^2.
\]
Now note that any line can intersect only $(1/\ell)$ nodes with side
length $\ell$, and thus the total variance of all nodes with side
length $\ell$ is at most $(n\eps)^2\ell$. Hence, the total variance
over all nodes (over all levels) is at most
$O(n^2\eps^2\sqrt\eps)$. Overall, the standard deviation is at most $
O(n\eps^{5/4})$. By Chebyshev's bound, the reported answer has an
additive error of $O(n\eps^{5/4})$ with constant probability.

As stated before, replacing $\eps' = \Theta(\eps^{4/5})$ we get that the
algorithm is providing an additive $\eps'$ approximation using space
$\tilde O((\eps')^{-4/5})$, completing the proof of the result.
\end{proof}

\section{Optimization}

In this section we consider the problem of finding the (approximate) optima for the SVM objective in the streaming setting. First, we show that a streaming solution for the point estimation problem immediately leads to a solution for the optimization problem, with only a $O(d\log 1/\eps)$ loss in space complexity. Second, we give lower bounds for the optimization problem, showing a (different) polynomial dependence on $1/\eps$ is still required for dimension $d > 1$.

As before, we consider the SVM optimization problem in which the bias is regularized:
$\min_{\theta, b} F_\lambda(\theta, b)$, where $F_\lambda$ is as defined in \eqref{eqn:F_lambda}. Without loss of generality, we assume that the inputs are contained in a ball of radius $1$, i.e., $\|x_i\| \leq 1$, and that $y_i \in \{-1,+1\}$.

Recall that \cite{shalev-shwartz2007pegasos} show that $O(1/(\lambda\eps))$ random samples $(x_i,y_i)$ are enough for computing an $\eps$-approximate optimum (by running SGD). This can be seen as a streaming algorithm with space complexity $O(d/(\lambda\eps))$. We show that, given an efficient streaming algorithm for point estimation, we can solve the optimization problem with only a minor blowup.

\begin{theorem}
\label{thm:pe2opt}
    Suppose there is a streaming algorithm that, after seeing data $\{(x_i,y_i)\}_{i=1}^n$, where $\|x_i\| \leq 1$, can produce a sketch of size $s$ that, given any $(\theta,b)$ such that $\|(\theta, b)\| \leq \sqrt{2/\lambda}$, is able to output $\hat F(\theta, b)$ such that $|\hat F(\theta,b) - F(\theta, b)|\leq \eps$ with probability at least $0.9$. Then there is also a streaming algorithm that, under the same input, will output $(\hat \theta, \hat b)$ with $|F_\lambda(\hat\theta,\hat b)-F_\lambda(\theta^*_\lambda,b^*_\lambda)|\le \eps$ with probability at least $0.9$, while using space $O(s \cdot d\log d/(\lambda \eps))$.
\end{theorem}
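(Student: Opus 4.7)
The plan is to reduce optimization to point estimation by the standard route: localize the search to a bounded region, build an $\eps$-net, amplify the sketch so a single union bound covers every net query, and return the net point minimizing the estimated $F_\lambda$.

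I first localize. Since $F_\lambda(0,0)\le 1$ while $F_\lambda(\theta,b)\ge \tfrac{\lambda}{2}\|(\theta,b)\|_2^2$, any minimizer satisfies $\|(\theta^*_\lambda,b^*_\lambda)\|_2 \le R := \sqrt{2/\lambda}$, so it suffices to minimize over the ball $B$ of radius $R$. On $B$ the function $F_\lambda$ is $L$-Lipschitz with $L=O(1)$: each hinge-loss term has Lipschitz constant $\sqrt{1+\|x_i\|^2}\le\sqrt{2}$ in $(\theta,b)$, and the regularizer's gradient has norm at most $\lambda R=\sqrt{2\lambda}=O(1)$ on $B$ (in the regime of interest; at worst it adds a $\sqrt{\lambda}$ factor).

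Next, I take a standard $\eps'$-covering net $\mathcal{N}\subset B$ with mesh $\eps'=\eps/(3L)$; by a volume bound $|\mathcal{N}|\le(3LR/\eps+1)^{d+1}$, so $\log|\mathcal{N}|=O(d\log(d/(\lambda\eps)))$. Every net point has norm at most $R$, satisfying the query-norm precondition of the hypothesized sketch. I instantiate $r=O(\log|\mathcal{N}|)$ independent copies of the sketch, each configured for additive error $\eps/3$, and, for any query $(\theta,b)$, take the median $\hat F(\theta,b)$ of the $r$ estimates. A Chernoff bound boosts per-query success probability to $1-1/(10|\mathcal{N}|)$, and a union bound over $\mathcal{N}$ guarantees that $|\hat F(\theta,b)-F(\theta,b)|\le\eps/3$ holds simultaneously at every net point with probability at least $0.9$.

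At decoding time, for each $(\theta,b)\in\mathcal{N}$ I compute $\hat F_\lambda(\theta,b):=\hat F(\theta,b)+\tfrac{\lambda}{2}\|(\theta,b)\|_2^2$ (the regularizer is evaluated exactly since the query is known) and output $(\hat\theta,\hat b)\in\arg\min_{\mathcal{N}}\hat F_\lambda$. Picking $(\tilde\theta,\tilde b)\in\mathcal{N}$ closest to $(\theta^*_\lambda,b^*_\lambda)$, $L$-Lipschitzness gives $F_\lambda(\tilde\theta,\tilde b)\le F_\lambda(\theta^*_\lambda,b^*_\lambda)+\eps/3$, and the chain $F_\lambda(\hat\theta,\hat b)\le\hat F_\lambda(\hat\theta,\hat b)+\eps/3\le\hat F_\lambda(\tilde\theta,\tilde b)+\eps/3\le F_\lambda(\tilde\theta,\tilde b)+2\eps/3\le F_\lambda(\theta^*_\lambda,b^*_\lambda)+\eps$ closes the argument. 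Total space is $O(s\cdot r)=O(s\cdot d\log(d/(\lambda\eps)))$, matching the advertised bound.

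The only subtle point is keeping $L$ an absolute constant on $B$ despite the regularizer's gradient scaling with $R=\sqrt{2/\lambda}$; this works because $\lambda R=\sqrt{2\lambda}$ is bounded, and any mild $\sqrt{\lambda}$ blow-up only enters constants inside $\log|\mathcal{N}|$. The rest is routine: verifying the Lipschitz bound, handling the regularization term exactly outside the sketch, and confirming that net queries never exit $B$ so the precondition $\|(\theta,b)\|\le\sqrt{2/\lambda}$ of the hypothesized sketch is always satisfied.
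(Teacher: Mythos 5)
Your proposal is correct and follows essentially the same route as the paper: restrict to the ball of radius $\sqrt{2/\lambda}$, discretize it (you use a Euclidean covering net where the paper uses a $\delta\mathbb{Z}^{d+1}$ grid with $\ell_\infty$ mesh $\eps/(2\sqrt d)$), amplify the sketch by medians so a union bound covers all net queries, and output the net point minimizing the estimated objective with the regularizer added exactly. Your explicit Lipschitz treatment of the regularization term and the direct localization via $F_\lambda(w)\ge\tfrac{\lambda}{2}\|w\|^2$ are small, harmless variants of the paper's argument (which invokes its strong-convexity lemma for the localization).
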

\begin{proof}
For notational simplicity, let $w = (\theta, b) \in \R^{d+1}$ and
$w^*_\lambda := \argmin  F_\lambda(w)$. Define $F(w) =
\frac{1}{n}\sum_{i=1}^n \max\{0, 1-w^T z_i\}$, where $z_i = y_i(x_i,
1)$.

First, note that we only need point estimation for $w$ of norm
$\|w\| \le R$, where $R=\sqrt{\frac{2}{\lambda}}$. Specifically, since $1 = F_\lambda(0) \geq F_\lambda(w^*_\lambda) - F_\lambda(0)$, we have by Lemma~\ref{lem:str_conv} that
\[ 
    \|w_\lambda^*\| \leq \sqrt{\frac{2}{\lambda}} = R.
\] 
Let $T := \delta \mathbb{Z}^{d+1} \cap B(0, R)$ be a grid of cells with side length $\delta$ that is also contained in a ball of radius $R$. We can upper bound the size of $T$ by $|T| = O((2R/\delta)^{d+1})$. 

\textbf{Claim:} If $\|w-w^*_\lambda\|_{\infty} \leq \delta := \frac{\eps}{2\sqrt{d}}$ and $\|x_i\| \leq 1$ then $F(w) - F(w^*_\lambda) \leq \eps$.

To see this, note that $\|w-w_\lambda^*\|_{\infty} \leq \delta$ implies that $\|w-w_\lambda^*\|_2 \leq \sqrt{d}\delta = \eps/2$. Since we assume $\|x_i\| \leq 1$, we also have $\|z_i\| \leq 2$. Hence
\begin{align*}
    F(w) - F(w^*_\lambda) &= \frac{1}{n}\sum_{i=1}^n \left(\max\{0, 1- w^T z_i\} - \max\{0, 1- (w_\lambda^*)^T z_i\}\right)\\
    &\leq \frac1n \sum_{i=1}^n \|w-w_\lambda^*\|\, \|z_i\| \\
    &\leq \frac{\eps}{2} \cdot 2 = \eps
\end{align*}
as desired.

Suppose the sketch uses space $s$ and has failure probability $0.1$. To boost this probability, given $(\theta, b)$, we can run $k = O(d\log \frac{1}{\lambda \eps}) = O(\log |T|)$ sketches in parallel, then output the median value. The median is within an additive error of $\pm\eps$ except with failure probability $O(2^{-k})$. Hence, we can take the union bound over all $T$, so that with probability at least $0.9$ the resulting sketch succeeds for \textit{all} $w \in T$, while using $O(sd\log\frac{d}{\lambda \eps})$ space. Moreover, by the first claim, some $w \in T$ satisfies $F_\lambda(w) - F_\lambda(w^*_\lambda) \le \eps$. Hence, we can iterate over all $w \in T$ and output the parameter that achieves the lowest value of $F_\lambda(w)$, proving the theorem.

\end{proof}

Recall that our point estimation results assume $\|(\theta, b)\| \le 1$, which can be adapted to $\|(\theta, b)\| \le R$ by replacing $\epsilon$ with $\epsilon/R$. Letting $R = \sqrt{2/\lambda}$, the above theorem implies that we get an optimization algorithm for $d=1$ that uses $\tilde O(\eps^{-1/2} \lambda^{-1/4})$ space, and an optimization algorithm for $d=2$ that uses $\tilde O(\eps^{-4/5} \lambda^{-2/5})$ space. Note that this has a polynomially better dependence on both $\eps$ and $\lambda$ relative to the $O(1/\lambda\eps)$ bound that we get from SGD.





\subparagraph*{Lower bounds.}
We start with the high-dimensional case. Suppose there exists a sketch such that, given a stream of inputs $\{(x_i, y_i)\}_{i=1}^n$, $\|x_i\|_2 \leq 1$, with probability at least $0.9$ outputs some $(\hat{\theta}, \hat{b})$ such that 
$F_\lambda(\hat{\theta}, \hat{b}) \leq F_\lambda(\theta^*, b^*) + \eps$,
where $F_\lambda(\theta, b) := \frac{\lambda}{2}(\|\theta\|^2 + b^2) + \frac{1}{n}\sum_{i=1}^n \max\{0, 1-y_i(\theta^T x_i + b)\}$. 
For now, suppose $\lambda = 10^{-4}$ and $d = O(\log n)$. Later, we show a similar lower bound for low dimensions when $\lambda = \Theta(1/n)$.

\begin{theorem}\label{thm:main_thm}
Such a sketch requires $\Omega(\eps^{-1/2})$ words of space.
\end{theorem}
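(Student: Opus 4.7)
The plan is to prove this lower bound by reduction from the standard one-way \textsc{Indexing} problem on strings of length $k = \Theta(\eps^{-1/2})$, whose one-way communication complexity is $\Omega(k)$. Alice holds $s\in\{0,1\}^k$, Bob holds an index $i^*\in[k]$, and Bob must recover $s_{i^*}$. Alice uses the putative sketch to encode $s$, Bob augments the stream and queries the sketch, and the communication is exactly the sketch size.

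The encoding uses $k$ nearly orthogonal unit vectors $u_1,\dots,u_k\in\R^d$ with $d=\Theta(\log(1/\eps))$ (which exist by a standard random or JL construction, giving pairwise inner products bounded by a small absolute constant). Alice, for each $i$ with $s_i=1$, appends data points $(\alpha u_i,+1)$ to the stream for a small scaling $\alpha\in(0,1]$ and a carefully chosen multiplicity, then sends the sketch to Bob. Bob then appends a constant-size collection $A_{i^*}$ of \emph{anchor} points (both labels allowed) chosen so that the SVM optimum $(\theta^*,b^*)$ on Alice's stream plus $A_{i^*}$ satisfies:
\begin{itemize}
\item[(C1)] if $s_{i^*}=0$, then $(\theta^*,b^*)=w^{(0)}$, determined solely by $A_{i^*}$;
\item[(C2)] if $s_{i^*}=1$, then $(\theta^*,b^*)=w^{(1)}$, determined by $A_{i^*}$ together with the single point at $\alpha u_{i^*}$;
\item[(C3)] $\|w^{(0)}-w^{(1)}\|_2 > 2\sqrt{2\eps/\lambda}$.
\end{itemize}
Given (C1)--(C3), Bob runs the sketch to obtain $(\hat\theta,\hat b)$ with $F_\lambda(\hat\theta,\hat b)\le F_\lambda(\theta^*,b^*)+\eps$, applies \Cref{lem:str_conv} to conclude $\|(\hat\theta,\hat b)-(\theta^*,b^*)\|_2\le \sqrt{2\eps/\lambda}$, and decodes $s_{i^*}$ by reporting the bit whose hypothesized optimum is closer to $(\hat\theta,\hat b)$; (C3) ensures this decision is correct. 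This succeeds with probability $0.9$, so the sketch requires $\Omega(k)=\Omega(\eps^{-1/2})$ bits.

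The heart of the argument is verifying (C1)--(C2), i.e., certifying that \emph{none} of Alice's other points at $\alpha u_j$ with $j\ne i^*$ and $s_j=1$ are active: they must lie strictly on the correct side of the margin, so they contribute zero hinge loss and zero subgradient and are therefore irrelevant to the unique (by $\lambda$-strong convexity) optimum. Concretely, the anchors are designed to force $\theta^*$ to be essentially aligned with $u_{i^*}$; then near-orthogonality yields $\theta^{*T}(\alpha u_j)=O(\alpha\cdot|u_{i^*}^T u_j|\cdot\|\theta^*\|)$, which, combined with the chosen $b^*$, leaves a comfortable gap $y_j(\theta^{*T}(\alpha u_j)+b^*)\ge 1$ for every $j\ne i^*$, regardless of Alice's remaining bits. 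Once these points are inactive, one verifies (C1) and (C2) explicitly via the KKT conditions for the strongly convex program restricted to anchors ($\pm$ the query point), and (C3) by direct computation on the resulting closed-form optima.

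The main obstacle will be the quantitative tuning of the anchor geometry, the scaling $\alpha$, and the multiplicity of Alice's copies so that (C1) and (C3) hold simultaneously. Since $\lambda=10^{-4}$ is constant, (C3) demands a gap of $\Omega(\sqrt{\eps})$, while the influence of the single possible point at $\alpha u_{i^*}$ on the empirical average is only $\Theta(1/n)=\Theta(\sqrt{\eps})$ given $n=\Theta(\eps^{-1/2})$; these two quantities are of the same order, so the construction has almost no slack and must be chosen delicately. This is the carefully tuned construction referred to in the technique overview, and constitutes the technical core of the proof.
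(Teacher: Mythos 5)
Your high-level plan is the same as the paper's: reduce from one-way \textsc{Indexing}, have Bob append points so that the optimum is pinned down by his added points plus (possibly) the queried point, characterize the two candidate optima via KKT, and decode using \Cref{lem:str_conv} together with a gap exceeding $2\sqrt{2\eps/\lambda}$. However, everything you label ``the technical core'' --- the choice of $A_{i^*}$, the proof that none of Alice's other points are support vectors, the closed forms for $w^{(0)},w^{(1)}$, and the verification of (C3) --- is exactly what the paper's proof consists of (its Lemmas~\ref{lem:optlb1}--\ref{lem:optlb3} and the final gap computation), and you leave it entirely unproved, so as it stands the argument is a plan rather than a proof.

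Worse, one concrete choice you do commit to would fail. You let Bob add a \emph{constant-size} collection of anchors while $\lambda=10^{-4}$ is a constant. Each data point enters $F_\lambda$ with weight $1/n$, so $O(1)$ anchors contribute total subgradient mass $O(1/n)=O(\sqrt{\eps})$; by the optimality condition (Eqn.~\eqref{eqn:kkt:theta} and its analogue for $b$), $\lambda\|(\theta^*,b^*)\|$ is bounded by the dual mass of the support vectors, so if the optimum were determined by the anchors alone we would get $\|(\theta^*,b^*)\|=O(\sqrt{\eps}/\lambda)=O(\sqrt{\eps})$. But then every one of Alice's points $x$ satisfies $|y(\theta^{*T}x+b^*)|\le 2\|(\theta^*,b^*)\|<1$, hence has strictly positive hinge loss and is a support vector, contradicting (C1)/(C2). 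The paper avoids this by having Bob add $n/4$ copies of each of the two anchors $(1\pm\delta)x_q$ with opposite labels --- a constant \emph{fraction} of the stream --- which forces $\|\theta^*\|=\Theta(1/\delta)$, a large constant, so that every point $v$ with $v^Tx_q<1-10\delta$ (label $-1$, since $b^*=1-(1+\delta)\|\theta^*\|$ is strongly negative) lies strictly outside the margin, while the $-1$-labeled point at $x_q$ itself remains barely active; your variant with $+1$-labeled encoded points at $\alpha u_i$ and nearly orthogonal directions has the additional tension that making the non-queried points inactive requires $b^*\gtrsim 1$, which tends to make the queried point inactive as well, collapsing $w^{(0)}=w^{(1)}$. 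So the missing construction is not a routine tuning exercise: the anchor multiplicities, labels, and geometry must be redesigned along the paper's lines for (C1)--(C3) to be simultaneously satisfiable.
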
 

\textit{Proof idea:}
We will reduce from the \textsc{Indexing} problem in the one-way communication model. 
At a high level, we will argue that we can query whether a point exists at a given location on the unit sphere by adding additional points with the property that the optimal parameters are determined entirely by the added points and the point being queried (if it exists). This yields a separation in the optimal parameters in these two cases, which we will argue (via strong convexity) is distinguishable using such a sketch. 

Specifically, suppose Alice is given a bit string $b \in \{0, 1\}^{n/2}$ which she wants to encode. Let $T = \{v_1, \ldots, v_n\}$ be a subset of the unit sphere in $d$ dimensions satisfying $\forall v_1 \neq v_2 \in T$, $v_i^T v_j < 1 - 10\delta$, where $\delta$ will be specified later, with $|T| = n$. For $\delta = \frac{1}{100}$, such a set exists for $d = \poly(\log n)$. If $b_i = 0$, Alice adds $(v_i, -1)$ to the sketch $S$; otherwise, if $b_i = 1$, Alice adds $(v_{n/2 + i}, -1)$ to $S$. 
Alice then sends the sketch to Bob, who decodes $b_q$ for $q \in [\frac{n}{2}]$ by querying whether a point exists at location $x_q$. In particular, Bob adds $\frac{n}{4}$ copies of $(x_\alpha, -1)$, and $\frac{n}{4}$ copies of $(x_\beta, +1)$, where $x_\alpha := (1-\delta)x_q$ and $x_\beta := (1+\delta)x_q$. Let $(\hat{\theta}, \hat{b})$ be the output of the sketch after doing so. Define $\theta_0 := \frac{2\lambda(1+\delta) + \delta}{2\lambda(1+(1+\delta)^2)}$ and $\theta_1 := \frac{2\lambda(1+\delta) + \delta(1+\frac{1}{n})}{2\lambda(1+(1+\delta)^2)}$. 
If $\|\hat{\theta} - \theta_0\| < \|\hat{\theta} - \theta_1\|$ then Bob outputs ``$b_i = 0$''; otherwise, Bob outputs ``$b_i = 1$''.


\subsection{Preliminaries for the lower bound}\label{apx:subsec:kkt}
We introduce some standard facts about the SVM objective. We can rewrite the SVM objective as a constrained optimization problem:
\begin{equation}
    \min_{\theta, b} \frac{\lambda}{2}\|(\theta, b)\|_2^2 + \frac{1}{n}\sum_{i=1}^n \gamma_i
\end{equation}
subject to $\gamma_i \geq 0$ and $\gamma_i \geq 1-y_i(\theta^T x_i + b)$.

The corresponding Lagrangian is then:
\begin{equation}
    \min_{\theta, b, \gamma, \eta, \alpha} \mathcal{L} := \frac{\lambda}{2}\|(\theta, b)\|_2^2 + \frac{1}{n}\sum_{i=1}^n \gamma_i - \sum_{i=1}^n \gamma_i \eta_i + \sum_{i=1}^n \nu_i [1-\gamma_i-y_i(\theta^T x_i + b)]    
\end{equation}
subject to:
$\gamma_i \geq 0$, $\gamma_i \geq 1-y_i(\theta^T x_i + b)$, $\eta_i \geq 0$, and $\nu_i \geq 0$. 

The KKT conditions are:
\begin{gather}
    \eta_i \gamma_i = 0 \\
    \nu_i (1-\gamma_i-y_i(\theta^T x_i + b)) = 0 \\
    \theta^* = \frac{1}{\lambda}\sum_{i=1}^n \nu_i y_i x_i \label{eqn:kkt:theta} \\ 
    b^* = \frac{1}{\lambda}\sum_{i=1}^n \nu_i y_i \\
    \eta_i = \frac{1}{n} - \nu_i \\
    \gamma_i = \max\{0, 1-y_i(\theta^T x_i + b)\} \\
    \eta_i \geq 0, \nu_i \geq 0
\end{gather}

\subsubsection{Notation}
We will characterize the optimal solutions for the two scenarios (there exists a point at $x_q$ or there does not) for the case that the dimension is $d=1$. We will later show how this provides results for $d > 1$. We will specify $\delta$ later, but will always maintain the relation $\lambda = \delta^2$. 

Let $n=\frac{1}{20\sqrt{\eps}}$, and define a set $S_0$ of $n$ points as follows:
\begin{itemize}
    \item $\frac{n}{4}$ points are $x_\alpha := 1-\delta$ with $y_\alpha := -1$.
    \item $\frac{n}{4}$ points are $x_\beta := 1+\delta$ with $y_\beta := +1$.
    \item The remaining $\frac{n}{2}$ points are not support vectors and are otherwise arbitrary; i.e. are such that 
    $1-y((\theta^*_0)^T x + b^*_0) < 0$, where
    \begin{equation}
        (\theta^*_0, b^*_0) = \arg\min_{(\theta, b)} F_\lambda^{(0)}(\theta, b) := \frac{\lambda}{2}(\|\theta\|^2 + b^2) + \frac{1}{n}\sum_{(x, y) \in S_0} \max\{0, 1-y(\theta^T x + b)\}
    \end{equation}
    Observe that such points exist as long as $\theta^*_0 \neq 0$, which is easy to show.
\end{itemize}

Define $\gamma_\alpha^{(0)} := \max\{0, 1-y_\alpha((\theta^*_0)^T x_\alpha + b^*_0)\} = \max\{0, 1+\theta^*_0(1-\delta) + b^*_0\}$ and $\gamma_\beta^{(0)} := \max\{0, 1-y_\beta((\theta^*_0)^T x_\beta + b^*_0)\} = \max\{0, 1-\theta^*_0(1+\delta) - b^*_0\}$. 

Similarly, define $S_1$ to be the set of $n$ points that is exactly the same as $S_0$, except that instead of $\frac{n}{2}$ points that are not support vectors, there are $\frac{n}{2}-1$ points that are not support vectors and one additional point at $x_q := 1$ with $y_q := -1$. Similarly define $F_\lambda^{(1)}$, $(\theta^*_1, b^*_1)$, $\gamma_\alpha^{(1)}$, and $\gamma_\beta^{(1)}$ to be the analogous quantities where $S_0$ is replaced with $S_1$, and let $\gamma_q^{(1)} := \max\{0, 1-y_\beta((\theta^*_1)^T x_q + b^*_1)\} = \max\{0, 1+\theta^*_1 + b^*_1\}$. We will also sometimes use the notation $\gamma_\alpha(\theta, b) := \max\{0, 1-y_\alpha(\theta^T x_\alpha + b)\}$, and similarly for $\gamma_\beta(\theta, b)$. 

\subsubsection{Lemmas}
Next, we show some properties of the optimal solutions in each of the two cases. 

\begin{lemma}\label{lem:optlb1}
Suppose $\delta < \frac{1}{100}$ and $n \geq \frac{1}{\delta^2}$. Then we have that $\gamma_\alpha^{(i)} > 0$ and $\gamma_\beta^{(i)} = 0$ for $i \in \{0, 1\}$.
\end{lemma}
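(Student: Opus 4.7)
The plan is to verify the claim by solving the KKT conditions exhibited in Section~\ref{apx:subsec:kkt} explicitly. By construction, the $n/2$ ``other'' points in $S_0$ (respectively $n/2-1$ in $S_1$) satisfy $1 - y((\theta^*_i)^T x + b^*_i) < 0$, so by complementary slackness their dual variables $\nu$ vanish at the optimum. Hence the stationarity identities
$\theta^*_i = \lambda^{-1}\sum_j \nu_j y_j x_j$ and $b^*_i = \lambda^{-1}\sum_j \nu_j y_j$
involve only the $\alpha$, $\beta$, and (for $i=1$) $q$ points.

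I would hypothesize the structure of the optimum: $\gamma_\beta^{(i)}=0$ with the margin constraint tight, i.e.\ $\theta^*_i(1+\delta) + b^*_i = 1$, and $\gamma_\alpha^{(i)} > 0$ (and $\gamma_q^{(1)} > 0$ for $i=1$). Under this hypothesis the KKT identity $\eta_j = 1/n - \nu_j$ together with $\eta_j\gamma_j=0$ forces $\nu_j = 1/n$ for every $\alpha$ point (and for the $q$ point in $S_1$), so the aggregate duals are $\nu_\alpha^{\mathrm{tot}} = 1/4$ and (for $i=1$) $\nu_q = 1/n$. Only the aggregate $\beta$-dual $\nu_\beta^{\mathrm{tot}} \in [0,1/4]$ remains free.

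Next I would substitute these values into the stationarity equations and impose $\theta^*_i(1+\delta) + b^*_i = 1$. This yields a single linear equation in $\nu_\beta^{\mathrm{tot}}$; with $\lambda = \delta^2$, the case $i=0$ gives a solution of the form $\nu_\beta^{\mathrm{tot}} = (3\delta^2+2)/(4(2+2\delta+\delta^2))$, and the case $i=1$ picks up an additional correction of order $1/n$. Using $\delta < 1/100$ and $n \geq 1/\delta^2$ one checks that $\nu_\beta^{\mathrm{tot}}$ lies strictly inside $(0, 1/4)$, so by spreading it uniformly across the $n/4$ beta points each individual $\nu$ is in $[0,1/n]$ as required by dual feasibility. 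Substituting back, a direct computation using $\theta^*_i(1+\delta)+b^*_i=1$ gives $\theta^*_i(1-\delta)+b^*_i = 1 - 2\delta\theta^*_i = (1-\delta^2)/(2+2\delta+\delta^2)$ in the $i=0$ case (up to an $O(1/n)$ correction for $i=1$), which is bounded below by roughly $1/2$; hence $\gamma_\alpha^{(i)} = 1 + \theta^*_i(1-\delta) + b^*_i > 0$, and similarly $\gamma_q^{(1)} > 0$, confirming the hypothesis. Since $F_\lambda$ is strongly convex its minimizer is unique, so the primal--dual pair we exhibited is the unique optimum.

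The main obstacle is the $i=1$ case: the extra negative point at $x_q = 1$ perturbs the KKT system, and one must verify that $\nu_\beta^{\mathrm{tot}}$ still sits strictly inside $[0,1/4]$ and that $\gamma_\alpha^{(1)}$, $\gamma_q^{(1)}$ remain positive. This is where the assumption $n \geq 1/\delta^2$ is essential: it guarantees that the $O(1/n)$ perturbation is dominated by the $\Theta(1)$ constants appearing in the $i=0$ formulas, so the qualitative structure carries over. The hypothesis $\delta < 1/100$ provides the quantitative slack needed to conclude the strict inequalities.
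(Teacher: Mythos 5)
Your proposal is correct, and it reaches the lemma by a genuinely different route than the paper. You construct the optimum directly: hypothesize the active pattern ($\gamma_\beta=0$ with the $\beta$-margin exactly tight, $\gamma_\alpha>0$, and $\gamma_q^{(1)}>0$), solve the reduced KKT system for the one free aggregate dual $\nu_\beta^{\mathrm{tot}}$, check dual feasibility ($\nu_\beta^{\mathrm{tot}}\in(0,\tfrac14)$, which under $\lambda=\delta^2$, $\delta<\tfrac1{100}$, $n\ge 1/\delta^2$ survives the $O(1/n)$ perturbation in the $i=1$ case), and verify $1+\theta^*(1-\delta)+b^*>0$; your value $\nu_\beta^{\mathrm{tot}}=(3\delta^2+2)/(4(2+2\delta+\delta^2))$ reproduces exactly the paper's closed form for $\theta_0^*$, so your argument in fact subsumes Lemma~\ref{lem:optlb2} as a by-product. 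The paper instead argues by contradiction about the unknown optimum, ruling out the two alternative sign patterns: if $\gamma_\beta^{(i)}>0$ then every $\beta$-dual equals $1/n$, which forces $\theta^*\ge \delta/(2\lambda)=1/(2\delta)$ (up to the $1/n$ correction when $i=1$) and hence $1-\theta^*(1+\delta)-b^*<0$, a contradiction; and if $\gamma_\alpha^{(i)}=\gamma_\beta^{(i)}=0$ then $\theta^*\ge 1/\delta$ and $b^*\le-1/\delta$, so $F_\lambda\ge 1=F_\lambda(0,0)$, contradicting optimality. The paper's route avoids any explicit solve; yours buys the exact optimizer, which is what the downstream reduction actually uses. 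One step of yours to tighten: you conclude ``the exhibited primal--dual pair is the unique optimum'' from strong convexity of $F_\lambda$, but your KKT verification covers only the $\alpha$, $\beta$ (and $q$) points, while the inactivity of the remaining $n/2$ points is, in the paper's setup, a property of the \emph{true} optimum rather than of your candidate, so the argument as stated is mildly circular. It closes in one line: since those points are inactive in a neighborhood of the true optimum, the true optimum is also a global minimizer of the reduced objective (regularizer plus the $\alpha,\beta,q$ hinge terms), which is strongly convex and whose unique minimizer is exactly the KKT point you exhibit; alternatively, verify inactivity at your candidate directly, as the paper does later in Lemma~\ref{lem:optlb3} under the condition $v^Tx_q<1-10\delta$.
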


\begin{proof}
To see that $\gamma_\alpha^{(i)} > 0$ and $\gamma_\beta^{(i)} = 0$ for $i \in \{0, 1\}$, we will rule out the other possibilities. 

\noindent\textbf{Claim:} Suppose $\delta < \frac{1}{7}$ and $n \geq \frac{1}{\delta^2}$. Then we cannot have $\gamma_\beta^{(i)} > 0$.

Suppose, for the sake of contradiction, that $\gamma_\beta^{(i)} > 0$, and let $(\theta_i^*, b_i^*)$ be the corresponding optimal parameters. First consider $i=0$. Then since $\gamma_\beta^{(0)} > 0$, by the KKT conditions we have that the dual variables corresponding to each of the $\frac{n}{4}$ copies of $x_\beta$, call them $\nu_\beta$, satisfy $\nu_\beta = \frac{1}{n}$. Moreover, note that this is the maximal value possible for the dual variables. Additionally, for points that are not support vectors, the dual variables are $\nu_i = 0$. The KKT conditions then imply:
\begin{equation}
    \theta^*_0 = \frac{1}{\lambda}\sum_{i=1}^{n} \nu_i x_i y_i \geq \frac{1}{\lambda}\left(\frac{n}{4} \frac{1}{n} (1+\delta) - \frac{n}{4} \frac{1}{n}(1-\delta) \right) = \frac{\delta}{2\lambda} = \frac{1}{2\delta}
\end{equation}
and $b^*_0 = \frac{1}{\lambda} \sum_{i=1}^{n} \nu_i y_i \geq 0$.
But this implies that $1-\theta_0^*(1+\delta)-b_0^* \leq 1-\frac{1}{2\delta}(1+\delta) = \frac{1}{2}(1-\frac{1}{\delta}) < 0$ (since $\delta < 1$), which contradicts that $\gamma_\beta^{(0)} > 0$.

Next, consider when $i=1$. This time, we also have $x_q$ as a support vector. Again using the fact that the corresponding dual variable is at most $\nu_q \leq \frac{1}{n}$, we have:
\begin{equation}
    \theta^*_1 = \frac{1}{\lambda}\sum_{i=1}^{n} \nu_i x_i y_i \geq \frac{1}{\lambda}\left(\frac{\delta}{2}-\frac{1}{n}  \right) \geq \frac{1}{\lambda}\left(\frac{\delta}{2}-\delta^2  \right) \geq \frac{1}{\lambda}\left(\frac{\delta}{2}-\frac{\delta}{4}  \right) = \frac{1}{4\delta}
\end{equation}
and $b^*_1 = \frac{1}{\lambda} \sum_{i=1}^{n} \nu_i y_i \geq -\frac{1}{\lambda n}$.
But this implies that $1-\theta_1^*(1+\delta)-b^*_1 \leq 1-\frac{1}{4\delta}(1+\delta)+\frac{1}{\delta^2 n} \leq \frac{1}{4}(3-\frac{1}{\delta}) + 1 < 0$, contradicting that $\gamma_\alpha^{(1)} > 0$ and proving the claim.

\textbf{Claim:} It cannot hold that $\gamma_\alpha^{(i)} = \gamma_\beta^{(i)} = 0$ for either $i \in \{0, 1\}$.

First consider $i=0$. We will show that in this case $\theta^*_0 = \frac{1}{\delta}$ and $b^*_0 = -\frac{1}{\delta}$. By assumption, we know that $1 + \theta^* + b^* - \theta^* \delta \leq 0$ and $1 - (\theta^* + b^*) - \theta^* \delta \leq 0$.
Summing these implies that 
\begin{equation}
    1 - \theta^*_0 \delta \leq 0 \Rightarrow \theta^*_0 \geq \frac{1}{\delta}
\end{equation}
Combining this last result with $\gamma_\alpha^{(0)} = 0$:
\begin{equation}
    0 \geq 1 + \theta^*_0(1-\delta) + b^*_0 \geq 1 + \frac{1}{\delta}(1-\delta) + b^*_0 = \frac{1}{\delta} + b^*_0 \Rightarrow b^*_0 \leq -\frac{1}{\delta}
\end{equation}
We have that $\theta = \frac{1}{\delta}$ and $b = -\frac{1}{\delta}$ satisfy $\gamma_\alpha(\theta, b) = \gamma_\beta(\theta, b) = 0$. By these last two equations, they are also clearly the \textit{smallest norm} values satisfying these; hence they indeed minimize the overall optimization problem. But then $F_0(\theta, b) = \frac{\lambda}{2}(\frac{2}{\delta^2}) = 1$. But since we also have that $F_0(0, 0) = 1$, this cannot be the optimal solution, so we indeed cannot have $\gamma_\alpha^{(0)} = \gamma_\beta^{(0)} = 0$.

Now consider when $i=1$. By the same argument as before, we must again have that if $\gamma_\alpha^{(1)} = \gamma_\beta^{(1)} = 0$ then $\theta^*_1 \geq \frac{1}{\delta}$ and $b^*_1 \leq -\frac{1}{\delta}$. Observe also that $F_1(\theta, b) = F_0(\theta, b) + \frac{1}{n}\gamma_q(\theta, b) \geq F_0(\theta, b)$. Hence:
\begin{equation}
    F_1(\theta_1^*, b_1^*) \geq F_0(\theta_1^*, b_1^*) = \frac{\lambda}{2}(2\frac{1}{\delta^2}) = 1
\end{equation}
which again contradicts the optimality of $(\theta_1^*, b_1^*)$ since $F_1(0, 0) = 1$. The proof of the claim is complete.

Since under the assumption of the lemma we cannot have $\gamma_\alpha^{(i)} = \gamma_\beta^{(i)} = 0$, and we cannot have that $\gamma_\beta^{(i)} > 0$, we must have that $\gamma_\alpha^{(i)} > 0$ and $\gamma_\beta^{(i)} = 0$, concluding the proof.
\end{proof}

Next, we show what the optimal parameters are in this case.

\begin{lemma}\label{lem:optlb2}
Suppose that $\gamma_\alpha^{(0)} > 0$ and $\gamma_\beta^{(0)} = 0$. Then we have:
\begin{equation}
    \theta^*_0 = \frac{2\lambda(1+\delta) + \delta}{2\lambda(1+(1+\delta)^2)}
\end{equation}
and $b^*_0 = 1-(1+\delta)\theta^*_0$.

Similarly, if $\gamma_\alpha^{(1)} > 0$ and $\gamma_\beta^{(1)} = 0$. Then we have:
\begin{equation}
    \theta^*_1 = \frac{2\lambda(1+\delta) + \delta(1+\frac{1}{n})}{2\lambda(1+(1+\delta)^2)}
\end{equation}
and $b^*_1 = 1-(1+\delta)\theta^*_1$.
\end{lemma}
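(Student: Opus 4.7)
The plan is to extract both pairs of optimal parameters directly from the KKT conditions catalogued in Section 4.2.1, using the hypothesized signs of $\gamma_\alpha^{(i)}$ and $\gamma_\beta^{(i)}$ to determine which data points carry nonzero dual variables, and then to solve the resulting linear system.

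For Case 0 my first step would be to use $\gamma_\alpha^{(0)} > 0$ together with the KKT identities $\eta_i\gamma_i = 0$ and $\eta_i = 1/n - \nu_i$ to conclude that each copy of $x_\alpha$ carries dual mass $\nu_\alpha = 1/n$, and that the non-support-vector points contribute zero. For the $\beta$ points I would introduce an unknown aggregate dual $\nu := (n/4)\nu_\beta$; substitution into the stationarity formulas gives $\lambda\theta_0^* = \nu(1+\delta) - (1-\delta)/4$ and $\lambda b_0^* = \nu - 1/4$. The third equation needed to solve for $\nu$ comes from arguing that the $\beta$-constraint is active, i.e., $\theta_0^*(1+\delta)+b_0^* = 1$: if instead $\nu_\beta = 0$, the two displayed formulas force $\theta_0^* < 0$, contradicting $\gamma_\beta^{(0)} = 0$ on direct inspection. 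Plugging the two stationarity expressions into the active constraint gives a single linear equation for $\nu$, and after the simplification $(1+\delta)(2-\delta^2) - (1-\delta)((1+\delta)^2+1) = 2\delta$ one recovers the claimed formula for $\theta_0^*$; the relation $b_0^* = 1 - (1+\delta)\theta_0^*$ is then immediate from the active constraint.

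For Case 1 the plan is identical modulo bookkeeping the lone point at $x_q = 1$ with $y_q = -1$. The extra ingredient is to verify $\gamma_q^{(1)} > 0$, so that the same complementary-slackness argument gives $\nu_q = 1/n$. I would do this by leveraging Case 0: at $(\theta_0^*, b_0^*)$ one has $1 + \theta_0^* + b_0^* = 2 - \delta\theta_0^*$, which is positive when $\lambda = \delta^2$ and $\delta$ is small (since the formula for $\theta_0^*$ then gives $\theta_0^* = \Theta(1/\delta)$), and the Case-1 optimum can only differ from the Case-0 optimum by $O(1/n)$. The stationarity formulas pick up additional $-1/n$ shifts in both equations, and repeating the elimination with the auxiliary identity $(1+\delta)(2+\delta) - ((1+\delta)^2+1) = \delta$ yields the stated $\theta_1^*$.

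The main obstacle, as I see it, is not the algebra but confirming the two structural ingredients it rests on: that the $\beta$-margin constraint is tight in both cases, and that $\gamma_q^{(1)} > 0$. I expect both to follow from the same style of case-exclusion already used in Lemma 4.2, with careful sign and magnitude tracking under the assumption $\lambda = \delta^2$; the algebraic simplifications are routine once those facts are in hand.
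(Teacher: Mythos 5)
Your route is sound and differs from the paper's mainly in mechanism. The paper argues in the primal: it first shows the $\beta$-margin is exactly tight, $1-(\theta^*_i+b^*_i)-\theta^*_i\delta=0$, by a perturbation argument (if the inequality were strict, slightly decreasing $\theta$ would decrease the regularizer, $\gamma_\alpha$, and, for $i=1$, $\gamma_q$, contradicting optimality), then substitutes $b=1-(1+\delta)\theta$ and minimizes the resulting single-variable function by differentiation. You instead work through the KKT system of Section 4.1: complementary slackness gives $\nu_\alpha=\tfrac1n$ (from $\gamma_\alpha^{(i)}>0$) and $\nu_i=0$ for non-support vectors, your ``$\nu_\beta=0$ forces $\theta^*_0<0$, $b_0^*<0$'' contradiction shows the $\beta$-constraint is active, and stationarity then yields a linear system for $(\theta^*,b^*,\nu)$. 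Both proofs rest on the same structural facts (which hinge terms are active); your dual bookkeeping is a legitimate alternative, and your case-0 elimination, including the identity $(1+\delta)(2-\delta^2)-(1-\delta)((1+\delta)^2+1)=2\delta$, checks out.

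Two caveats. First, the genuinely incomplete step is your verification of $\gamma_q^{(1)}>0$: the claim that the case-1 optimum is within $O(1/n)$ of the case-0 optimum is itself unproven, and the natural strong-convexity perturbation bound gives $O(1/(\lambda n))$, which is $O(1/n)$ only when $\lambda$ is constant; in the low-dimensional applications $\lambda=\Theta(1/n)$ or $\Theta(1/n^2)$, so as written the estimate does not close the step. It can be repaired either by combining the cruder perturbation bound with the already-established active constraint, which gives $1+\theta_1^*+b_1^*=2-\delta\theta_1^*$ so that only a bound on $|\theta_1^*-\theta_0^*|$ of order $1/\delta$ is needed, or, more simply, by reading off from your own stationarity equation that $\lambda\theta_1^*\le \nu(1+\delta)-\tfrac{1-\delta}{4}\le\tfrac{\delta}{2}$ (the aggregate $\beta$-dual is at most $\tfrac14$ and the $\alpha$- and $q$-contributions are nonpositive), whence $\theta_1^*\le \tfrac{1}{2\delta}$ and $\gamma_q^{(1)}=2-\delta\theta_1^*\ge\tfrac32>0$. (The paper's own proof linearizes the $\gamma_q$ term without checking this, so you are not behind the paper here, but your proposed justification needs this patch.) Second, carrying out your case-1 elimination correctly gives numerator $2\lambda(1+\delta)+\delta\bigl(1+\tfrac{2}{n}\bigr)$, not $\delta\bigl(1+\tfrac1n\bigr)$: the $\tfrac1n$ shifts contribute $\tfrac{\delta}{n}$ to $\lambda\theta_1^*$, which doubles when the expression is placed over the denominator $2\lambda(1+(1+\delta)^2)$. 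So the assertion that your identity $(1+\delta)(2+\delta)-((1+\delta)^2+1)=\delta$ ``yields the stated $\theta_1^*$'' hides a factor-of-two slip; as it happens, the stated formula itself appears to carry the same slip (the paper's substitute-and-differentiate computation also produces $1+\tfrac2n$ when done carefully), and the discrepancy is harmless downstream since Theorem 4.5 only needs $\theta_1^*-\theta_0^*\ge \delta/(5\lambda n)$.
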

\begin{proof}
First, observe that for $i \in \{0, 1\}$ we not only have $\gamma_\beta^{(i)} = \max\{0, 1-(\theta^*_i + b^*_i) - \theta^*_i \delta\} = 0$, but also have that \begin{equation}
    1-(\theta^*_i + b^*_i) - \theta^*_i \delta = 0    
\end{equation}
To see this, suppose otherwise, that $1-(\theta^*_i + b^*_i) - \theta^*_i \delta < 0$. Then there exists some $0 < \theta' < \theta^*_i$ such that we still have $1-(\theta' + b^*_i) - \theta' \delta = 1-\theta'(1 + \delta) - b^*_i < 0$. Moreover, this can only decrease the regularization term, $\frac{\lambda}{2}\theta^2$, and can only decrease the $\gamma_\alpha^{(i)} = \max\{0, 1+\theta(1-\delta)+b\}$ term. In the case that $i=1$, this can also only decrease the $\gamma_q^{(1)} = \max\{0, 1+\theta + b\}$ term. Either way, this contradicts the optimality of $\theta^*_i$. Hence, $b^*_i = 1-\theta^*_i(1+\delta)$. 

For $i=0$, plugging this in and simplifying, we can rewrite the optimization problem as:
\begin{equation}
    \min_{\theta} \frac{\lambda}{2}(\theta^2 + (1-\theta(1+\delta))^2) + \frac{1}{2}(1 - \theta \delta)
\end{equation}
Differentiating with respect to $\theta$ and setting the expression equal to zero,
\begin{align*}
    0 &= \left.\frac{\partial}{\partial \theta} \left[\frac{\lambda}{2}(\theta^2 + (1-\theta(1+\delta))^2) + \frac{1}{2}(1 - \theta \delta)\right]\right|_{\theta=\theta_0^\ast} \\
    &= \frac{\lambda}{2}(2\theta_0^* - 2(1-\theta_0^*(1+\delta))(1+\delta)) - \frac{\delta}{2} \\
    &= \lambda\theta_0^*(1+ (1+\delta)^2) - \lambda(1+\delta) - \frac{\delta}{2} 
\end{align*}
from which we can solve for $\theta^*_0$ and obtain that
\[
\theta_0^* = \frac{2\lambda(1+\delta) + \delta}{2\lambda(1+(1+\delta)^2)}.
\]

Similarly, for $i=1$, we can rewrite the optimization problem as:
\begin{equation}
    \min_{\theta} \frac{\lambda}{2}(\theta^2 + (1-\theta(1+\delta))^2) + \frac{1}{4}(1+\theta + 1 - \theta(1+\delta) - \theta \delta) + \frac{1}{n}(1+\theta+1-\theta\delta)
\end{equation}
We can again differentiate with respect to $\theta$ and set the expression equal to zero, then solve for $\theta^*_1$, yielding the desired expression.
\end{proof}

We now adapt these results to the case that we care about: when $d > 1$ and there are points other than just $x_\alpha$, $x_\beta$, and $x_q$. 

First, note that when $d > 1$ and the only possible support vectors are $x_\alpha$, $x_\beta$, and $x_q$, $\theta^*_i \in \R^d$ is parallel to $x_q$. Hence, $\theta^*_i$ reduces to the one-dimensional case, and is simply projected onto $x_q$. For $d$ dimensions, we can thus replace what was previously $\theta_i^* \in \R$ with $(\theta_i^*)^T x_q \in \R$, where  $\theta_i^*, x_q \in \R^d$.

Recall that up to this point we have been assuming that the remaining $\frac{n}{2}$ or $\frac{n}{2} - 1$ points that were added by Alice are \textit{not} support vectors. We will now show that this is the case. 

Redefine $S_0$ to be the set of points including $\frac{n}{4}$ copies of $(x_\alpha, -1)$, $\frac{n}{4}$ copies of $(x_\beta, +1)$, and $\frac{n}{2}$ arbitrary points $\{v_i\}_{i=1}^{n/2}$, but this time with the requirement that $v_i^T x_q < 1-10\delta$, $\forall i \in [\frac{n}{2}]$. Similarly, let $S_1$ be the set of points including $\frac{n}{4}$ copies of $(x_\alpha, -1)$, $\frac{n}{4}$ copies of $(x_\beta, +1)$, one copy of $(x_q, -1)$ and $\frac{n}{2}-1$ arbitrary points $\{v_i\}_{i=1}^{n/2-1}$, all satisfying $v_i^T x_q < 1-10\delta$. 

Define
\begin{equation}
    F_\lambda^{(0)}(\theta, b; \lambda) := \frac{\lambda}{2}(\|\theta\|^2 + b^2) + \frac{1}{n}\sum_{(x, y) \in S_0} \max\{0, 1-y(\theta^T x + b)\}
\end{equation}
and
\begin{equation}
    F_\lambda^{(1)}(\theta, b; \lambda) := \frac{\lambda}{2}(\|\theta\|^2 + b^2) + \frac{1}{n}\sum_{(x, y) \in S_1} \max\{0, 1-y(\theta^T x + b)\}
\end{equation}

Finally, let $(\theta_i^*, b_i^*) = \argmin_{\theta, b} F_\lambda^{(i)}(\theta, b)$ for $i \in \{0, 1\}$.

\begin{lemma}\label{lem:optlb3}
Suppose $\lambda = \delta^2$, $\delta < \frac{1}{7}$ and $n \geq \frac{1}{\delta^2} = \frac{1}{\lambda}$. Then for both $S_0$ and $S_1$, none of the points $v$ satisfying $v^T x_q < 1-10\delta$ are support vectors. Moreover, we have
\begin{equation}
    \theta_0^* = \left(\frac{2\lambda(1+\delta) + \delta}{2\lambda(1+(1+\delta)^2)}\right)x_q
\end{equation}
\begin{equation}
    \theta_1^* = \left(\frac{2\lambda(1+\delta) + \delta(1+\frac{1}{n})}{2\lambda(1+(1+\delta)^2)}\right)x_q
\end{equation}
and $b_i^* = 1-\|\theta_i^*\|(1+\delta)$ for both $i \in \{0, 1\}$.
\end{lemma}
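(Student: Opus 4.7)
The plan is to build a candidate minimizer directly from the one-dimensional analysis of Lemma \ref{lem:optlb2} and then certify its global optimality by verifying that the additional points $v_i$ contribute zero hinge loss. Concretely, I would introduce the auxiliary objective $\tilde F_\lambda^{(i)}(\theta,b)$ obtained from $F_\lambda^{(i)}$ by removing every $v_i$ term, keeping only the $\tfrac{n}{4}$ copies of $(x_\alpha,-1)$, the $\tfrac{n}{4}$ copies of $(x_\beta,+1)$, and, in the $i=1$ case, the single point $(x_q,-1)$. The key observation is that all remaining points are scalar multiples of the unit vector $x_q$. Writing $\theta = \beta x_q + \theta_\perp$ with $\theta_\perp \perp x_q$, every hinge loss in $\tilde F_\lambda^{(i)}$ depends only on $(\beta,b)$, while the regularizer picks up an extra $\tfrac{\lambda}{2}\|\theta_\perp\|^2$. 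Thus every minimizer of $\tilde F_\lambda^{(i)}$ must have $\theta_\perp = 0$, and the remaining minimization over $(\beta,b)$ is exactly the one-dimensional problem solved in Lemmas \ref{lem:optlb1} and \ref{lem:optlb2}. Invoking those lemmas produces the candidate $\theta_i^* = \beta_i^* x_q$ with $\beta_i^*$ as in the statement, and $b_i^* = 1 - (1+\delta)\beta_i^*$.

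Next I would verify that at this candidate every $v_i$ lies strictly outside its margin, so its hinge loss vanishes. Since each $v_i$ carries label $-1$, the required condition is $\theta_i^{*T} v_i + b_i^* \le -1$. Using $v_i^T x_q < 1-10\delta$ and the explicit form of $b_i^*$ gives
\[
\theta_i^{*T}v_i + b_i^* \;<\; \beta_i^*(1-10\delta) + 1 - (1+\delta)\beta_i^* \;=\; 1 - 11\delta\,\beta_i^*,
\]
so it suffices to check $11\delta\,\beta_i^* \ge 2$. Since $\beta_1^* \ge \beta_0^*$, it is enough to handle $\beta_0^*$: the closed form, combined with $\lambda=\delta^2$ and $\delta<1/7$, gives a bound of the form $\beta_0^* \ge 1/\bigl(2\delta(1+(1+\delta)^2)\bigr)$, which exceeds $2/(11\delta)$ by an elementary numerical check.

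Global optimality of $(\theta_i^*,b_i^*)$ for $F_\lambda^{(i)}$ then follows from the sandwich
\[
F_\lambda^{(i)}(\theta,b) \;\ge\; \tilde F_\lambda^{(i)}(\theta,b) \;\ge\; \tilde F_\lambda^{(i)}(\theta_i^*,b_i^*) \;=\; F_\lambda^{(i)}(\theta_i^*,b_i^*),
\]
where the first inequality is nonnegativity of the dropped hinge losses, the second is the minimality of $(\theta_i^*,b_i^*)$ for $\tilde F_\lambda^{(i)}$, and the final equality is the margin check. In particular every $v_i$ then has dual variable zero, confirming it is not a support vector. The main obstacle is the margin check: the coupling $\lambda = \delta^2$ is precisely what forces $\beta_i^* = \Theta(1/\delta)$, which is in turn what makes $11\delta\,\beta_i^* \ge 2$ hold under $\delta < 1/7$. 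If the gap $1-10\delta$ in the definition of the $v_i$'s were relaxed, or $\lambda$ were too small relative to $\delta$, the candidate would no longer make the $v_i$'s inactive and the reduction to the one-dimensional problem would fail.
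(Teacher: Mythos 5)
Your proposal is correct and follows essentially the same route as the paper: reduce to the one-dimensional problem of Lemmas~\ref{lem:optlb1} and~\ref{lem:optlb2} along the direction $x_q$, and then certify that every $v$ with $v^T x_q < 1-10\delta$ incurs zero hinge loss at the candidate via exactly the paper's computation $1+(\theta_i^*)^T v + b_i^* < 2 - 11\delta\|\theta_i^*\| < 0$, using $\|\theta_i^*\| = \Theta(1/\delta)$ under $\lambda=\delta^2$. Your explicit $\theta_\perp$-decomposition and the sandwich inequality $F_\lambda^{(i)} \geq \tilde F_\lambda^{(i)} \geq \tilde F_\lambda^{(i)}(\theta_i^*,b_i^*) = F_\lambda^{(i)}(\theta_i^*,b_i^*)$ simply make rigorous the reduction the paper invokes implicitly through the KKT representation of $\theta^*$.
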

\begin{proof}
By the preceding discussion and lemmas, it suffices to show that for $(v, -1) \in S_i$ such that $v^T x_q < 1-10\delta$, $(v, -1)$ is not a support vector. This implies that $(\theta_i^*, b_i^*)$ does not depend on such $v$, so that the equations for the optimal parameters from before (adapted slightly for $d > 1$) indeed hold. For this, it suffices to show that for the $(\theta_i^*, b_i^*)$ described above, we have $1+((\theta^*_i)^T v + b^*_i) < 0$ when $v^T x_q < 1-10\delta$.


For the given $\delta$, $\lambda$, and $n$, one can easily verify that for both $i=0$ and $i=1$ we have $\|\theta_i^*\| \geq \frac{\delta}{5\lambda}$. Moreover, since $b_i^* = 1-(1+\delta)\|\theta_i^*\|$ and $(\theta_i^*)^T x_q = \|\theta_i^*\|$, we have:
\[
    1+(\theta_i^*)^T v + b_i^* \leq 1 + (1-10\delta)\|\theta_i^*\| + 1-(1+\delta)\|\theta_i^*\| 
    = 2 - 11\delta \|\theta_i^*\| \leq 2 - 11\delta \frac{\delta}{5\lambda} = 2 - \frac{11}{5} < 0
\]
proving the claim.
\end{proof}

\subsection{Proofs of Main Results}
We now use the preceding lemmas to complete the proof of Theorem~\ref{thm:main_thm}.
\begin{proof}
Let $n = \frac{1}{20\sqrt{\eps}}$, and let $x_q$ be the point being queried by Bob. If $x_q$ was added to the set by Alice, then $i := b_q = 0$; otherwise, $i := b_q = 1$. Let $(\hat{\theta}, \hat{b})$ be the output of the sketch. Then using the guarantee of the sketch and of strong convexity, we have that $\|(\hat{\theta}, \hat{b}) - (\theta^*_i, b^*_i)\| \leq \sqrt{\frac{2\eps}{\lambda}}$. Hence, to distinguish these two scenarios, it suffices to show that $\|(\theta^*_0, b^*_0) - (\theta^*_1, b^*_1)\| > 2\sqrt{\frac{2\eps}{\lambda}}$. Indeed,
\[
    \|(\theta^*_0, b^*_0) - (\theta^*_1, b^*_1)\| \geq \|\theta^*_0 - \theta^*_1\| = \frac{1}{n}\frac{\delta}{2\lambda(1+(1+\delta)^2)}
     \geq \frac{\delta}{5\lambda n} = \frac{20\delta\sqrt{\eps}}{5\lambda} = 4\sqrt{\frac{\eps}{\lambda}} > 2\sqrt{\frac{2\eps}{\lambda}},
\]
proving the claim.
\end{proof}

Now we extend this result to the low-dimensional case. When $d$ is a constant, the above lower bound cannot be directly applied because there does not exist a set $T$ of size $|T| = n$ such that $\forall v \neq v' \in T$, $v^T v' < 1-10\delta$ when $\delta$ is a constant. However, we can adapt the lower bound to the low dimensional setting if we let $\delta$ be sub-constant. Since we always maintain the relationship that $\lambda = \delta^2$, this means that $\lambda$ also scales with $\frac{1}{n}$. Note that $\lambda = \Theta(\frac{1}{n})$ is often used in practice.

\begin{theorem}\label{thm:opt_lb:thm2}
    For $d = 2$ and $\lambda = \Theta(\frac{1}{n^2})$, a sketch as defined above requires space $\Omega(\eps^{-1/4})$. For $d \geq 3$ and $\lambda = \Theta(\frac{1}{n})$, such a sketch requires space $\Omega(\eps^{-1/2})$. 
\end{theorem}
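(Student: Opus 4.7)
The plan is to adapt the Indexing reduction from Theorem~\ref{thm:main_thm} to fixed dimension by taking $\delta$ sub-constant, keeping the relation $\lambda=\delta^2$. In the high-dimensional proof, $d = \Theta(\log n)$ was used only to pack $|T| = n$ vectors on $S^{d-1}$ with pairwise inner products below $1 - 10\delta$ at constant $\delta = 1/100$. In fixed dimension $d$, a standard volume argument yields only $|T| = \Theta(\delta^{-(d-1)/2})$ such vectors, so I would scale $\delta$ down with $n$ and accept the corresponding decrease in $\lambda$.

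For $d = 2$, I would take $\delta = \Theta(1/n)$, so $\lambda = \delta^2 = \Theta(1/n^2)$ and $|T| = \Theta(\sqrt n)$; Alice then encodes $\Theta(\sqrt n)$ bits exactly as in Theorem~\ref{thm:main_thm} (choosing between $v_i$ and $v_{|T|/2+i}$ per bit, while Bob appends $n/4$ copies of $x_\alpha = (1-\delta)x_q$ and $x_\beta = (1+\delta)x_q$ and decodes by comparing $\hat\theta$ against $\theta_0^*$ and $\theta_1^*$). The distinguishability check
\[
\|\theta_0^*-\theta_1^*\| \;=\; \Theta\!\left(\tfrac{\delta}{\lambda n}\right) \;=\; \Theta(1) \;\;>\;\; 2\sqrt{2\eps/\lambda} \;=\; \Theta(n\sqrt\eps)
\]
is satisfied precisely when $n = O(\eps^{-1/2})$, yielding the claimed lower bound $\Omega(\sqrt n) = \Omega(\eps^{-1/4})$. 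For $d \ge 3$, the packing $\Theta(\delta^{-(d-1)/2})$ supports $|T| = n$ once $\delta \le \Theta(n^{-2/(d-1)})$; for $d \ge 5$ the choice $\delta = \Theta(n^{-1/2})$ (hence $\lambda = \Theta(1/n)$) is admissible, while for $d \in \{3,4\}$ one must instead take $\lambda = \Theta(n^{-4/(d-1)})$. In each case Alice encodes $\Theta(n)$ bits, the same gap computation forces $n = O(\eps^{-1/2})$, and the resulting lower bound is $\Omega(n) = \Omega(\eps^{-1/2})$.

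The hard part will be redoing Lemmas~\ref{lem:optlb1}--\ref{lem:optlb3} in this parameter regime, because their hypothesis $n \ge 1/\delta^2$ is violated once $\delta$ is as small as $1/n$. Examining the proof of Lemma~\ref{lem:optlb1}, that hypothesis enters only when ruling out $\gamma_\beta^{(1)} > 0$: a possible support-vector contribution from $x_q$ of size $1/(\lambda n)$ could pull $\theta_1^*$ downward, and the current argument bounds this via $1/n \le \delta^2$. I would remedy this by scaling Bob's construction so that $\Theta(1/\delta)$ rather than $n/4$ copies of $x_\alpha,x_\beta$ are used, tilting the dual balance enough that the contribution of $x_q$ is absorbed and $\theta_i^*$ remains large positive under the weaker hypothesis $n \ge c/\delta$. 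Lemma~\ref{lem:optlb2} then still provides the sharp closed form $\|\theta_i^*\|=\Theta(1/\delta)$, which already implies the non-support-vector inequality $1 + \|\theta_i^*\|(1-10\delta) + b_i^* < 0$ used in Lemma~\ref{lem:optlb3} for every filler point, and the rest of the high-dimensional argument transfers verbatim.
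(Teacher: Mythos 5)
Your overall route is the same as the paper's: rerun the \textsc{Indexing} reduction of Theorem~\ref{thm:main_thm} with sub-constant $\delta$, keep $\lambda=\delta^2$, rely on Lemmas~\ref{lem:optlb1}--\ref{lem:optlb3} for the two closed-form optima, and observe that the distinguishability condition $\tfrac{\delta}{5\lambda n}>2\sqrt{2\eps/\lambda}$ reduces to $n=O(\eps^{-1/2})$ independently of $\delta$. The substantive divergence is the packing count. You use the correct bound $\Theta(\delta^{-(d-1)/2})$ for unit vectors with pairwise inner product at most $1-10\delta$ (Euclidean separation $\Theta(\sqrt{\delta})$), whereas the paper's proof asserts that $n=O(\delta^{-(d-1)})$ positions suffice. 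With your count the argument recovers the stated theorem for $d=2$ (with $\delta=\Theta(1/n)$, $\lambda=\Theta(1/n^2)$, and $\Theta(\sqrt n)=\Theta(\eps^{-1/4})$ bits) and for $d\ge 5$, but for $d\in\{3,4\}$ it only gives the conclusion with $\lambda=\Theta(n^{-4/(d-1)})$ rather than $\lambda=\Theta(1/n)$; so, as a proof of the statement as written, your plan has a gap at $d=3,4$. Be aware that this gap mirrors the paper: its $d\ge 3$ case leans on the looser $\delta^{-(d-1)}$ packing claim, which is exactly the step you (rightly) declined to use.

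The second divergence concerns the hypothesis $n\ge 1/\delta^2$ of Lemmas~\ref{lem:optlb1} and~\ref{lem:optlb3}, which indeed fails once $\delta=\Theta(1/n)$; the paper's $d=2$ case does not address this, so you are flagging a real omission rather than deviating from the paper. However, your remedy --- replacing Bob's $n/4$ copies of $x_\alpha,x_\beta$ by $\Theta(1/\delta)$ copies --- is heavier than needed and is only sketched: with $m$ copies the hinge term $\tfrac12(1-\theta\delta)$ in Lemma~\ref{lem:optlb2} becomes $\tfrac{2m}{n}(1-\theta\delta)$, so $\theta_0^*,\theta_1^*$ and Bob's decoding thresholds must all be recomputed, and ``the rest transfers verbatim'' is not accurate (though the gap $\theta_1^*-\theta_0^*=\Theta(\delta/(\lambda n))$ is unaffected, since the queried point always carries weight $1/n$). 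A lighter fix keeps the construction unchanged: in the proof of Lemma~\ref{lem:optlb1} the hypothesis enters only through the bounds $\tfrac1n\le\tfrac{\delta}{4}$ and $b_1^*\ge-\tfrac{1}{\delta^2 n}$, and the contradiction $1-\theta_1^*(1+\delta)-b_1^*<0$ survives under the weaker condition $n\ge C/\delta$ for a suitable constant $C$ (which $\delta=c_0/n$ with $c_0$ large enough satisfies); Lemma~\ref{lem:optlb2} is untouched, and Lemma~\ref{lem:optlb3} only needs $\|\theta_i^*\|\ge\delta/(5\lambda)$, which follows from the closed forms. With that repair, your parameter choices for $d=2$ and $d\ge5$ go through; the $d=3,4$ cases remain open relative to the statement as given.
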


\begin{proof}
First, note that if $n = O(\delta^{-(d-1)})$ for $d > 1$, then there exists a subset $T$ of the unit sphere in $\mathbb{R}^d$ satisfying $\forall v \neq v' \in T$, $v^T v' \leq 1-10\delta$. Alice will use such a set to encode bits.

In the high dimensional lower bound construction, the constraints we had on the points were $\delta < \frac{1}{7}$ and $n \geq \frac{1}{\delta^2} = \frac{1}{\lambda}$. Moreover, for the final step of the analysis, we needed that $\frac{\delta}{5\lambda n} = \frac{1}{5\delta n} > 2\sqrt{\frac{2\eps}{\lambda}} = 2\sqrt{2\eps}\frac{1}{\delta}$, which is always satisfied for $n \leq \frac{1}{20\sqrt{\eps}}$ (regardless of $\lambda$ and $\delta$, as long as $\lambda = \delta^2$). Additionally, for the construction we must also have $n \leq \Theta(\delta^{-(d-1)})$. Hence, for $d = 3$, by letting $\lambda = \frac{1}{n} = \Theta(\eps^{1/2})$, all of these constraints are satisfied, so that we again get a $\frac{1}{\sqrt{\eps}}$ lower bound. For $d=2$ and $\lambda = \Theta(\eps^{1/2})$, this also implies that we can encode $\Theta(\delta^{-(d-1)}) = \Theta(\eps^{-1/4})$ bits.
\end{proof}

\newpage 

{\bf Acknowledgments:} Alexandr Andoni was supported in part by Simons Foundation (\#491119) and NSF (CCF-1617955, CCF- 1740833). Yi Li was supported in part by Singapore Ministry of Education (AcRF) Tier 2 grant MOE2018-T2-1-013. David P. Woodruff was supported by the National Science Foundation under Grant No. CCF-1815840.

\bibliography{camera_ready}

\appendix
\newpage

\section{Proof of Lemma~\ref{lem:str_conv}}\label{sec:str_conv_pf}

\begin{proof}
Note that $F_\lambda$ is $\lambda$-strongly convex. Let $w = (\theta, b)$ for notational simplicity. Then:
\begin{equation}
    F_\lambda(\hat{w}) \geq F_\lambda(w^*) + \nabla F_\lambda(w^*)^T(\hat{w}-w^*) + \frac{\lambda}{2}\|\hat{w}-w^*\|_2^2 = F_\lambda(w^*) + \frac{\lambda}{2}\|\hat{w}-w^*\|_2^2
\end{equation}
so that
\begin{equation}
    \|\hat{w}-w^*\|_2 \leq \sqrt{\frac{2}{\lambda} (F_\lambda(\hat{w}) - F_\lambda(w^*))} \leq \sqrt{\frac{2\eps}{\lambda}}
\end{equation}
as claimed.
\end{proof}


\section{Multiplicative $(1+\eps)$ sketching algorithm when $d=1$}\label{apx:pe_ub_sketch}
\subparagraph*{Sketching algorithm.} Assume that the points are sorted in increasing order, i.e., $x_1\leq \cdots \leq x_n$. In the sketch, the algorithm will store for each $i \in I =\{\ceil{(1+\eps)^j} | 0\leq j\leq\log_{(1+\eps)} n\}$ two values: the first is the position of the $i$-th point $x_i$, 
and the second is the sum of the distances between $x_i$ and the points to the left of $x_i$, i.e., $S_i = \sum_{j\leq i} (x_i - x_j)$. 

\begin{observation}
The size of the sketch is $O(\frac{\log n}{\eps}\cdot (\log n + \log W))$.
\end{observation}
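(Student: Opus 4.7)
The plan is to bound the two ingredients separately: the cardinality of the index set $I$, and the number of bits needed to store each of the two quantities (the position $x_i$ and the partial-sum $S_i$) associated with an index $i \in I$.

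First I would bound $|I|$. By definition $I = \{\lceil (1+\eps)^j\rceil : 0 \leq j \leq \log_{1+\eps} n\}$, so $|I| \leq \lfloor \log_{1+\eps} n\rfloor + 1$. Using the elementary inequality $\ln(1+\eps) \geq \eps/2$ valid for $\eps \in (0,1)$, I would conclude $\log_{1+\eps} n = \ln n / \ln(1+\eps) = O(\log n / \eps)$, hence $|I| = O(\log n / \eps)$.

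Second I would bound the bit-length of each stored quantity. Since $x_i \in [W]$, storing the position $x_i$ costs $O(\log W)$ bits. For $S_i = \sum_{j \leq i}(x_i - x_j)$, each summand is a nonnegative integer at most $W$ and there are at most $n$ summands, so $0 \leq S_i \leq nW$ and $S_i$ requires $O(\log(nW)) = O(\log n + \log W)$ bits. Multiplying the number of indices by the bits per index yields
\[
|I|\cdot \bigl(O(\log W) + O(\log n + \log W)\bigr) = O\!\left(\tfrac{\log n}{\eps}\,(\log n + \log W)\right),
\]
as claimed. There is no real obstacle here — the statement is essentially a direct counting argument, and the only small care needed is the standard $\ln(1+\eps) = \Theta(\eps)$ estimate used to convert $\log_{1+\eps} n$ into $O(\log n/\eps)$.
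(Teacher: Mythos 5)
Your proof is correct and matches the reasoning the paper implicitly relies on (the paper states this observation without proof, as it is a direct count: $O(\log n/\eps)$ indices in $I$, each storing a position of $O(\log W)$ bits and a partial sum bounded by $nW$, hence $O(\log n + \log W)$ bits). The $\ln(1+\eps)=\Theta(\eps)$ estimate you invoke is exactly the standard step needed, so there is nothing missing.
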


\subparagraph*{Query algorithm.} Given the query $q\in \mathbb{R}$, report $T = S_j + j\cdot (q-x_j)$, where $j$ is the largest number in $I$ such that $x_j \leq q$.

\subparagraph*{Analysis.} Note that our estimate is equal to $T = \sum_{i\leq j} (x_j - x_i + q - x_j)$. Thus for all $i\leq j$, the distance $q-x_i$ has been computed in the approximation $T$, and so $T$ is a lower bound for the exact value. Moreover, as two consecutive indices in $\sI$ differ by a factor of $(1+\eps)$, there are at most $(1+\eps)j - j = \eps j$ points $i$ such that $x_j < x_i \leq q$. These are exactly the points whose contribution is not counted in $T$. However, those points are contributing to the goal function by at most $\eps j (q-x_j) \leq \eps T$. Thus $T$ is a $(1+\eps)$ approximation.

\section{Analysis for Sum of Squared Distances}
\label{apx:squaredDist}

In this section, we consider the sum of squared distances for the mis-classified points, and present similar results to the sum of distances case. The proofs are analogous to the sum of distances case and we only include them for the sake of completeness.

\begin{theorem}
There exists a one pass streaming algorithm for the point estimation
variant of the problem under the squared distance in the one dimensional regime, that achieves an
additive error of $\eps$, space of $O(\eps^{-1/3}\sqrt{\log
(1/\eps)})$ words, and that succeeds with constant probability per query. Moreover, $\Omega(\eps^{-1/3})$ is necessary.
\end{theorem}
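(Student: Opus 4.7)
The plan is to adapt the streaming algorithm from Theorem~\ref{thm:add_ub_1d} by re-tuning the group parameters and enlarging the per-node summary. The crucial observation is that for any set $P_v$ of points contained in an interval, we can exactly evaluate $\sum_{x\in P_v}(q-x)^2 = c_v q^2 - 2q\, s_v^{(1)} + s_v^{(2)}$ from the three sufficient statistics $c_v = |P_v|$, $s_v^{(1)} = \sum_{x\in P_v} x$ and $s_v^{(2)} = \sum_{x\in P_v} x^2$, each fitting in a single word. For a node whose interval lies entirely to the left of the query $q$ this gives the exact contribution. For a ``crossing'' node whose interval contains $q$, the issue is that some points lie to the right of $q$ and should contribute $0$ rather than $(q-x)^2$; either using the estimator or ignoring the node introduces an error of at most $c_v d_v^2/n$, where $d_v$ is the diameter of the node's interval.

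Setting the parameters to enforce $c_v \leq n\eps^{1/3}$ (via splitting) and $d_v \leq \eps^{1/3}$ (via initial tree depth) makes the per-crossing-node error $O(\eps)$, while the balancing condition $1/d = n/c$ yields $O(\eps^{-1/3})$ total nodes. Concretely, I would instantiate the binary-tree construction of Theorem~\ref{thm:add_ub_1d}: start with a tree over $[0,1]$ of initial depth $\log(\eps^{-1/3})$, assign each arriving point to the leaf containing it and update its $(c_v, s_v^{(1)}, s_v^{(2)})$, and split a leaf into two children whenever its count reaches $n\eps^{1/3}$, with the exception that we stop splitting once the depth exceeds $\log(\eps^{-1/2})$ (so that $d_v \leq \sqrt\eps$ gives error $\leq \eps$ for free, regardless of count). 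The total node count is $O(\eps^{-1/3})$ by the same charging argument as before, and the chain of crossing nodes from the root to the leaf containing $q$ has length $O(\log(1/\eps))$, so the overall additive error is $O(\eps\log(1/\eps))$; rescaling $\eps$ absorbs this into the claimed space bound.

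\textbf{Lower bound.} The plan is to reduce from \textsc{Augmented Indexing} exactly as in the $d=1$ sum-of-distances lower bound, but with geometry tuned to the squared cost. Set $r := \Theta(\eps^{-1/3})$ and, for each bit $i$ of Alice's string that is $1$, place $n/r$ copies of the point $i/r \in [0,1]$. To recover bit $i$, Bob queries $q = (i+1)/r$ and subtracts the contribution of bits $1,\dots,i-1$, which he knows by the augmented assumption. Bits $j>i$ sit to the right of $q$ and contribute $0$, so the only remaining contribution is from bit $i$, which equals $(n/r)\cdot(1/r)^2\cdot(1/n) = 1/r^3 = \Theta(\eps)$ if the bit is $1$ and $0$ otherwise. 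An additive error below $\tfrac13/r^3$ therefore decodes the bit, forcing $\Omega(r) = \Omega(\eps^{-1/3})$ bits of space.

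\textbf{Main obstacle.} The delicate part is the streaming analysis: one must pick the initial tree depth, the split threshold, and the expansion-exception depth so that every crossing node satisfies $c_v d_v^2 / n \leq \eps$ under \emph{either} the split rule \emph{or} the deep-node exception, and so that no level of the tree is excluded from the charging that bounds the number of nodes by $O(\eps^{-1/3})$. Once these parameters are aligned, the rest of the proof (correctness for non-crossing nodes via exact sufficient statistics, telescoping the error over the $O(\log(1/\eps))$-length crossing chain, and rescaling $\eps$ by a polylog factor to hit the target error) follows the template of Theorem~\ref{thm:add_ub_1d} without further surprises.
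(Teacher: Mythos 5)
Your proposal is correct and follows essentially the same route as the paper: partition with the balanced thresholds (diameter $\eps^{1/3}$, count $n\eps^{1/3}$, so each crossing group errs by at most $c_v d_v^2/n\le\eps$), adapt via the binary-tree streaming construction of Theorem~\ref{thm:add_ub_1d}, and reuse the one-dimensional \textsc{Augmented Indexing} reduction with spacing $1/r=\eps^{1/3}$ to get $\Omega(\eps^{-1/3})$. If anything you are more explicit than the paper's appendix, which only sketches the streaming adaptation and (by a copy-paste of the linear case) omits the second-moment statistic $\sum_{x\in P_v}x^2$ that your exact per-node evaluation correctly includes.
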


\begin{proof}

Recall that for $d=1$, the objective simplifies to $F(q) =\tfrac{1}{n}\sum_{i=1}^n (\max\{0,q-x_i\})^2$. We describe a {\em sketching} algorithm that produces a sketch of
size $O(\eps^{-1/3})$ that is able to answer point estimation
queries to this $F$. Later, we show how to adapt this algorithm to the streaming
setting.

Let $m=(\eps^{-1/3})$ and consider two sets of $m$ points. First consider $Y_1,\dots, Y_{m}$ such that $Y_i$ is at position $i/m$. Moreover consider $m$ points $X_1,\dots,X_{m}$ such that $X_i$ is at position $x_{(i\cdot n)/m}$, where we assume that $x_i$'s are in a sorted order, i.e., $x_1 \leq \cdots \leq x_n$. Now sort these $2m$ points and name them $Z_1,\dots, Z_{2m}$. For each $i\leq 2m$ we store three numbers: i) $Z_i$ itself, ii) $s_i$ which is the sum of the distances of the points to the left of $Z_i$ to the point $Z_i$, and  iii) we store $c_i$ which is the number of points $x_i$ to the left of $Z_i$.

Given a query $q\in [0,1]$, we will find $i$ such that $Z_i \leq q
<Z_{i+1}$. We will return $s_i + c_i\cdot (q-Z_i)$. Clearly for the
the points that are to the left of $Z_i$ this distance is computed
correctly. The only points that are not computed in the sum are part
of the points in the interval $[Z_i,Z_{i+1}]$, but we know that there
are at most $n/m = n\eps^{1/3}$ of them (by our choice of the $X_i$'s) and their
distance squared to $q$ is at most $(Z_{i+1} - Z_{i})^2 \leq \eps^{2/3}$ (by our
choice of $Y_i$'s). Therefore they introduce an average error of at most $\eps$ as we require.

The above algorithm can be adapted to the streaming case similar as before. Moreover the same lower bound construction shows that this result is also tight.
\end{proof}

\begin{theorem}
There exists a streaming algorithm for the point estimation variant of
the problem under the squared distance in two dimensional space, that with constant probability,
achieves an additive error of $\eps$, whose sketch size is
$O(\eps^{-4/7})$ words.
\end{theorem}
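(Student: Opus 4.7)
The plan is to adapt the two-dimensional additive algorithm of Theorem~\ref{thm:add_ub_2d} (linear distances) to the squared-distance objective $F(\theta,b) = \tfrac{1}{n}\sum_i (\max\{0, b-\theta^Tx_i\})^2$. I would keep the same quad-tree on $[0,1]^2$ with a parameter $\eps'$ to be chosen, starting at depth $\log(1/\sqrt{\eps'})$ so that initial leaves have side $\sqrt{\eps'}$, splitting any leaf whose counter reaches $\eps' n$, and capping the maximum depth at $2\log(1/\eps')$. The only change to the per-node bookkeeping is enlarging the linear sketch: at every node $v$ I keep $c_v$, a uniformly random sample $r_v$ (via reservoir sampling), and the linear sketch $(X_v, M_v)$ where $X_v = \sum_{p\in P_v} p \in \R^2$ and $M_v = \sum_{p\in P_v} p p^T \in \R^{2\times 2}$, both updatable trivially in one pass. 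These allow exact computation of $\sum_{p\in P_v} (b-\theta^Tp)^2 = c_v b^2 - 2b\,\theta^T X_v + \theta^T M_v \theta$ whenever the query line does not cross the cell.

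For cells that the line crosses I would use the natural sampling estimator $\hat E_v = c_v \cdot (\max\{0, b-\theta^T r_v\})^2$, which is unbiased by the reservoir guarantee. The reported answer is the sum of exact contributions from non-crossing cells plus $\hat E_v$ from each crossing cell. The space is proportional to the number of quad-tree nodes; exactly as in the linear case, the splitting rule permits only $O(1/\eps')$ nodes ever to be created, since each split is charged to $\eps' n$ arrivals.

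The heart of the argument is the variance bound. For a crossing cell of side $\ell$, every $p\in P_v$ satisfies $|b-\theta^Tp|\leq \sqrt 2\,\ell$, so $Y:=(\max\{0,b-\theta^T r_v\})^2$ is bounded in $[0,2\ell^2]$ and, by Popoviciu's inequality, has variance at most $\ell^4$. Hence $\Var[\hat E_v]\leq c_v^2\ell^4 \leq (n\eps')^2\ell^4$. Since the line crosses only $O(1/\ell)$ cells of side $\ell$, the total variance at level $\ell$ is $O((n\eps')^2\ell^3)$, and summing the geometric series over the $O(\log(1/\eps'))$ levels from $\ell=\sqrt{\eps'}$ down to $(\eps')^2$ is dominated by the coarsest level, giving total variance $O(n^2(\eps')^{7/2})$ and standard deviation $O(n(\eps')^{7/4})$. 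Leaves pinned at the maximum depth (side $\leq(\eps')^2$) are handled separately: each point in such a leaf contributes squared distance at most $O((\eps')^4)$, so the total ignored contribution is $O(n(\eps')^4)$, which is $o(\eps n)$ and therefore negligible. Chebyshev then yields additive error $O(n(\eps')^{7/4})$ with constant probability, and choosing $\eps'=\eps^{4/7}$ makes this $O(\eps n)$; after the $1/n$ normalization in $F$ this is the claimed additive error $\eps$. The quad-tree has $O(1/\eps')=O(\eps^{-4/7})$ nodes.

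The main obstacle will be the per-cell variance bound $\Var[Y]\leq \ell^4$: squaring amplifies the per-point range, and one must exploit both that $Y\in[0,2\ell^2]$ and (via Popoviciu) the factor of $4$ improvement over the naive $(\max Y)^2$ bound, since otherwise the exponents will not line up with the target $\eps^{-4/7}$. A secondary point is verifying that the geometric sum over levels is indeed dominated by the coarsest level $\ell=\sqrt{\eps'}$, which only becomes more pronounced than in the linear case because the exponent of $\ell$ rises from $1$ to $3$. Finally, it should be checked that the total size bound $O(1/\eps')$ of the quad-tree carries over verbatim from the linear case, which it does, since the splitting rule depends only on point counts and not on the objective.
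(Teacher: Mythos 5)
Your proposal is correct and follows essentially the same route as the paper: the same quad-tree with per-node count, reservoir sample, and second-moment sketch ($M_v=\sum_p pp^T$ is exactly the paper's $X_{vv},Y_{vv},Z_{xy}$), the same unbiased estimator for crossing cells, the same per-level variance bound $O((n\eps')^2\ell^3)$ dominated by the coarsest level $\ell=\sqrt{\eps'}$, and the same rescaling $\eps'=\Theta(\eps^{4/7})$. The only cosmetic difference is your appeal to Popoviciu's inequality for the per-cell variance; the paper uses the trivial second-moment bound $\Var\le c_v^2\ell^4$, and since constant factors never affect the exponents, the ``factor of $4$'' refinement you flag as essential is not actually needed.
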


\begin{proof}
The following shows how to get an $O(1/\eps)$-size sketch
with an additive error of $O(\eps^{7/4})$, and at the end
we just replace $\eps'=\Theta(\eps^{4/7})$ to
get the above lemma.

We use a quad-tree over the unit square $[0,1]\times[0,1]$, where each
node is associated with a number of points (each point is associated
with exactly one node of the quad-tree). Thus each node $v$ contains a
counter $c_v$ for the number of associated points, a randomly chosen
associated point (chosen using reservoir sampling), as well as a sketch
$S_v$ to be described later. Initially, the quad-tree is of depth
$\log (1/\sqrt \eps)$ and all counters/sketches are initialized to
zero. When we stream over a point $p_i$, we associate it with the
corresponding leaf of the quad-tree (process defined later), unless
the counter $c_v$ is already $\eps\cdot n$ and the depth is at least
$2\log 1/\eps$ (i.e., the side length is at least $\eps^2$). In that
case the leaf $v$ is expanded by adding its 4 children, which become
new leaves (with counters initialized to 0).

When we associate a point with a node $v$, we perform two operations. We
increment $c_v$ and update the sketch $S_v$ on the associated
points. The sketch $S_v$ for the associated points, say termed $P_v$,
allows us to compute, for any query line $L=(\theta, b)$, the sum
$\sum_{p\in P_v} (b - \theta^Tp)^2$. The sum $\sum_{p\in P_v} \theta^T p$ can be computed in a
streaming fashion using the sketch from~\cite{pass-glm}. In
particular, the sketch actually consists of two counters: $X_v$, the sum of
the $x$ coordinates, and $Y_v$, the sum of the $y$ coordinates.
Finally, the term $\sum_{p\in P_v}(\theta^Tp)^2$ can be computed using three counters, $X_{vv}$, the sum of the $x$ coordinates squared,  $X_{vv}$, the sum of the $y$ coordinates squared, and  $Z_{xy}$, the sum of the $xy$.


\subparagraph*{Query algorithm.} Given a query line $L$, we distinguish
contribution from points in two types of quad-tree nodes: nodes that do not intersect
the line and those that do. For the first kind, we can just use the
sketch $S_v$ to estimate the distance to the line, without incurring
any error. More precisely we have that 
\begin{align*}
    \sum_{p\in P_v} \dist(p,L)^2 &= \sum_{p\in P_v} (b - \langle p,\theta\rangle )^2\\
&= c_v b^2 - 2b\langle (X_v,Y_v) , \theta \rangle + 2 + \langle (X_{vv},Z_{xy},Y_{vv}),(\theta_x^2, 2\theta_x\theta_y,\theta_y^2)\rangle.
\end{align*}
Note that this is
included in the final sum iff the entire node lies in the halfplane $\langle x,\theta\rangle\leq b$.

For the second kind of nodes, we estimate their contribution as
follows. For each non-empty node $v$, with the random sample $r_v$, we add to
the final sum the quantity $\tfrac{1}{n}\cdot c_v\cdot
(\max\{0,b-\theta^Tr_v\})^2$. 


\subparagraph*{Sketch size.} It is clear that the total space usage of the
algorithm is at most order of the size of the quad-tree. The size of
the tree is bounded by $O(1/\eps)$ as follows. First, there at most
$O(1/\eps)$ of the original nodes. Second, each new children created
has the property that its parent got associated with $\eps n$ points,
hence at most $4/\eps$ such children can be ever created.


\subparagraph*{Error analysis.} We now analyze the error of the sketching
algorithm. For the nodes that do not cross the line $L$, the
distances of their points are computed exactly. So we only need to
argue about the crossing nodes.  First of all, note that we can ignore
all leaves with more than $\eps n$ points at them as their diameter is
less than $2\eps^2$. 

Let $\mathcal{C}$ be the set of leaves that cross the query line and
have diameter at least $2\eps^2$ (and hence less than $\eps n$
associated points). It is immediate to check that, in expectation, our
estimator outputs the correct value; in particular for $P_v$ the set
of points associated to a node $v$:
\begin{align*}
\E\left[\sum_{v\in \mathcal{C}} c_v (\max\{0,\dist(r_v,L)\})^2\right] &= \sum_{v\in \mathcal{C}}\sum_{p\in P_v} \frac{1}{c_v} c_v (\max\{0,\dist(p,L)\})^2 \\
&= \sum_{v\in \mathcal{C}}\sum_{p\in P_v} (\max\{0,\dist(p,L)\})^2. 
\end{align*}
Thus we only need to argue that it concentrates closely around its
expectations, with constant probability. Let us compute the
variance. The point in each (non-empty) node is chosen independently at
random. Thus we can sum up the variances of each node. Consider a
node $v$ with side length $\ell\ge \eps^2$. Then we have that
\[
 \Var\left[ c_v (\max\{0,\dist(r_v,L)\})^2\right]  \leq \sum_{i=1}^{c_v} \frac{1}{c_v}\cdot (c_v\cdot\ell^2)^2 \leq c_v^2\ell^4 \leq (n\eps)^2\ell^4.
\]
Now note that any line can intersect only $(1/\ell)$ nodes with side
length $\ell$, and thus the total variance of all nodes with side
length $\ell$ is at most $(n\eps)^2\ell^3$. Hence, the total variance
over all nodes (over all levels) is at most
$O(n^2\eps^2\eps^{3/2})$. Overall, the standard deviation is at most $
O(n\eps^{7/4})$. By Chebyshev's bound, the reported answer has an
additive error of $O(n\eps^{7/4})$ with constant probability.

As stated earlier, replacing $\eps' = \Theta(\eps^{4/7})$ we get that the
algorithm is providing an additive $\eps'$ approximation using space
$\tilde O((\eps')^{-4/7})$, completing the proof of the result.
\end{proof}


\end{document}